\colorlet{darkblue}{blue!50!black}
\setlist{noitemsep,topsep=0pt,parsep=5pt,partopsep=0pt}
\newcommand{\Id}{\mathrm{Id}}
\newcommand{\e}{\mathrm{e}}
\renewcommand{\d}{\mathrm{d}}
\renewcommand{\i}{\mathrm{i}}
\newcommand{\Ent}{\mathrm{Ent}}
\newcommand{\ep}{\mathrm{ep}}
\newcommand{\bra}{\langle} 
\newcommand{\ket}{\rangle}
\newcommand{\wh}{\widehat}
\newcommand{\un}{\mathbb{1}}
\newcommand{\LL}{\mathbb{L}}
\newcommand{\II}{\mathbb{I}}
\newcommand{\KK}{\mathbb{K}}
\newcommand{\EE}{\mathbb{E}}
\newcommand{\RR}{\mathbb{R}}
\newcommand{\NN}{\mathbb{N}}
\newcommand{\QQ}{\mathbb{Q}}
\newcommand{\CC}{\mathbb{C}}
\newcommand{\ZZ}{\mathbb{Z}}
\newcommand{\PP}{\mathbb{P}}
\newcommand{\bun}{{\boldsymbol{1}}}
\newcommand{\bzero}{{\boldsymbol{0}}}
\newcommand{\z}{{\boldsymbol{\zeta}}}
\newcommand{\bb}{\boldsymbol{\beta}}
\newcommand{\ba}{\boldsymbol{\alpha}}
\newcommand{\bx}{{\boldsymbol{x}}}
\newcommand{\bs}{{\boldsymbol{s}}}
\newcommand{\vs}{\boldsymbol{\varsigma}}
\newcommand{\bomega}{\boldsymbol{\omega}}
\newcommand{\bO}{{\boldsymbol{\Omega}}}
\newcommand{\wlim}{\mathop{\mathrm{w}^\ast\!\mathrm{-lim}}\limits}
\newcommand{\fA}{\mathfrak{A}}
\newcommand{\fI}{\mathfrak{i}}
\newcommand{\fJ}{\mathfrak{J}}
\newcommand{\fX}{\mathfrak{X}}
\newcommand{\bX}{{\boldsymbol{\fX}}}
\DeclareMathOperator{\tr}{tr}
\DeclareMathOperator{\Tr}{Tr}
\DeclareMathOperator{\Ran}{Ran}
\DeclareMathOperator{\Ker}{Ker}
\DeclareMathOperator{\spec}{sp}
\theoremstyle{plain}
\newtheorem{theo}{Theorem}[section]
\newtheorem{prop}[theo]{Proposition}
\newtheorem{coro}[theo]{Corollary}
\newtheorem{lemm}[theo]{Lemma}
\theoremstyle{definition}
\newtheorem{defi}[theo]{Definition}
\newtheorem*{hypo}{Assumption}
\theoremstyle{remark}
\numberwithin{equation}{section}
\newcommandx{\jf}[2][1=]{\todo[inline, author={J.-F.}, linecolor=yellow,backgroundcolor=yellow!25,bordercolor=yellow,#1]{#2}}
\newcommandx{\jfnote}[2][1=]{\todo[author={J.-F.}, linecolor=yellow,backgroundcolor=yellow!25,bordercolor=yellow,#1]{#2}}
\newcommandx{\alain}[2][1=]{\todo[inline, author={Alain}, linecolor=cyan,backgroundcolor=cyan!15,bordercolor=cyan,#1]{#2}}
\newcommandx{\alainnote}[2][1=]{\todo[author={Alain}, linecolor=cyan,backgroundcolor=cyan!15,bordercolor=cyan,#1]{#2}}
\newcommandx{\addalain}[2][1=]{\todo[inline, linecolor=cyan,backgroundcolor=cyan!15,bordercolor=cyan,#1]{#2}}
\newcommandx{\ca}[2][1=]{\todo[inline,author={ca},
linecolor=green,backgroundcolor=green!15,bordercolor=green,#1]{#2}}
\newcommandx{\canot}[2][1=]{\todo[author={ca},
	linecolor=green,backgroundcolor=green!15,bordercolor=green,#1]{#2}}
\newcommand{\be}{\begin{equation}}
\newcommand{\ee}{\end{equation}}
\newcommand{\bea}{\begin{eqnarray}}
\newcommand{\eea}{\end{eqnarray}}
\newcommand{\ds}{\displaystyle}
\newcounter{resultcounter}[section]
\newcommand{\cA}{\mathcal{A}}
\newcommand{\cB}{\mathcal{B}}
\newcommand{\cC}{\mathcal{C}}
\newcommand{\cE}{\mathcal{E}}
\newcommand{\cH}{\mathcal{H}}
\newcommand{\cK}{\mathcal{K}}
\newcommand{\cL}{\mathcal{L}}
\newcommand{\cM}{\mathcal{M}}
\newcommand{\cO}{\mathcal{O}}
\newcommand{\cP}{\mathcal{P}}
\newcommand{\cQ}{\mathcal{Q}}
\newcommand{\cR}{\mathcal{R}}
\newcommand{\cS}{\mathcal{S}}
\newcommand{\cU}{\mathcal{U}}
\newcommand{\cV}{\mathcal{V}}
\newcommand{\cX}{\mathcal{X}}
\title{\textbf{Markovian Repeated Interaction Quantum Systems}}
\author{Jean-Fran\c{c}ois Bougron$^{1,2}$, Alain Joye$^{2}$, Claude-Alain Pillet$^3$
\\ \\ 
$^1$ CY Cergy Paris Universit\'e, CNRS, AGM,  95000 Cergy-Pontoise, France
\\ \\
$^2$ Univ. Grenoble Alpes, CNRS, Institut Fourier, 38000 Grenoble, France
\\ \\
$^3$ Aix Marseille Univ., Université de Toulon, CNRS, CPT, Marseille, France
}
\begin{document}
\def\today{}
\maketitle

\begin{small}
\noindent{\bf Abstract.} We study a class of dynamical semigroups $(\LL^n)_{n\in\NN}$ that emerge, by a Feynman--Kac type formalism, from a random quantum dynamical system 
$(\cL_{\omega_n}\circ\cdots\circ\cL_{\omega_1}(\rho_{\omega_0}))_{n\in\NN}$ 
driven by a Markov chain $(\omega_n)_{n\in\NN}$. We show that the almost sure large time behavior of the system can be extracted from the large $n$ asymptotics of the semigroup, which is in turn directly related to the spectral properties of the generator $\LL$. As a physical application, we consider the case where the $\cL_\omega$'s are the reduced dynamical maps describing the repeated interactions of a system $\cS$ with thermal probes $\cE_\omega$. We study the full statistics of the entropy in this system and derive a fluctuation theorem for the heat exchanges and the associated linear response formulas.
\end{small}

\tableofcontents

\section{Introduction}
\label{sec:intro}

\subsection*{Repeated Interaction Systems}

From a physical perspective, a Repeated Interaction System, RIS for short, is a quantum system $\cS$, together with a sequence\footnote{We denote by $\NN^\ast$ the positive integers and by $\NN=\NN^\ast\cup\{0\}$ the non-negative ones.} $(\cE_n)_{n\in\NN^\ast}$ of quantum probes. $\cS$, the system of interest, is described by a Hilbert space $\cH_{\cS}$. Each probe $\cE_n$ is a quantum system characterized by a Hilbert space $\cH_{\cE_n}$ and a (normal) state or density matrix $\rho_{\cE_n}$. All Hilbert spaces in this work are supposed to be finite dimensional.

The dynamics of the compound system is defined as follows: the system $\cS$, initially in a state $\rho_0$, interacts for a certain time with the first probe $\cE_1$ initially in its state $\rho_{\cE_1}$. Tracing out the degrees of freedom of the first probe $\cE_1$, one gets a state $\rho_1$ for $\cS$ which is then put in contact with the next probe $\cE_2$ in its initial state $\rho_{\cE_2}$. Repeating the procedure defines a sequence $(\rho_n)_{n\in\NN}$ where $\rho_n$ is the state of $\cS$ after $n$ interactions.

A well known example of Repeated Interaction System in physics is the one-atom maser experiment: $\cS$ is then the quantized electromagnetic field of a cavity, and the probes are atoms coming from an oven and passing through this cavity, see, e.g., \cite{FJM, MWM, RBH, WBKM}.

Typically, the interaction process is described as follows:  the free Hamiltonian of the small system, resp.\;of the probe number $n$, is a self-adjoint operator denoted by $H_\cS$, resp.\;$H_{\cE_n}$, and the interaction between $\cS$ and $\cE_n$ is induced by a coupling $V_n$ which is a self-adjoint operator on $\cH_\cS\otimes \cH_{\cE_n}$. The total Hamiltonian governing the interaction of $\cS$ and $\cE_n$ is thus 
\be
H_n\coloneqq H_\cS\otimes\un+\un\otimes H_{\cE_n}+V_n
\label{eq:prodstuff}
\ee
acting on $\cH_\cS\otimes\cH_{\cE_n}$. In the sequel, we will drop the symbols $\otimes \un$ and $\un \otimes$ when the context is clear. The reduced propagator $\cL_n$, obtained by tracing out the probe's degrees of freedom,
$$
\cL_n\rho\coloneqq\tr_{\cH_{\cE_n}}(\e^{-\i\tau_n H_n}(\rho\otimes\rho_{\cE_n})\e^{\i\tau_n H_n}),
$$
describes the evolution of the state of $\cS$ due to its interaction with $\cE_n$ for a duration $\tau_n>0$.

By construction, $\cL_n$ is a Completely Positive Trace Preserving (CPTP for short) map on the set $\cB^1(\cH_\cS)$ of trace class operators on $\cH_\cS$. Consequently, the state $\rho_n$ of the system $\cS$ after interacting with the first $n$ probes is given by 
$$
\rho_n\coloneqq\cL_n\dots\cL_1\rho_0.
$$ 
{RIS} have been introduced and developed in the non-exhaustive list of papers~\cite{attal2006repeated,BJMas,BJMrd,OAM,bruneau2010repeated,Electron,nechita2012random,Bru14,BJM,HJPR,HJPR2,BoBr,movassagh2019ergodic,movassagh2020theory}.

To take into account the uncontrollable odds that may affect the probes and their interaction with the system in real physical applications, it makes sense to generalize the above setup and consider random dynamical systems
\be
\rho_n(\bomega)\coloneqq\cL_{\omega_n}\cdots\cL_{\omega_1}\rho_{\omega_0},
\label{eq:rds}
\ee
where  $\Omega\ni\omega\mapsto\rho_\omega\in\cB^1(\cH_\cS)$,
$\Omega\ni\omega\mapsto\cL_\omega$ takes its values in CPTP-maps on $\cB^1(\cH_\cS)$ and $\bomega=(\omega_n)_{n\in\NN}\in\Omega^\NN$ denotes the sample path of a stochastic process.

One way to think about this setup is to consider that, according to some probabilistic rules, the probes are drawn from a pool which is partitioned into a family $(\cR_\omega)_{\omega\in\Omega}$, each $\cR_\omega$ containing identical probes. Assuming that all the probes are in thermal equilibrium, the physical situation is then close to the one of a small system $\cS$ interacting with several thermal reservoirs $\cR_\omega$. We shall henceforth call the $\cR_\omega$'s {\sl reservoirs}. For a given $\omega$, $\cL_\omega$ represents the effect on $\cS$ of an interaction with the reservoir $\cR_\omega$. 

As an example, consider the previously mentioned one-atom maser. The interaction times $\tau_n$ depend on the flight velocities of the atoms which fluctuate. In this case, a reasonable model would be
$$
\cL_{\tau}\rho\coloneqq\tr_{\cH_{\cE}}(\e^{-\i\tau H}\rho\otimes\rho_{\cE}\e^{\i\tau H}),
$$
the probe Hilbert space $\cH_\cE$, the total Hamiltonian $H$ and the probe state $\rho_{\cE}$ being independent of $n$, and $\boldsymbol{\tau}=(\tau_n)_{n\in\NN^\ast}$ being an i.i.d.\;sequence of random variables. Now assume that the atoms are in thermal equilibrium with the oven, {\sl i.e.,} $\rho_\cE^\beta=\e^{-\beta H_\cE}/\tr(\e^{-\beta H_\cE})$ where $\beta$ denotes the oven's inverse temperature. To take fluctuations of the latter into account, we should consider a process where $\omega_n=(\tau_n,\beta_n)$ and
$$
\cL_{(\tau,\beta)}\rho\coloneqq\tr_{\cH_{\cE}}(\e^{-\i\tau H}\ \rho \otimes \rho_{\cE}^\beta\ \e^{\i\tau H}).
$$
In this case, it may be more realistic to allow for some correlations between successive probes. 

The purpose of this article is to study the large $n$ behavior of the random dynamical system~\eqref{eq:rds} when the driving stochastic process is a Markov chain with a transition matrix $P$ and initial probability vector $\pi$ on a finite state space $\Omega$. We will call such a system a {\sl Markovian Repeated Interaction System}, MRIS for short. We will first consider~\eqref{eq:rds} in the abstract setting where $(\cL_\omega)_{\omega\in\Omega}$ is a family of arbitrary CPTP-maps on $\cB^1(\cH_\cS)$ for a finite dimensional Hilbert space $\cH_\cS$. From a mathematical perspective, our main result is a pointwise ergodic theorem for MRIS. We will then apply this abstract result to the more concrete and physically motivated case where each $\cL_\omega$ describes the interaction of a system $\cS$ with a probe $\cE_\omega$. We shall restrict our attention to MRIS with thermal reservoirs, {\sl i.e.,} to the cases where each probe $\cE_\omega$ is in thermal equilibrium at a given inverse temperature $\beta_\omega$. This specific setting allows us to develop the nonequilibrium thermodynamics of MRIS, using tools from information theory and dynamical systems. We investigate energy transfers between the system $\cS$ and the reservoirs and the induced entropy production. Under microscopic time-reversibility, and invoking results of~\cite{CJPS}, we will derive a strong and detailed form of nonequilibrium fluctuation relations, in the spirit of~\cite{DHP20}. Following~\cite{Gal96}, we will show how these relations reduce to linear response theory near (appropriately defined) thermal equilibrium. To achieve this, we need to consider repeated two-time measurements of the probes, and hence to deal with quantum trajectories naturally associated to the random dynamical system~\eqref{eq:rds}. In this respect, our results are complementary to~\cite{KM04,Kumm06,AGPS15,CP15,CP16}.

Our framework also allows us to consider situations where the driving Markov chain is not homogeneous in time, meaning that the transition matrix $P$ depends on the time step. Assuming the variations between successive transition matrices is small, we derive the large time asymptotics of the corresponding random dynamical system, in this instance of the adiabatic regime.

Several limiting cases of MRIS have been addressed in previous works: these are the deterministic case with a single reservoir, the periodic case where $P$ is a cyclic permutation matrix, and the i.i.d.\;case where $\bomega$ is a sequence of independent and identically distributed random variables. The deterministic case --- the simplest example of RIS --- was introduced and developed in~\cite{attal2006repeated,BJMas}. In particular, the latter reference shows the strict positivity of entropy production even in this simple setup. The periodic case, mentioned in \cite{BJM}, provides another toy model of RIS allowing for a study of its nonequilibrium thermodynamics. The i.i.d.\;case has been analyzed in~\cite{BJMrd,BJMmatrices}, see also~\cite{nechita2012random}. The nonequilibrium thermodynamics of both these models has been recently investigated in~\cite{BoBr}. The papers~\cite{movassagh2019ergodic,movassagh2020theory}  address more general abstract situations where the underlying stochastic process is an ergodic dynamical system, and the large time asymptotics of the states is investigated by means of methods of dynamical systems.

\medskip\noindent{\bf Remark.} We consider Markovian randomness in the RIS framework to take into account classical correlations between fluctuating probes. The case of quantum entanglement between the probes has been studied, for instance, in~\cite{fqw,raquepas2020fermionic,ajr20}. 

\subsection*{Feynman--Kac Formalism}
In quantum mechanics one is generally interested in quantum expectation values of observables which can depend on time and, in particular, in the large time behavior of those expectations. In our model we will focus on observables  which depend on which reservoir $\cS$ is currently interacting with. Consequently, we will study the large $n$ properties of
\begin{equation}\label{eq:defqumarexp}
\tr(\rho_{n}(\bomega)\ X(\omega_{n+1})),
\end{equation}
where $\Omega\ni\omega\mapsto X(\omega)\in\cB(\cH_\cS)$. The idea to deal with this model is to work in the Hilbert space
$$
\cK=\ell^2(\Omega;\cH_\cS)
$$
of square-summable, $\cH_\cS$-valued functions on $\Omega$. Considering the $C^\ast$-algebra $\fA\subset\cB(\cK)$ of functions $X:\Omega\to\cB(\cH_\cS)$ acting as a multiplication operator on $\cK$, a straightforward computation shows that
\begin{equation}\label{eq:fkformalism}
\EE\left[\tr\left(\rho_{n}(\bomega)X(\omega_{n+1})\right)\right]=\tr_\cK\left(X\LL^nR\right),
\end{equation}
where $\EE$ denotes the expectation w.r.t.\;the Markov chain, and $\LL$ is a CPTP map on $\cB^1(\cK)$, constructed with the maps $\cL_\omega$ and the transition matrix $P$, see~\eqref{eq:LLdef}, and\footnote{We shall denote by $1_A$ the indicator function of a set $A$.}
$$
R(\omega)\coloneqq\EE[\rho_{\omega_0}1_{\{\omega_1=\omega\}}].
$$
Consequently, the dynamics of $\cS$ can be modeled with the discrete time semigroup $(\LL^n)_{n\in\NN}$. This structure is called a {\sl Feynman--Kac formalism}. Following~\cite{Pillet1, Pillet, Pillet2}, it turns the study of the large $n$ behavior of~\eqref{eq:defqumarexp} into a spectral problem for $\LL$. See also~\cite{Schenkeretal} for a discussion of a somewhat related problem concerning random time dependent Lindbladians. We note that, in the periodic and i.i.d.\;cases mentioned above, the MRIS dynamics can be modeled by a semigroup acting on $\cB^1(\cH_\cS)$, that is, without the need for a Feynman--Kac formalism.

In the i.i.d.\;case, it has been proved in~\cite{BJMrd} that, if the random CPTP map $\cL_{\omega_0}$ is primitive with nonzero probability, then~\eqref{eq:defqumarexp} converges almost surely in Cesàro mean to the deterministic value $\tr(\rho_+\EE[X(\omega_0)])$, $\rho_+$ being the unique invariant state of the CPTP map $\EE[\cL_{\omega_0}]$. This result was extended to a more general class of i.i.d.\;random dynamical systems in~\cite{BoBr}. In our Markovian framework,
we shall see in Theorem~\ref{thm:ascv} that the mere irreducibility of $\LL$ implies that~\eqref{eq:defqumarexp} converges almost surely in Cesàro mean to $\tr(R_+X)$, where $R_+$ is the unique invariant state of $\LL$. This is in keeping with the recent results in~\cite{movassagh2019ergodic,movassagh2020theory} about ergodic positive linear map valued processes. However, our spectral point of view allowed by the Feynman--Kac formalism is distinct from the approach there, which makes essential use of positivity properties. Moreover, the spectral approach turns out to be instrumental in the subsequent analysis of the thermodynamic properties of MRIS.

A natural extension of the above results concern the case of an inhomogeneous Markov chain. A first step in this direction, dealt with in Section~\ref{sec:adia}, is the case of a Markov chain whose transition matrix changes infinitely slowly with time, the so-called adiabatic inhomogeneous case. Making use of the analogy with the adiabatic RIS studied in~\cite{HJPR,HJPR2}, we get Theorem~\ref{thm:adia}, that describes the large $n$ asymptotics of the expectation~\eqref{eq:fkformalism} in this framework.

At last, the MRIS framework is well suited to study nonequilibrium thermodynamic properties of RIS when each reservoir $\cR_\omega$ is in thermal equilibrium. In Section~\ref{sec:thermo}, we derive the entropy balance relation of MRIS process and investigate its consequences. We study the large time asymptotics of the full
statistics of entropy and energy transfers and give sufficient conditions ensuring the validity of entropic fluctuation relations. For processes running near appropriately defined equilibrium, we derive 
the main ingredients of linear response theory --- Green--Kubo formula, Onsager reciprocity relations and fluctuation-dissipation relations.

\subsection*{Organization of the Paper}

The core of this article is organized as follows. After setting up some conventions and notations, Section~\ref{sec:Feynman-Kac} introduces the Feynman--Kac formalism associated to the abstract framework of Markovian
repeated interaction quantum systems. The section ends with a few useful results on the spectral properties of the 
Feynman--Kac generator $\LL$. In Section~\ref{sec:CPMC}, we formulate and prove our main results on abstract MRIS: a pointwise ergodic theorem and an adiabatic theorem. The nonequilibrium thermodynamics of concrete MRIS is the subject of Section~\ref{sec:thermo}, while Section~\ref{sec:proofs} is devoted to the proofs.

\medskip\noindent{\bf Acknowledgements.} We thank Tristan Benoist for a insightful discussion. The research of J.-F.B. was partially funded by  the Agence Nationale de la Recherche under the programme “Investissements d’avenir” (ANR-15-IDEX-02), Cross Disciplinary Program ``Quantum Engineering Grenoble''.  A.J. and C.-A.P. acknowledge support from Agence Nationale de la Recherche grant NONSTOPS (ANR-17-CE40-0006-01).

\section{The Feynman--Kac Formalism of MRIS}
\label{sec:Feynman-Kac}

\subsection{Notations and Conventions}
\label{sec:NotaConv}

We start by setting up some notations and conventions that will be used in this work. We also recall some basic definitions concerning positive maps on $C^\ast$-algebras and their state spaces.

Let $\cH$ be a complex Hilbert space. The
inner product of two vectors $\varphi,\psi\in\cH$ will be denoted by
$\bra\varphi,\psi\ket$, and supposed to be anti-linear in the first argument and linear in the second one. 

$\cB(\cH)$ is the $C^\ast$-algebra of all linear operators on $\cH$, $\un$ denotes its unit, and $\spec(X)$ is the spectrum of $X\in\cB(\cH)$. Self-adjoint elements of $\cB(\cH)$ with spectrum in $\RR_+$ are said to be non-negative. $\cB(\cH)$ is ordered by the proper convex cone $\cB_+(\cH)$ of these non-negative elements, {\sl i.e.,} $X\ge Y$ iff $X-Y\in\cB_+(\cH)$. If $X\ge\delta\un$ for some $\delta>0$ we say that
$X$ is positive and write $X>0$.

The $\ast$-ideal of all trace class operators on $\cB(\cH)$ is denoted $\cB^1(\cH)$. Equipped with the trace norm $\|\alpha\|_1\coloneqq\tr(\sqrt{\alpha^\ast\alpha})$, $\cB^1(\cH)$ is a Banach space, the predual of $\cB(\cH)$. We write the corresponding duality as
\be
\bra\alpha,X\ket\coloneqq\tr(\alpha^\ast X).
\label{eq:Covid18}
\ee
Since $\cH$ is finite dimensional, $\cB^1(\cH)$ and $\cB(\cH)$ are identical topological vector spaces, but are distinct normed vector spaces. 
$\cB^1(\cH)$ is ordered by the proper convex cone $\cB^1_+(\cH)$ of its non-negative elements. A {\sl state} on $\cB(\cH)$ is an element $\rho\in\cB^1_+(\cH)$ normalized by $\|\rho\|_1=\bra\rho,\un\ket=1$. We denote by $\cB^1_{+1}(\cH)$ the closed convex set of these states. Its elements are also known as {\sl density matrices} or
{\sl statistical operators.} A state $\rho$ is said to be faithful whenever $\rho>0$, and pure whenever it is of rank one.

More generally, if $\cA\subset\cB(\cH)$ is a unital $C^\ast$-subalgebra\footnote{Any finite dimensional unital $C^\ast$-algebra can be realized in this way.}, we denote by $\cA_+\coloneqq\cA\cap\cB_+(\cH)$ the proper cone of non-negative elements of $\cA$, by $\cA_\ast$ its predual and by $\cA_{\ast+1}$ the set of states of $\cA$. The duality~\eqref{eq:Covid18} allows us to identify $\cA_\ast$ with the vector space $\cA$ equipped with the trace norm, and $\cA_{\ast+1}$ with the subset of density matrices in $\cA$.

A linear map $\Phi:\cA\to\cA$ is {\sl positivity preserving,} or simply {\sl positive,} written $\Phi\ge0$, whenever $\Phi(\cA_+)\subset\cA_+$. $\Phi$ is {\sl positivity improving}, written $\Phi>0$, if $\Phi(X)>0$ for $X\ge0$, $X\neq0$. A positive map $\Phi$ is said to be {\sl primitive} if $\Phi^n>0$ for some (and hence all sufficiently large) $n\in\NN$, and {\sl irreducible} if $\e^{t\Phi}>0$
for some (and hence all) $t>0$. Note that positivity improving implies primitivity, which implies irreducibility.

A positive map $\Phi$ on $\cA$ is said to be {\sl completely positive} when $\Phi\otimes\Id$ is a positive 
map on $\cA\otimes\cB(\CC^n)$ for all $n\in\NN$. In view of the identification of $\cA\otimes\cB(\CC^n)$ with the set $\cM_n(\cA)$ of $n\times n$-matrices with entries in $\cA$, the complete positivity of $\Phi$ means that for any finite set $J$ and any square 
matrix $A=[A_{ij}]_{i,j\in J}$ associated to a non-negative map on $\cH\otimes\CC^J$, the map associated to the matrix
$[\Phi(A_{ij})]_{i,j\in J}$ is non-negative. One easily shows that the latter condition is equivalent to
\be
\sum_{i,j\in J}B_i^\ast\Phi(A_i^\ast A_j)B_j\ge0
\label{eq:CPcrit}
\ee
for any finite families $(A_j)_{j\in J}$ and $(B_j)_{j\in J}$ in $\cA$~\cite[Corollary~3.4]{Ta1}. Any CP map $\Phi$ on
$\cA$ has a (non-unique) {\sl Kraus representation}\footnote{The Kraus representation of CP maps on $\cB(\CC^n)$ is discussed in any
textbook on quantum information, see, e.g., \cite[Theorem~2.2]{Pe08}. It is a simple corollary of Stinespring's dilation theorem~\cite[Theorem~3.6]{Ta1}. The slightly more general form used here is a direct consequence of Arveson's extension theorem~\cite[Theorem~1.2.3]{Arv69} which asserts that $\Phi$ extends to a CP map on $\cB(\cH)$.}
$$
\Phi(A)=\sum_{j\in J}V_j^\ast AV_j,
$$
where the finite family $\cV=(V_j)_{j\in J}\subset\cB(\cH)$ is called a Kraus family of $\Phi$. $\cV$ is said to be
irreducible whenever any non-zero element of $\cH$ is cyclic for the algebra generated by $\cV\cup\{\un\}$. 
$\cV$ is primitive when there exists $n>0$ such that any non-zero element of $\cH$ is cyclic for the linear span of
$\cV^n\coloneqq\{V_{j_1}\cdots V_{j_n}\mid j_1,\ldots,j_n\in J\}$. The CP map $\Phi$ is irreducible/primitive iff one (and hence any) of its Kraus family is irreducible/primitive.

In view of the above mentioned identification
of the sets $\cA$ and $\cA_\ast$, the various notions of positivity just introduced on endomorphisms of $\cA$ also apply to linear maps
$\cL:\cA_\ast\to\cA_\ast$. A linear map on $\cA$ is {\sl unital} when $\Phi(\un)=\un$ and CPU when completely positive and unital. A linear map on $\cA_\ast$ is {\sl trace preserving} when $\tr\circ\Phi=\tr$, and CPTP when
completely positive and trace preserving.

Decomposing $\alpha\in\cA_\ast$ into the positive and negative parts of its real and imaginary parts, one
easily shows that any affine map $\cL:\cA_{\ast+1}\to\cA_{\ast+1}$ has a unique linear extension to $\cA_\ast$. By construction, this extension is positive and trace preserving. Reciprocally, any linear, positive
and trace preserving map on $\cA_\ast$ restricts to an affine map on $\cA_{\ast+1}$. Hence, we shall
identify these two kinds of maps in the following. Denoting by $\cL^\ast:\cA\to\cA$ the adjoint of 
$\cL:\cA_\ast\to\cA_\ast$ w.r.t.\;the duality~\eqref{eq:Covid18}, $\cL^\ast$ is positive (resp.\;CPU) iff $\cL$ is positive (resp.\;CPTP). 

When $\cA$ is associated to a quantum mechanical system, self-adjoint elements $X$ of $\cA$ are observables
of this system and density matrices $\rho\in\cA_{\ast+1}$ describe its physical states. The physical quantity
described by the observable $X$ can only take numerical values in $\spec(X)$.
Denoting $E_X(I)$ the spectral projection of $X$ to the part of its spectrum contained in $I$, when the system's state is $\rho$, the probability to observe the
value of $X$ in $I$  is given by $\bra\rho,E_X(I)\ket$. In particular, the quantum mechanical expectation value of $X$ in the state $\rho$ is given by $\bra\rho,X\ket$.

We shall consider finite matrices $P\in\CC^{J\times J}$ as linear maps on the vector space $\CC^J$ interpreted as the commutative unital $C^\ast$-algebra of diagonal $J\times J$-matrices. The unit of this algebra is $\bun=(1,\ldots,1)\in\CC^J$. In this context, $P$ is positive (resp.\;positivity improving) whenever $P_{ij}\ge0$ (resp.\;$P_{ij}>0$) for all $i,j\in J$. We note that these are also the conditions of the classical Perron--Frobenius theory. We observe also that, due to the commutativity of the underlying $C^\ast$-algebra (see~\cite[Corollary~3.5]{Ta1}), $P$ is completely positive iff it is positive and CPU iff it is a right stochastic matrix.

Given a Banach space $\cX$, a finite set $\Omega$ and $p\in[1,\infty]$, we denote by $\ell^p(\Omega;\cX)$ the vector space $\cX^\Omega$ equipped with the $p$-norm
$$
\|x\|_p\coloneqq\begin{cases}
\ds\left(\sum_{\omega\in\Omega}\|x(\omega)\|^p\right)^{1/p}&\text{for }p<\infty;\\
\ds\max_{\omega\in\Omega}\|x(\omega)\|&\text{for }p=\infty.
\end{cases}
$$
We note that if $\cX^\ast$ is dual to $\cX$, with duality $\cX^\ast\times\cX\ni(\varphi,x)\mapsto\bra\varphi,x\ket$ and $\frac1p+\frac1q=1$, then
$\ell^q(\Omega;\cX^\ast)$ is dual to $\ell^p(\Omega;\cX)$ with duality
$\sum_{\omega\in\Omega}\bra\varphi(\omega),x(\omega)\ket$.

The product topology induced on $\bO=\Omega^\NN$ by the discrete topology of the finite set $\Omega$ is metrizable, the compatible metric 
$$
d((\omega_k)_{k\in\NN},(\nu_k)_{k\in\NN})\coloneqq2^{-\inf\{k\in\NN\mid\omega_k\neq\nu_k\}}
$$
makes $\bO$ a compact metric space. We denote by $\cO_n$ the finite algebra generated by the cylinder sets
$$
[\nu_0\cdots\nu_n]\coloneqq\{\bomega\in\bO\mid\omega_k=\nu_k\text{ for }0\le k\le n\},\qquad (\nu_0,\ldots,\nu_n\in\Omega),
$$
and by
$$
\cO\coloneqq\bigvee_{n\in\NN}\cO_n
$$
the generated $\sigma$-algebra.\footnote{By convention the cylinder with empty base $[\ ]$ is $\bO$.}
We recall that $\cO$ coincides with the Borel $\sigma$-algebra of $\bO$.\footnote{We will use the same notation, with the obvious modifications, whenever $\bO=\Omega^{\ZZ}$ or $\bO=\Omega^{\NN^\ast}$.}

Denote by $C(\bO)$ the Banach space of continuous real functions on $\bO$, and by $\cP(\bO)$ the set of probability measures on $(\bO,\cO)$, equipped with the weak-$\ast$ topology. We write the associated duality
as
$$
\langle f,\mu\rangle=\int_\bO f(\bomega)\mu(\d\bomega).
$$
We denote by $\mu_{|\cO_n}$ the restriction of $\mu\in\cP(\bO)$ to $\cO_n$.
Let $\phi$ be the left shift on $\bO$, and $\cP_\phi(\bO)$ the set of $\phi$-invariant elements
of $\cP(\bO)$.

For two probability measures $\nu$, $\mu$ on the same measurable space, we write $\nu\ll\mu$ whenever $\nu$ is absolutely continuous w.r.t.\;$\mu$, denoting by $\frac{\d\nu}{\d\mu}$ the corresponding Radon--Nikodym derivative. In this case, the relative entropy of $\nu$ w.r.t.\;$\mu$ is
$$
\Ent(\nu|\mu)\coloneqq\int\log\left(\frac{\d\nu}{\d\mu}\right)\d\nu,
$$
and for $\alpha\in\RR$, their relative Rényi $\alpha$-entropy is
$$
\Ent_\alpha(\nu|\mu)\coloneqq\log\int\left(\frac{\d\nu}{\d\mu}\right)^\alpha\d\mu.
$$
We recall that $\Ent(\nu|\mu)\ge0$, with equality iff $\nu=\mu$.

\subsection{Markovian Repeated Interaction Quantum Systems}

Let $\cS$ be a quantum mechanical system described by a finite dimensional Hilbert space $\cH_\cS$, denote by
$\cA=\cB(\cH_\cS)$ the associated $C^\ast$-algebra, set $\cA_\ast=\cB^1(\cH_\cS)$, let $(\rho_\omega)_{\omega\in\Omega}\subset\cA_\ast$ be a finite family of states and $(\cL_\omega)_{\omega\in\Omega}$ a finite family of CPTP maps on $\cA_\ast$. 

For a probability vector $\pi\in\RR^\Omega$, and a right stochastic matrix $P\in\RR^{\Omega\times\Omega}$, we consider the probability space $(\bO,\cO,\PP)$, where $\PP\in\cP(\bO)$ is the homogeneous Markov measure uniquely determined by
\begin{equation}\label{eq:markovproperty}
\PP([\omega_0\cdots\omega_n])\coloneqq\pi_{\omega_0}P_{\omega_0\omega_1}\dots P_{\omega_{n-1}\omega_n},
\end{equation}
for any cylinder $[\omega_0\cdots\omega_n]\in\cO_n$. In particular, the probability vector $\pi^{(n)}$ describing the distribution of the random variable $\omega_n$ is given by
$$
\pi^{(n)}_\nu=\PP(\omega_n=\nu)=\sum_{\omega_0,\ldots,\omega_{n-1}\in\Omega}\PP([\omega_0\cdots \omega_{n-1}\nu])=(\pi P^n)_\nu.
$$
We denote by $\EE[\,\cdot\,]$ the expectation functional associated with $\PP$. Given $A\in\cO$, $1_A$ is the characteristic
function of $A$ and $\EE[\,\cdot\,|A]=\EE[\,\cdot\,1_A]/\EE[1_A]$ is the conditional expectation given $A$. Finally, we recall that whenever $\pi$ is a left eigenvector of $P$ to the eigenvalue $1$, then $\PP\in\cP_\phi(\bO)$, {\sl i.e.,} the Markov chain is stationary. 

We associate to $(\pi,P,(\rho_\omega)_{\omega\in\Omega},(\cL_\omega)_{\omega\in\Omega})$ a random dynamics on the system $\cS$  as follows: Each sample path $\bomega\in\bO$ of our Markov chain induces a sequence of states $(\rho_n(\bomega))_{n\in\NN}$, where
$$
\rho_n(\bomega)\coloneqq\cL_{\omega_n}\cdots\cL_{\omega_1}\rho_{\omega_0},
$$
is the state of $\cS$ started in $\rho_{\omega_0}$ after interacting with the sequence of reservoirs $\cR_{\omega_1},\ldots,\cR_{\omega_n}$.

By Stinespring's dilation Theorem~\cite[Theorem~3.6]{Ta1}, any such random dynamical system describes a MRIS, {\sl i.e.,} there exists a family of probes $(\cE_\omega)_{\omega\in\Omega}$ such that
$$
\cL_\omega(\rho)=\tr_{\cH_{\cE_\omega}}(U_\omega(\rho\otimes\rho_{\cE_\omega})U_\omega^\ast)
$$
for some probe states $\rho_{\cE_\omega}\in\cB^1_{+1}(\cH_{\cE_\omega})$, unitary propagators $U_\omega$ on $\cH_\cS\otimes\cH_{\cE_\omega}$ and all $\omega\in\Omega$.

\subsection{Extended Observables and Extended States}

The natural semigroup structure of autonomous dynamical systems gets lost in this
random setting. However, using the Markov property, it is possible to restore this structure
at the level of ``classical'' expectations. To this end, let us introduce the extended Hilbert space
of $\cH_\cS$-valued functions on the noise space $\Omega$
$$
\cK=\ell^2(\Omega;\cH_\cS).
$$
Identifying $\cK$ with $\cH_\cS\otimes\CC^\Omega$ leads to
the identification of $\cB(\cK)$ with $\cA^{\Omega\times\Omega}$, the $C^\ast$-algebra of $\Omega\times\Omega$ matrices with entries in $\cA$.
Further, identifying $X\in\ell^\infty(\Omega;\cA)$ with a diagonal matrix in $\cA^{\Omega\times\Omega}$ gives a natural isometric injection
\be
\fA=\ell^\infty(\Omega;\cA)\hookrightarrow\cB(\cK),
\label{eq:extobs}
\ee
which makes $\fA$ a $C^\ast$-subalgebra of $\cB(\cK)$. We shall say that elements of $\fA$ are {\sl extended observables} of the MRIS associated with $(\pi,P,(\rho_\omega)_{\omega\in\Omega},(\cL_\omega)_{\omega\in\Omega})$. The map
$$
X\mapsto\Tr X\coloneqq\sum_{\omega\in\Omega}\tr X(\omega),
$$
is the restriction to $\fA$ of the trace of $\cB(\cK)$ and defines a trace on $\fA$, {\sl i.e.,} a positive linear functional such that $\Tr(XY)=\Tr(YX)$ for all $X,Y\in\fA$.

Dual to the injection~\eqref{eq:extobs} is the surjection
\be
\cB^1(\cK)\rightarrow\fA_\ast=\ell^1(\Omega;\cA_\ast),
\label{eq:extstate}
\ee
which sends $R=[R_{\omega\nu}]_{\omega,\nu\in\Omega}\in\cA_\ast^{\Omega\times\Omega}$ to the diagonal map $\omega\mapsto R_{\omega\omega}$. Note that elements of $\cB^1_{+1}(\cK)$
are sent to positive valued elements of $\ell^1(\Omega;\cA_\ast)$ of $1$-norm $1$. These are precisely the states of $\fA$ w.r.t.\;the duality
\be
\bra R,X\ket\coloneqq\Tr(R^\ast X)=\sum_{\omega\in\Omega}\bra R(\omega),X(\omega)\ket.
\label{eq:fAduality}
\ee
Positive and normalized elements of $\fA_\ast$, images of states $R\in\cB^1_{+1}(\cK)$ by~\eqref{eq:extstate}, will be called {\sl extended states} of the MRIS, the set of these extended states is denoted $\fA_{\ast+1}$. We shall associate to $(\pi,P,(\rho_\omega)_{\omega\in\Omega},(\cL_\omega)_{\omega\in\Omega})$ the sequence of extended state $(R_n)_{n\in\NN}$ defined by
$$
R_n(\omega)\coloneqq\pi^{(n+1)}_\omega\EE[\rho_n(\bomega)\,|\,\omega_{n+1}=\omega]=\EE[\rho_n(\bomega)1_{\omega_{n+1}=\omega}],
$$
and in particular
\be
R_0(\omega)=\sum_{\nu\in\Omega}\pi_\nu P_{\nu\omega}\rho_\nu.
\label{eq:R0form}
\ee
Observe that these states have ``marginals''
$$
\pi^{(n+1)}_\omega=\tr R_n(\omega),\qquad \bar\rho_n=\EE[\rho_n(\bomega)]=\sum_{\omega\in\Omega}R_n(\omega).
$$

\subsection{The Semigroup}

\begin{lemm}\label{lem:LL}
Let $(R_n)_{n\in\NN}\subset\fA_\ast$ be the sequence of extended states of the MRIS associated to $(\pi,P,(\rho_\omega)_{\omega\in\Omega},(\cL_\omega)_{\omega\in\Omega})$. Then, for $n\in\NN$, one has
\be
R_n=\LL^nR_0,
\label{eq:semigroup}
\ee
where\/ $\LL:\fA_\ast\to\fA_\ast$ is the CPTP map defined by
\be
(\LL R)(\omega)\coloneqq\sum_{\nu\in\Omega}P_{\nu\omega}\cL_\nu R(\nu).
\label{eq:LLdef}
\ee
In particular, given an extended observable $X\in\fA$, one has
\be
\EE[\bra\rho_n(\bomega),X(\omega_{n+1})\ket]=\bra\LL^nR_0,X\ket.
\label{eq:tomatos}
\ee
\end{lemm}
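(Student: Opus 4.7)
The plan is to verify the three assertions of Lemma~\ref{lem:LL} in turn: that $\LL$ is CPTP, that the semigroup identity $R_n=\LL^n R_0$ holds, and that the Feynman--Kac formula~\eqref{eq:tomatos} follows from it.

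First I would check that $\LL$ is CPTP. The defining formula~\eqref{eq:LLdef} exhibits $(\LL R)(\omega)$ as a finite non-negative linear combination, with coefficients $P_{\nu\omega}\ge0$, of the images of $R(\nu)$ under the CP maps $\cL_\nu$, so $\LL$ is completely positive by the standard criterion~\eqref{eq:CPcrit} applied componentwise (the Kraus operators of $\cL_\nu$, rescaled by $\sqrt{P_{\nu\omega}}$ and tensored with the matrix unit $|\omega\rangle\langle\nu|$ on $\CC^\Omega$, yield Kraus operators for $\LL$ on $\cK$). Trace preservation reduces to a one-line computation using the stochasticity of $P$:
$$
\Tr(\LL R)=\sum_{\omega\in\Omega}\tr(\LL R)(\omega)=\sum_{\nu\in\Omega}\tr\cL_\nu R(\nu)\sum_{\omega\in\Omega}P_{\nu\omega}=\sum_{\nu\in\Omega}\tr R(\nu)=\Tr R.
$$

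Next I would prove~\eqref{eq:semigroup} by induction on $n$. The base case $n=0$ is trivial. For the induction step, I would express $R_{n+1}(\omega)$ by first peeling off the last interaction map $\cL_{\omega_{n+1}}$ and then conditioning on $\omega_{n+1}$:
\begin{align*}
R_{n+1}(\omega)&=\EE\bigl[\cL_{\omega_{n+1}}\rho_n(\bomega)\,1_{\{\omega_{n+2}=\omega\}}\bigr]\\
&=\sum_{\nu\in\Omega}\EE\bigl[\cL_\nu\rho_n(\bomega)\,1_{\{\omega_{n+1}=\nu\}}\,1_{\{\omega_{n+2}=\omega\}}\bigr]\\
&=\sum_{\nu\in\Omega}P_{\nu\omega}\,\cL_\nu\EE\bigl[\rho_n(\bomega)\,1_{\{\omega_{n+1}=\nu\}}\bigr]=(\LL R_n)(\omega),
\end{align*}
where in the third equality I use the Markov property~\eqref{eq:markovproperty} to replace the conditional probability $\PP(\omega_{n+2}=\omega\mid\omega_0,\ldots,\omega_{n+1}=\nu)$ by $P_{\nu\omega}$ and pull out the deterministic map $\cL_\nu$ from the expectation. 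This step is the only non-cosmetic part of the argument; the mild subtlety is keeping straight that $\rho_n(\bomega)$ is $\sigma(\omega_0,\ldots,\omega_n)$-measurable while the indicator in the definition of $R_n(\nu)$ carries the $\omega_{n+1}$-information needed to bridge to $\omega_{n+2}$. Combined with the identity~\eqref{eq:R0form} for $R_0$, this yields $R_n=\LL^nR_0$ for all $n$.

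Finally, formula~\eqref{eq:tomatos} follows by unwinding the duality~\eqref{eq:fAduality} and exchanging summation with expectation:
$$
\bra\LL^nR_0,X\ket=\bra R_n,X\ket=\sum_{\omega\in\Omega}\bra R_n(\omega),X(\omega)\ket=\sum_{\omega\in\Omega}\EE\bigl[\bra\rho_n(\bomega),X(\omega)\ket\,1_{\{\omega_{n+1}=\omega\}}\bigr]=\EE[\bra\rho_n(\bomega),X(\omega_{n+1})\ket].
$$
I do not anticipate any genuine obstacle; the content of the lemma is that the Markov property turns a random dynamics into a deterministic semigroup on the enlarged state space, and the computation above is essentially forced by that principle.
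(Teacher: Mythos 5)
Your proposal is correct and takes essentially the same route as the paper: your inductive step, conditioning on $\omega_{n+1}$ and invoking the Markov property to produce the factor $P_{\nu\omega}$, is exactly the paper's regrouping of the explicit path sum $\sum_{\omega_0,\ldots,\omega_n}\pi_{\omega_0}P_{\omega_0\omega_1}\cdots P_{\omega_n\omega}\,\cL_{\omega_n}\cdots\cL_{\omega_1}\rho_{\omega_0}$, and your unwinding of~\eqref{eq:tomatos} via indicators matches the paper's conditioning on $\omega_{n+1}$. The only cosmetic divergence is in the CPTP verification: you argue complete positivity directly through a Kraus family (the same operators $p_\omega^\ast\sqrt{P_{\nu\omega}}\,V_{\nu,s,s'}\,p_\nu$ the paper constructs later, in the proof of Lemma~\ref{lem:Lalpha}) and trace preservation from the right-stochasticity of $P$, whereas the paper computes the adjoint $\LL^\ast$, checks it is CPU using~\eqref{eq:CPcrit}, and dualizes.
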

\begin{proof}
The relation~\eqref{eq:semigroup} clearly holds for $n=0$. For any $n\in\NN^\ast$ and $\omega\in\Omega$ one has
\begin{align*}
R_n(\omega)&=\sum_{\omega_0,\ldots,\omega_{n}\in\Omega}\pi_{\omega_0}P_{\omega_0\omega_1}\cdots P_{\omega_n\omega}\cL_{\omega_n}\cdots\cL_{\omega_1}\rho_{\omega_0}\\
&=\sum_{\omega_n\in\Omega}P_{\omega_n\omega}\cL_{\omega_n}\sum_{\omega_0,\ldots,\omega_{n-1}\in\Omega}\pi_{\omega_0}P_{\omega_0\omega_1}\cdots P_{\omega_{n-1}\omega_n}\cL_{\omega_{n-1}}\cdots\cL_{\omega_1}\rho_{\omega_0}\\
&=\sum_{\omega_n\in\Omega}P_{\omega_n\omega}\cL_{\omega_n}R_{n-1}(\omega_{n})=(\LL R_{n-1})(\omega),
\end{align*}
which implies~\eqref{eq:semigroup} for all $n\in\NN$. Consequently, one has\footnote{We denote by $\EE[\,\cdot\,|\omega_n]$ the conditional expectation w.r.t.\;the random variable $\omega_n$.}
\begin{align*}
\EE[\bra\rho_n(\bomega),X(\omega_{n+1})\ket]&=\EE[\bra\EE[\rho_n(\bomega)\,|\,\omega_{n+1}],X(\omega_{n+1})\ket]\\
&=\sum_{\omega\in\Omega}\pi^{(n+1)}_\omega\bra\EE[\rho_n(\bomega)\,|\,\omega_{n+1}=\omega],X(\omega)\ket\\
&=\sum_{\omega\in\Omega}\bra R_n(\omega),X(\omega)\ket=\bra\LL^nR_0,X\ket,
\end{align*}
which proves~\eqref{eq:tomatos}. An elementary calculation shows that the adjoint of\/ $\LL$ w.r.t.\;the duality~\eqref{eq:fAduality}
is given by
$$
(\LL^\ast X)(\omega)=\sum_{\nu\in\Omega}P_{\omega\nu}\cL_\omega^\ast X(\nu).
$$
One also easily checks (e.g., using~\eqref{eq:CPcrit}) that $\LL^\ast$ is a CPU-map on $\fA$. By duality, $\LL$ is a CPTP-map on $\fA_\ast$.
\end{proof}

By Brouwer's fixed point theorem, $\LL$ has a fixed point on the convex set $\fA_{\ast+1}$ of extended states.
Such a fixed point will be called {\sl Extended Steady State} (ESS) of the MRIS.

Let $R_+\in\fA_{\ast+1}$ be an ESS, and set
\begin{equation}
\pi_{+\omega}\coloneqq\tr R_+(\omega),\qquad \rho_{+\omega}\coloneqq
\begin{cases}
	\ds\frac{\cL_\omega R_+(\omega)}{\pi_{+\omega}}&\text{if }\pi_{+\omega}\neq0,\\[10pt]
	\ds\frac\un{\tr\un}&\text{otherwise.}
\end{cases}
\label{eq:R+magic}
\end{equation}
Since $R_+$ is positive with $\Tr R_+=1$, $\pi_+$ is a probability vector, and it follows from 
$$
\pi_{+\omega}=\tr R_+(\omega)=\tr(\LL R_+)(\omega)=\sum_{\nu\in\Omega}P_{\nu\omega}\tr\cL_\nu R_+(\nu)
=\sum_{\nu\in\Omega}P_{\nu\omega}\tr R_+(\nu)=\sum_{\nu\in\Omega}\pi_{+\nu}P_{\nu\omega}
$$
that it is an invariant probability for the driving Markov chain. By construction, $\rho_+=(\rho_{+\omega})_{\omega\in\Omega}$ is a family of states on $\cA$.
Since $R_+(\omega)\ge0$, $R_+(\omega)=0$ whenever $\pi_{+\omega}=0$, so that the identity $\cL_\omega R_+(\omega)=\pi_{+\omega}\rho_{+\omega}$ holds for all $\omega\in\Omega$. The formula
\be
\sum_{\nu\in\Omega}P_{\nu\omega}\pi_{+\nu}\rho_{+\nu}=\sum_{\nu\in\Omega}P_{\nu\omega}\cL_\nu R_+(\nu)
=(\LL R_+)(\omega)=R_+(\omega),
\label{eq:rho+magic}
\ee
thus allows to reconstruct the ESS $R_+$ from the invariant probability $\pi_+$ and the family  $\rho_{+}$.

Denoting by $\PP_+$ the stationary Markov measure on $(\bO,\cO)$ associated to $(\pi_+,P)$, one has,
with $\rho_{+n}(\bomega)\coloneqq\cL_{\omega_n}\cdots\cL_{\omega_1}\rho_{+\omega_0}$,
\begin{align*}
	\EE_+[\rho_{+n}(\bomega)1_{\omega_{n+1}=\omega}]
	&=\sum_{\omega_0,\ldots,\omega_n\in\Omega}\pi_{+\omega_0}P_{\omega_0\omega_1}\cdots P_{\omega_n\omega}
	\cL_{\omega_n}\cdots\cL_{\omega_1}\rho_{+\omega_0}\\
	&=\sum_{\omega_0,\ldots,\omega_n\in\Omega}P_{\omega_n\omega}\cL_{\omega_n}\cdots P_{\omega_0\omega_1}\cL_{\omega_0}R_+(\omega_0)\\
	&=(\LL^{n+1}R_+)(\omega)=R_+(\omega)
\end{align*}
so that
\be
\EE_+[\bra\rho_{+n}(\bomega),X(\omega_{n+1})\ket]=\bra R_+,X\ket,
\label{eq:R+Form}
\ee
for all $X\in\fA$ and $n\in\NN$. 

\begin{defi}
We shall say that the MRIS associated to $(\pi_+,P,(\rho_{+\omega})_{\omega\in\Omega},(\cL_\omega)_{\omega\in\Omega})$ is the {\sl stationary MRIS}\/ induced by the ESS $R_+$.
\end{defi}

\noindent{\bf Remark.} Given a $P$-invariant probability $\pi_+$ and a family of states $(\rho_{+\omega})_{\omega\in\Omega}$
on $\cA$ such that
$$
\cL_\omega\sum_{\mu\in\Omega}P_{\mu\omega}\pi_{+\mu}\rho_{+\mu}=\pi_{+\omega}\rho_{+\omega}
$$
for all $\omega\in\Omega$, the extended state $R_+$ defined by~\eqref{eq:rho+magic} satisfies
$$
(\LL R_+)(\omega)=\sum_{\nu\in\Omega}P_{\nu\omega}\cL_\nu\sum_{\mu\in\Omega} P_{\mu\nu}\pi_{+\mu}\rho_{+\mu}
=\sum_{\nu\in\Omega}P_{\nu\omega}\pi_{+\nu}\rho_{+\nu}=R_+(\omega).
$$

\subsection{Primitivity and Irreducibility as Ergodic Properties}
\label{sec:ljpm}

The irreducibility and primitivity properties of CP maps $\Phi$ will play a central role in our approach.
As introduced in Section~\ref{sec:NotaConv}, they are intimately linked to the natural order structure of
a $C^\ast$-algebra. In this section, we recall their important connections with the spectral properties of $\Phi$ and the ergodic properties of the semigroup $(\Phi^n)_{n\in\NN}$. We also discuss some issues more specific to
the CPTP map~\eqref{eq:LLdef}.

Let $\Phi$ be a CPTP map on the predual $\cC_\ast$ of the finite dimensional $C^\ast$-algebra $\cC\subset\cB(\cH)$. The following statements are well known (see, e.g., \cite{EHK} for details):

\medskip
\begin{enumerate}[label=(\roman*)]
\item $\spec(\Phi)$ is a subset of the closed unit disk containing $1$, and the eigenspace of $\Phi$ associated to the eigenvalue $1$ contains a state $\rho$.
\item $\Phi$ is irreducible iff its eigenvalue $1$ is simple. In this case, the eigenspace of $\Phi$ associated to the eigenvalue $1$ contains a unique state $\rho$. Moreover, $\rho$ is faithful and for any $R\in\cC_\ast$ one has
$$
\left|\frac1n\sum_{k=0}^{n-1}\Phi^k(R)-\bra R,\un\ket\rho\right|=\|R\|\cO(n^{-1}),
$$
as $n\to\infty$.
\item $\Phi$ is primitive,  iff its simple eigenvalue $1$ is gapped, {\sl i.e.,} 
$$
\Delta=-\log\max\{|z|\mid z\in\spec(\Phi)\setminus\{1\}\}>0.
$$ 
Moreover, for any $R\in\cC_\ast$ and $\epsilon>0$ one has
$$
|\Phi^n(R)-\bra R,\un\ket\rho|=\|R\|\cO(\e^{-n(\Delta-\epsilon)}),
$$
as $n\to\infty$.
\end{enumerate}

\medskip
We conclude this section with some useful connections between the properties of $P$ and $\cL_\omega$ and that of\, $\LL$.

\begin{lemm}\label{lem:Lspec}
Set
$$
\bar\cL\coloneqq\sum_{\omega\in\Omega}\pi_\omega\cL_\omega,	
$$
where $\pi$ is a faithful probability vector, and note that $\bar\cL$ is a CPTP map on $\fA_\ast$.
\begin{enumerate}[label=(\roman*)]
\item $\LL$ is positivity improving iff $P$ and every $\cL_\omega$ are.
\item If\/ $\LL$ is irreducible (resp.\;primitive), so are $P$ and $\bar\cL$.
\item If\/ $P$ is positivity improving, then\/ $\LL$ is irreducible (resp.\;primitive) iff $\bar\cL$ is.
\item If every $\cL_\omega$ is positivity improving, then $\LL$ is irreducible (resp.\;primitive) iff\/ $P$ is.
\end{enumerate}
\end{lemm}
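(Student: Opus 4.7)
The plan is to read off all four assertions from the iterated expansion
\[
(\LL^nR)(\omega)=\sum_{\omega_0,\ldots,\omega_{n-1}\in\Omega}P_{\omega_0\omega_1}\cdots P_{\omega_{n-1}\omega}\,\cL_{\omega_{n-1}}\cdots\cL_{\omega_0}R(\omega_0)
\]
combined with two elementary observations: the trace intertwines $\LL$ with $P^\top$, i.e.\;$\tr(\LL R)(\omega)=(P^\top\tr R)(\omega)$, and the analogous expansion $\bar\cL^n\rho=\sum_{\omega_1,\ldots,\omega_n}\pi_{\omega_1}\cdots\pi_{\omega_n}\cL_{\omega_n}\cdots\cL_{\omega_1}\rho$ involves exactly the same family of iterated compositions, now weighted by products of the $\pi_{\omega_i}$. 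Part (i) follows immediately: if $P$ and each $\cL_\omega$ are positivity improving and $\nu_0$ is chosen so that $R(\nu_0)\neq0$, then $(\LL R)(\omega)\geq P_{\nu_0\omega}\cL_{\nu_0}R(\nu_0)>0$; the converse uses the test $R=\delta_{\cdot,\nu_0}\otimes\sigma$ with $\sigma$ any state, for which $(\LL R)(\omega)=P_{\nu_0\omega}\cL_{\nu_0}\sigma$ forces both factors to be strictly positive.

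For (ii), the implication $\LL\Rightarrow P$ is immediate from the trace intertwining: with $R=\delta_{\cdot,\nu_0}\otimes\sigma$ one computes $\tr(\LL^nR)(\omega)=(P^n)_{\nu_0\omega}$ and $\tr(\e^{t\LL}R)(\omega)=(\e^{tP})_{\nu_0\omega}$, so positivity improving of $\LL^n$ (resp.\;$\e^{t\LL}$) forces that of $P^n$ (resp.\;$\e^{tP}$). The implication $\LL\Rightarrow\bar\cL$ I handle by contrapositive: if a vector $v\neq0$ lies in $\Ker(\bar\cL^n\rho)$ for some $\rho\geq0$, $\rho\neq0$, then the non-negativity of every summand in the expansion of $\bar\cL^n\rho$, together with faithfulness of $\pi$ (which makes all coefficients strictly positive), forces $\cL_{\omega_n}\cdots\cL_{\omega_1}\rho\,v=0$ for every tuple. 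Substituting this into the expansion of $(\LL^nR)(\omega)$ with $R=\bun\otimes\rho$ places $v$ in the kernel of $(\LL^nR)(\omega)$ for every $\omega$, so $\LL^n$ cannot be positivity improving, contradicting primitivity of $\LL$. Applying the same argument to $\e^{t\bar\cL}\rho=\sum_n(t^n/n!)\bar\cL^n\rho$ handles the irreducible case.

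For the converses in (iii) and (iv), I isolate the contribution $\omega_0=\nu_0$ in the expansion of $\LL^nR$ (with $\nu_0$ such that $R(\nu_0)\neq0$) and bound the Markov weights using the hypothesis. In (iii), with $P>0$, the bounds $P_{ij}\geq p_{\min}>0$ and $\pi_\nu\leq\pi_{\max}$ yield
\[
(\LL^{n+1}R)(\omega)\geq\frac{p_{\min}^{n+1}}{\pi_{\max}^n}\bar\cL^n\!\big(\cL_{\nu_0}R(\nu_0)\big),
\]
which is strictly positive at the witness $n$ of $\bar\cL$-primitivity; in the irreducible case, summing these bounds with weights $t^{n+1}/(n+1)!$ shows that $(\e^{t\LL}R)(\omega)$ dominates a positive combination $\sum_m w_m\bar\cL^mA$ with $w_m>0$ and $A=\cL_{\nu_0}R(\nu_0)$, which is strictly positive since $\e^{s\bar\cL}A>0$ for $s>0$ implies that every non-zero $v$ satisfies $\bra v,\bar\cL^mAv\ket>0$ for some $m$. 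In (iv), with each $\cL_\omega$ positivity improving, the bound $\cL_{\nu_0}R(\nu_0)\geq\delta_0\un$ for some $\delta_0>0$, together with $\cL_\omega(\alpha\un)\geq\alpha c_\omega\un$ (for $c_\omega\coloneqq\min\spec(\cL_\omega\un)>0$), propagates through the iterated composition to give
\[
(\LL^nR)(\omega)\geq\delta_0\,c_{\min}^{n-1}\,(P^n)_{\nu_0\omega}\,\un,
\]
from which positivity improving of $P^n$ (resp.\;of $\e^{tc_{\min}P}$) transfers directly to $\LL^n$ (resp.\;$\e^{t\LL}$). The main obstacle is (iii): one has to check that the termwise comparison of a Markov path weight to its i.i.d.\;counterpart stays uniform enough in $n$ to reassemble into a positivity improving operator on $\fA_\ast$, which is exactly what the hypothesis $P>0$ provides.
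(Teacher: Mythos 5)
Your proposal is correct and takes essentially the same route as the paper's proof: all four parts are read off the path expansion~\eqref{eq:cofee} of $\LL^n$, using single-site test elements for (i)--(ii) and termwise domination between the Markov path weights and the $\pi$- resp.\;$P$-weights for (iii)--(iv). Your local variations --- the trace intertwining $\tr\circ\LL=P^{\top}\circ\tr$ in place of the paper's pairing $\bra\LL^nR_u,X_v\ket=\bra u,P^nv\ket$, the kernel-vector contrapositive in (ii) instead of the bound $(\LL^nR)(\omega)\le\epsilon^{-n+1}\bar\cL^n\rho$, and the $c_{\min}$-propagation giving $(\LL^nR)(\omega)\ge\delta_0c_{\min}^{n-1}(P^n)_{\nu_0\omega}\un$ in (iv) --- are equivalent reformulations of the paper's estimates rather than a different method.
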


\begin{proof}
(i) Assume that $P$ and each $\cL_\omega$ are positivity improving. Given non-zero $R\in\fA_{\ast+}$ and $X\in\fA_+$, there exists $\omega,\nu\in\Omega$ such that both $R(\omega)$ and $X(\nu)$ are non-negative and non-zero. It follows that $\cL_\omega R(\omega)>0$ and hence
$$
\bra\LL R,X\ket\ge P_{\omega\nu}\bra\cL_\omega R(\omega),X(\nu)\ket>0,
$$
which shows that $\LL$ is positivity improving. Reciprocally, let $\LL$ be positivity improving,
$\xi,\eta\in\Omega$, and $\rho\in\cA_{\ast+}$, $A\in\cA_+$ be non-zero. With $R(\omega)=\delta_{\omega\xi}\rho$ and $X(\omega)=\delta_{\omega\eta}A$, one has
$$
0<\bra\LL R,X\ket=P_{\xi\eta}\bra\cL_\xi\rho,A\ket,
$$
from which we can conclude that $P$ and all $\cL_\omega$ are positivity improving.

\medskip\noindent(ii) For $R\in\fA_\ast$ and $n\in\NN$ one easily checks that
\be
(\LL^n R)(\omega)=\sum_{\omega_1,\ldots,\omega_n\in\Omega}P_{\omega_1\omega_2}\cdots P_{\omega_n\omega}\cL_{\omega_n}\cdots\cL_{\omega_1}R(\omega_1).
\label{eq:cofee}
\ee
For $u\in\RR^\Omega$, let us set $R_u(\omega)\coloneqq u_{\omega}\rho$ and $X_u(\omega)\coloneqq u_\omega\un$ where $\rho$ is an arbitrary density matrix. Formula~\eqref{eq:cofee} gives that
$$
\bra\LL^nR_u,X_v\ket=\bra u,P^n v\ket
$$
for any $u,v\in\RR^\Omega$ and $n\in\NN$, and it follows that $P$ is irreducible (resp.\;primitive) if $\LL$ is.
Setting $R(\omega)\coloneqq\pi_\omega\rho$ and $\epsilon\coloneqq\min_\omega\pi_\omega$, and using the fact that the matrix elements of $P$ are all bounded above by $1$, we further deduce from~\eqref{eq:cofee} 
\begin{align*}
(\LL^n R)(\omega)&\le\sum_{\omega_1,\ldots,\omega_n\in\Omega}\cL_{\omega_n}\cdots\cL_{\omega_1}\pi_{\omega_1}\rho\\
&\le\epsilon^{-n+1}\sum_{\omega_1,\ldots,\omega_n\in\Omega}\pi_{\omega_n}\cL_{\omega_n}\cdots\pi_{\omega_1}\cL_{\omega_1}\rho=\epsilon^{-n+1}\bar\cL^n\rho,
\end{align*}
from which one easily concludes that $\bar\cL$ is irreducible (resp.\;primitive) whenever $\LL$ is.

\medskip\noindent(iii) If $P$ is positivity improving, then there exists $\delta$ such that
$0<\delta\le P_{\nu\omega}$ for all $\omega,\nu\in\Omega$. Setting again $R(\omega)\coloneqq\pi_\omega\rho$, it follows from~\eqref{eq:cofee} that
$$
\delta^n\bar\cL^n\rho\le(\LL^n R)(\omega),
$$
from which it follows that $\LL$ is irreducible (resp.\;primitive) whenever $\bar\cL$ is. The reciprocal property follows from Part~(ii).

\medskip\noindent(iv) Let $R\in\fA_{\ast+}$ be non-zero, so that $R(\nu)$ is non-zero for some
$\nu\in\Omega$. Formula~\eqref{eq:cofee} yields
$$
(\LL^n R)(\omega)\ge\sum_{\omega_2,\ldots,\omega_n\in\Omega}P_{\nu\omega_2}\cdots P_{\omega_n\omega}\cL_{\omega_n}\cdots\cL_{\omega_2}\cL_{\nu}R(\nu),
$$
and if every $\cL_\omega$ is positivity improving, then
$$
\delta\coloneqq\min_{\omega_2,\ldots,\omega_n\in\Omega}\min\spec(\cL_{\omega_n}\cdots\cL_{\omega_2}\cL_{\nu}R(\nu))>0,
$$
so that
$$
(\LL^n R)(\omega)\ge (P^n)_{\nu\omega}\delta\un,
$$
from which we can conclude that $\LL$ is irreducible (resp.\;primitive) whenever $P$ is. The reciprocal property
again follows from Part~(ii).
\end{proof}

\section{Main Abstract Results}
\label{sec:CPMC}
\subsection{A Pointwise Ergodic Theorem}
\label{sec:largetime}\label{sec:positivity}\label{sec:ergo}

Central to our results are consequences of spectral properties of the map $\LL$ on the large time behavior of MRIS. Below, we formulate a pointwise ergodic theorem for MRIS which applies when the driving Markov chain is stationary. 
The case of an inhomogeneous driving process is considered in 
Section~\ref{sec:adia} in the adiabatic limit. Applications of these results to the nonequilibrium thermodynamic properties of MRIS will be
given in the subsequent Section~\ref{sec:thermo}.

Our ergodic theorem relies on the minimal

\begin{hypo}[{{\bf STAT}}]\label{STAT}
Either the Markov chain is stationary, i.e., $\pi P=\pi$, or it admits a faithful stationary state $\pi_+$.
\end{hypo}

We note that whenever $\LL$ is irreducible (or primitive, or positivity improving), then, by Lemma~\ref{lem:Lspec}~(ii), the unique left eigenvector of $P$ is faithful, so that Assumption~(\nameref{STAT}) is satisfied.

We shall invoke the following vector-valued random ergodic theorem of Beck and Schwartz.

\begin{theo}[{\cite[Theorem~2]{BS57}}]
\label{thm:Beck+Schwartz}
Let $\cX$ be a reflexive Banach-space and let $(S,\Sigma,m)$ be a $\sigma$-finite measure space. Let there be defined a strongly measurable function $S\ni s\mapsto T_s$ with values in the Banach-space $\cB(\cX)$ of bounded linear operators on $\cX$. Suppose that $\|T_s\|\le1$ for all $s\in S$. Let $h$ be a measure-preserving transformation on $(S,\Sigma, m)$. Then for each $X\in L^1(S,\cX)$ there is an $\bar X\in L^1(S,\cX)$ such that 
$$
\lim_{N\to\infty}\frac1N\sum_{n=0}^{N-1}T_{s}T_{h(s)}\cdots T_{h^{n-1}(s)}X(h^n(s))=\bar X(s), 
$$
strongly in $\cX$, a.e.\;in $S$, and
$$
\bar X(s)=T_s(\bar X(h(s)))
$$
a.e.\;in S. Moreover, if $m(S)<\infty$, then $\cX$ is also the limit in the mean of order $1$.
\end{theo}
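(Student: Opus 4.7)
The plan is to reduce the vector-valued random ergodic statement to a single-operator ergodic theorem via the classical skew-product construction. Define $U$ on strongly measurable $\cX$-valued functions on $S$ by $(UX)(s)\coloneqq T_sX(h(s))$. Iteration gives $(U^nX)(s)=T_sT_{h(s)}\cdots T_{h^{n-1}(s)}X(h^n(s))$, so the ergodic averages in the statement are exactly $\frac1N\sum_{n=0}^{N-1}(U^nX)(s)$. Since $\|T_s\|\le1$ and $h$ preserves $m$, one checks directly that $U$ is a simultaneous contraction on $L^1(S,\cX)$ and on $L^\infty(S,\cX)$. Moreover, $\|(UX)(s)\|_\cX\le\|X(h(s))\|_\cX$, so the scalar function $s\mapsto\|(U^nX)(s)\|_\cX$ is pointwise dominated by $\|X(\cdot)\|_\cX\circ h^n$, which is precisely the orbit of a scalar function under the positive Koopman contraction $V\colon f\mapsto f\circ h$.

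For the mean convergence when $m(S)<\infty$, I would work on $L^p(S,\cX)$ for $p\in(1,\infty)$, which is reflexive whenever $\cX$ is. The Yosida--Kakutani mean ergodic theorem for contractions on a reflexive Banach space yields convergence of $\frac1N\sum_{n=0}^{N-1}U^nX$ to a $U$-fixed point $\bar X$ in $L^p$, and the fixed-point identity $U\bar X=\bar X$ is exactly the stated relation $\bar X(s)=T_s\bar X(h(s))$. A standard truncation and dominated-convergence argument, exploiting $m(S)<\infty$ and the $L^1$-contractivity of $U$, then upgrades this $L^p$-convergence to mean convergence of order $1$.

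The delicate point is the a.e.\ convergence, because $L^1(S,\cX)$ is not reflexive and the mean ergodic theorem cannot be invoked directly. The route I would take is the usual one of combining a maximal inequality with a density argument. The scalar domination $\|(U^nX)(s)\|_\cX\le V^n(\|X\|_\cX)(s)$ and the Hopf--Dunford--Schwartz maximal ergodic theorem applied to the positive $L^1$--$L^\infty$ contraction $V$ yield a weak-type bound on $\sup_N\frac1N\sum_{n=0}^{N-1}\|(U^nX)(s)\|_\cX$. For a dense subclass of $L^1(S,\cX)$ --- typically simple functions valued in a finite-dimensional subspace --- a.e.\ convergence of the vector averages is immediate from Birkhoff's scalar theorem applied componentwise. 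The maximal inequality then propagates this a.e.\ convergence from the dense subclass to all of $L^1(S,\cX)$ by a standard $\varepsilon/3$ argument.

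The hardest part is precisely this last step: ensuring that the limits obtained componentwise on the dense subclass assemble into a \emph{strong} limit in $\cX$ rather than merely a weak one. Reflexivity of $\cX$ enters here through weak compactness of bounded sets in $\cX$, which guarantees the existence of weak cluster points of the averages; combined with the mean ergodic theorem already established, these cluster points are identified with the unique fixed-point limit $\bar X(s)$, and the gap between weak and strong convergence is closed using the contractivity of the averages and the maximal bound.
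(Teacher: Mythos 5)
The paper offers no proof of this statement: it is imported verbatim from Beck and Schwartz \cite[Theorem~2]{BS57}, so your proposal can only be measured against the original argument. Your skeleton is in fact the right one, and it is essentially theirs: the skew-product contraction $(UX)(s)=T_sX(h(s))$, the observation that $U$ contracts both $L^1(S,\cX)$ and $L^\infty(S,\cX)$ with the scalar domination $\|(U^nX)(s)\|_\cX\le(\|X\|_\cX\circ h^n)(s)$, the resulting Hopf-type maximal inequality for the Koopman contraction $V$, and the mean ergodic theorem on the reflexive spaces $L^p(S,\cX)$, $1<p<\infty$, with the fixed-point identity $U\bar X=\bar X$ giving $\bar X(s)=T_s\bar X(h(s))$. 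All of that is correct, as is the truncation argument upgrading $L^p$-mean convergence to $L^1$-mean convergence when $m(S)<\infty$; this is how \cite{BS57} reduce their Theorem~2 to a vector-valued analogue of the Dunford--Schwartz pointwise theorem.

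The genuine gap is the dense-subclass step. For simple functions with values in a finite-dimensional subspace $F\subset\cX$, a.e.\;convergence of the averages is \emph{not} immediate from Birkhoff's theorem applied componentwise: the operators $T_s$ do not preserve $F$ (nor any finite-dimensional subspace), so $(U^nX)(s)=T_sT_{h(s)}\cdots T_{h^{n-1}(s)}X(h^n(s))$ is in no sense a componentwise Birkhoff average --- your claim would only be valid if $T_s\equiv\Id$. The correct dense class on which convergence is elementary is the Riesz class $\Ker(\Id-U)+(\Id-U)\bigl(L^1\cap L^\infty\bigr)$: on fixed points the averages are constant, while for a coboundary $Y-UY$ with $Y\in L^\infty$ one has
$$
\frac1N\sum_{n=0}^{N-1}U^n(Y-UY)=\frac1N\left(Y-U^NY\right),\qquad
\frac1N\|(Y-U^NY)(s)\|_\cX\le\frac{2\|Y\|_\infty}{N}\longrightarrow0
$$
pointwise. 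Density of this class in $L^1$ is exactly what the mean ergodic theorem on the reflexive space $L^p(S,\cX)$ supplies --- this, and not pointwise weak compactness, is where reflexivity of $\cX$ genuinely enters. Your closing paragraph cannot repair the defect, since the issue is not upgrading weak limits of the averages to strong ones: on the dense class you propose, no convergence of any kind has been established. A smaller point: your mean-ergodic/density mechanism uses $m(S)<\infty$, whereas the a.e.\;statement is asserted for $\sigma$-finite $m$, so the general case requires an additional reduction which your sketch does not address.
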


Let us start by assuming that the Markov chain is stationary. A direct application of Theorem~\ref{thm:Beck+Schwartz} shows that $\PP$-a.e.\;and in $L^1(\bO,\PP;\fA)$ the limit
$$
\bar{X}(\bomega)\coloneqq\lim_{N\to\infty}\frac1N\sum_{n=0}^{N-1}\cL^\ast_{\omega_1}\cdots\cL^\ast_{\omega_n}X(\omega_{n+1})
$$
exists for all $X\in\fA$. 
Moreover, the covariance relation $\bar{X}(\bomega)=\cL_{\omega_1}^\ast\bar{X}\circ\phi(\bomega)$ implies
\begin{align*}
	\bar{\bar{X}}(\omega)
	\coloneqq&\EE[\bar{X}(\bomega)\,|\,\omega_1=\omega]\\
	=&\EE[\cL^\ast_{\omega_1}\bar{X}(\phi(\bomega))\,|\,\omega_1=\omega]\\
	=&\cL^\ast_{\omega}\EE[\bar{X}(\phi(\bomega))\,|\,\omega_1=\omega]\\
	=&\cL^\ast_{\omega}\EE[\bar{X}(\bomega)\,|\,\omega_0=\omega]\\
	=&\cL^\ast_{\omega}\EE[\EE[\bar{X}(\bomega)\,|\,\omega_1]\,|\,\omega_0=\omega]\\
	=&\cL^\ast_{\omega}\EE[\bar{\bar X}(\omega_1)\,|\,\omega_0=\omega]\\
	=&\sum_{\nu\in\Omega}P_{\omega\nu}\cL^\ast_{\omega}\bar{\bar X}(\nu)
	=(\LL^\ast\bar{\bar X})(\omega).
\end{align*}

Assuming now that $\pi_+$ is a faithful stationary state, {\sl i.e.,} $\pi_+P=\pi_+>0$, then
the stationary Markov measure $\PP_+$ on $(\bO,\cO)$ with transition matrix $P$ and faithful invariant probability $\pi_+$ satisfies
$$
\frac{\d\PP}{\d\PP_+}(\bomega)=\frac{\pi(\omega_0)}{\pi_+(\omega_0)},
$$
which gives that $\PP$ is absolutely continuous w.r.t.\;$\PP_+$. Consequently, any $\PP_+$-a.s.\;property also holds $\PP$-almost surely. Summarizing, we have proved

\begin{theo}\label{thm:pergo} Under Assumption~{\rm (\nameref{STAT})}, the limit
\be
\bar{X}(\bomega)\coloneqq\lim_{N\to\infty}\frac1N\sum_{n=0}^{N-1}\cL^\ast_{\omega_1}\cdots\cL^\ast_{\omega_n}X(\omega_{n+1})
\label{eq:ergav}
\ee
exists\/ $\PP$-almost surely and in $L^1(\bO,\PP;\fA)$ for any $X\in\fA$. The limiting function is such that 
\be
\cL^\ast_{\omega_1}\bar{X}(\phi(\bomega))=\bar{X}(\bomega). 
\label{eq:barXcovariance}
\ee
Moreover, the extended observable $\bar{\bar{X}}\in\fA$ defined by
$$
\bar{\bar{X}}(\omega)\coloneqq\EE[\bar{X}(\bomega)\,|\,\omega_1=\omega],
$$
satisfies
\be
\LL^\ast\bar{\bar{X}}=\bar{\bar{X}}.
\label{eq:LstarFix}
\ee
\end{theo}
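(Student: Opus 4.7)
The strategy is to apply the Beck--Schwartz random ergodic theorem (Theorem~\ref{thm:Beck+Schwartz}) with the reflexive (finite-dimensional) Banach space $\cX=\fA$, the probability space $(\bO,\cO,\PP_+)$ where $\PP_+$ is the stationary Markov measure with transition $P$ and some faithful invariant probability $\pi_+$, the measure-preserving shift $h=\phi$, and the operator-valued map $T_\bomega \coloneqq \cL^\ast_{\omega_1}$. To invoke Beck--Schwartz I must check $\|T_\bomega\|\le 1$: this is immediate since $\cL^\ast_{\omega_1}$ is a CPU map on the $C^\ast$-algebra $\cA$, hence a contraction for the operator norm (and thus in each component $\fA=\ell^\infty(\Omega;\cA)$). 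The role of ``$X(h^n s)$'' is played by $\bomega\mapsto X(\omega_{n+1})$ viewed as an $\fA$-valued function.

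Beck--Schwartz then delivers, for every $X\in\fA$, the existence of the Cesàro limit~\eqref{eq:ergav} both $\PP_+$-almost surely and in $L^1(\bO,\PP_+;\fA)$, and automatically yields the covariance identity~\eqref{eq:barXcovariance} as the fixed-point relation $\bar X(\bomega)=T_\bomega \bar X(\phi(\bomega))$ which Beck--Schwartz asserts. This takes care of the first two assertions under the stationary version of the law.

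Next, for the third assertion, I unfold $\LL^\ast\bar{\bar X}$ using its defining formula $(\LL^\ast Y)(\omega)=\sum_{\nu}P_{\omega\nu}\cL_\omega^\ast Y(\nu)$, and compare it with $\bar{\bar X}(\omega)=\EE_+[\bar X(\bomega)\mid\omega_1=\omega]$. Applying the covariance~\eqref{eq:barXcovariance} inside the conditional expectation and using the Markov property to rewrite $\EE_+[\bar X(\phi(\bomega))\mid\omega_1=\omega]=\EE_+[\bar X(\bomega)\mid\omega_0=\omega]=\sum_{\nu}P_{\omega\nu}\bar{\bar X}(\nu)$ (conditioning once more on $\omega_1$), I obtain $\bar{\bar X}(\omega)=\cL_\omega^\ast\sum_{\nu}P_{\omega\nu}\bar{\bar X}(\nu)=(\LL^\ast\bar{\bar X})(\omega)$, proving~\eqref{eq:LstarFix}. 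This step is essentially bookkeeping and presents no real obstacle.

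The only subtlety is passing from $\PP_+$ to the general law $\PP$ as required by Assumption~(\nameref{STAT}). If $\PP$ is itself stationary we are already done with $\PP_+=\PP$. Otherwise, the hypothesis provides a faithful $P$-invariant $\pi_+>0$, and then
\[
\frac{\d\PP}{\d\PP_+}(\bomega)=\frac{\pi_{\omega_0}}{\pi_{+\omega_0}}
\]
is bounded, so $\PP\ll\PP_+$; any $\PP_+$-almost sure statement transfers to a $\PP$-almost sure one, yielding the claimed $\PP$-a.s.\ convergence. The $L^1(\bO,\PP;\fA)$ convergence follows because $\fA$ is finite-dimensional and the Cesàro averages are uniformly bounded in $\fA$ (each $\cL^\ast_{\omega_j}$ is a contraction), so dominated convergence upgrades the pointwise convergence to $L^1(\PP)$. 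The main (mild) obstacle is verifying that the Beck--Schwartz hypotheses fit our setting exactly and that the conditional-expectation manipulation rigorously identifies $\EE_+[\bar X(\phi(\bomega))\mid\omega_1=\omega]$ with $\sum_\nu P_{\omega\nu}\bar{\bar X}(\nu)$; both are handled by the elementary Markov property of $\PP_+$.
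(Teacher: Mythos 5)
Your proposal is correct and follows essentially the same route as the paper: apply the Beck--Schwartz theorem with $T_{\bomega}=\cL^\ast_{\omega_1}$ and the shift $\phi$ (noting the CPU maps are contractions), read off the covariance relation~\eqref{eq:barXcovariance} directly from that theorem, derive~\eqref{eq:LstarFix} by the same chain of conditional-expectation manipulations using the Markov property, and handle the non-stationary case of~(\nameref{STAT}) via $\PP\ll\PP_+$ with bounded density $\pi_{\omega_0}/\pi_{+\omega_0}$. The only cosmetic differences are that you run Beck--Schwartz under $\PP_+$ from the start and upgrade to $L^1(\bO,\PP;\fA)$ by bounded convergence, whereas the paper treats the stationary $\PP$ first and transfers afterwards --- both are equally valid.
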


As an immediate corollary, we get that 
\be
\lim_{N\rightarrow\infty}\frac{1}{N}\sum_{n=0}^{N-1}\bra\rho_{n}(\bomega),X(\omega_{n+1})\ket
=\bra\rho_{\omega_0},\bar X(\bomega)\ket=\bra\rho_k({\bomega}),\bar X\circ\phi^k(\bomega)\ket
\label{eq:expval}
\ee
$\PP$-almost surely and in $L^1(\bO,\PP)$, for any $k\in\NN$.

A less immediate corollary of Theorem~\ref{thm:pergo} is the following, which only requires spectral properties of
the map $\LL$.

\begin{theo}\label{thm:ascv}
If\/ $\LL$ is irreducible, its unique ESS $R_+$ is faithful and for any $X\in\fA$ one has\,\footnote{When $\LL$ is irreducible, we sometimes omit to mention the dependence of $\bar X(\bomega)$
on $\bomega$ and write its $\PP$-a.s.\;value as $\bar X$.}
$$
\bar X(\bomega)=
\lim_{N\rightarrow+\infty}\frac{1}{N}\sum_{n=0}^{N-1}\cL^\ast_{\omega_1}\cdots\cL^\ast_{\omega_n}X(\omega_{n+1})
=\bra R_+, X\ket\un
$$
for\/ $\PP$-almost every $\bomega\in\bO$ and in $L^1(\bO,\PP;\fA)$. In particular, for any initial state $\rho\in\fA_{\ast+1}$ and any $X\in\fA$,
$$
\lim_{N\rightarrow+\infty}\frac{1}{N}\sum_{n=0}^{N-1}\bra\rho_n(\bomega),X(\omega_{n+1})\ket=\bra R_+, X\ket
$$
holds\/ $\PP$-almost surely and in $L^1(\bO,\PP)$. Moreover, if\/ $\LL$ is primitive, then
$$
\lim_{n\to\infty}\EE_+[\bra\rho_n(\bomega),X(\omega_{n+1})\ket]=\bra R_+, X\ket.
$$
\end{theo}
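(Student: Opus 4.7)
The plan is to extract the conclusions from the spectral information about $\LL$ collected in Section~\ref{sec:ljpm}, combined with Theorem~\ref{thm:pergo} and a martingale upgrade. Since $\LL$ is irreducible on $\fA_\ast$, item~(ii) of Section~\ref{sec:ljpm} gives that $1$ is a simple eigenvalue of $\LL$ with a unique eigenstate $R_+$, and that $R_+$ is faithful. Lemma~\ref{lem:Lspec}(ii) implies that $P$ is irreducible, hence admits a faithful stationary distribution $\pi_+$, so Assumption~(STAT) is in force and Theorem~\ref{thm:pergo} applies. It delivers the $\PP_+$-a.s.\;and $L^1$ existence of $\bar X(\bomega)$, the covariance $\bar X(\bomega)=\cL^\ast_{\omega_1}\bar X(\phi(\bomega))$, and the fact that $\bar{\bar X}(\omega)\coloneqq\EE_+[\bar X(\bomega)\mid\omega_1=\omega]$ is a fixed point of $\LL^\ast$. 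Because $\LL^\ast$ is irreducible and unital (the latter because $\LL$ is trace preserving), item~(ii) of Section~\ref{sec:ljpm} forces the $1$-eigenspace of $\LL^\ast$ to be one-dimensional and spanned by $\un$, so that $\bar{\bar X}=c_X\un$ for some scalar $c_X$.

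The main obstacle is to boost this conditional identity to the pointwise identity $\bar X(\bomega)=c_X\un$ $\PP_+$-a.s. I would iterate the covariance to $\bar X(\bomega)=\cL^\ast_{\omega_1}\cdots\cL^\ast_{\omega_n}\bar X(\phi^n(\bomega))$ and take the conditional expectation with respect to $\cO_n$ under $\PP_+$. Since $\bar X$ depends only on $\omega_1,\omega_2,\ldots$, the Markov property and the stationarity of $\PP_+$ yield $\EE_+[\bar X(\phi^n(\bomega))\mid\cO_n]=\EE_+[\bar X(\phi^n(\bomega))\mid\omega_n]=\EE_+[\bar X(\bomega)\mid\omega_0=\omega_n]=(P\bar{\bar X})(\omega_n)=c_X\un$, using $P\un=\un$. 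Since the prefactor $\cL^\ast_{\omega_1}\cdots\cL^\ast_{\omega_n}$ is $\cO_n$-measurable and each $\cL^\ast_{\omega_i}$ is unital, one obtains $\EE_+[\bar X(\bomega)\mid\cO_n]=c_X\un$ for every $n\in\NN$. Since $\cO_n\uparrow\cO$ and $\bar X\in L^1(\bO,\PP_+;\fA)$, L\'evy's upward theorem forces $\bar X(\bomega)=c_X\un$ $\PP_+$-a.s., and the absolute continuity $\PP\ll\PP_+$ transfers this to $\PP$-a.s.

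To identify the constant, apply item~(ii) of Section~\ref{sec:ljpm} to $\LL$: since $\Tr R_0=1$, the Cesàro averages $\frac{1}{N}\sum_{n=0}^{N-1}\LL^n R_0$ converge in norm to $R_+$; pairing with $X$ and invoking Lemma~\ref{lem:LL} yields $\frac{1}{N}\sum_{n=0}^{N-1}\EE[\bra\rho_n(\bomega),X(\omega_{n+1})\ket]\to\bra R_+,X\ket$. On the other hand, the $L^1$ statement of Theorem~\ref{thm:pergo} combined with $\bar X=c_X\un$ and $\tr\rho_{\omega_0}=1$ makes this same average converge to $\EE[c_X\tr\rho_{\omega_0}]=c_X$, so that $c_X=\bra R_+,X\ket$. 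The $\PP$-a.s.\;and $L^1$ assertions for an arbitrary initial state then follow from the factorization $\bra\rho_n(\bomega),X(\omega_{n+1})\ket=\bra\rho_{\omega_0},\cL^\ast_{\omega_1}\cdots\cL^\ast_{\omega_n}X(\omega_{n+1})\ket$ by taking Cesàro means and using $\tr\rho_{\omega_0}=1$ once more. Finally, under the stronger primitivity assumption, item~(iii) of Section~\ref{sec:ljpm} provides the exponential norm convergence $\LL^n R_0^+\to R_+$ (where $R_0^+$ is the initial extended state built with $\pi_+$); pairing with $X$ delivers $\EE_+[\bra\rho_n(\bomega),X(\omega_{n+1})\ket]=\bra\LL^n R_0^+,X\ket\to\bra R_+,X\ket$, which completes the argument.
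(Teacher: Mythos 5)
Your proposal is correct and follows essentially the same route as the paper's proof: Theorem~\ref{thm:pergo} together with the simplicity of the eigenvalue $1$ of $\LL^\ast$ to get $\bar{\bar X}\in\CC\un$, the iterated covariance relation and martingale (L\'evy upward) convergence to upgrade this to $\bar X(\bomega)=c_X\un$ $\PP_+$-a.s.\ (hence $\PP$-a.s.\ via $\PP\ll\PP_+$), identification of $c_X=\bra R_+,X\ket$ through the Ces\`aro convergence of $(\LL^nR_0)$ paired with~\eqref{eq:tomatos}, and the spectral gap for the primitive case. The only cosmetic difference is that you iterate the covariance one extra step and insert the $P$-average $(P\bar{\bar X})(\omega_n)=c_X\un$, where the paper stops the conditioning at $\cL^\ast_{\omega_1}\cdots\cL^\ast_{\omega_{n-1}}\bar{\bar X}(\omega_n)$ directly.
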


\begin{proof}Since $\LL$ is irreducible, so is $P$ by Lemma~\ref{lem:Lspec}~(ii), and hence Assumption~(\nameref{STAT}) is satisfied and Theorem~\ref{thm:pergo} applies. Moreover, the measure $\PP_+$ is $\phi$-ergodic~\cite[Theorem~1.19]{Wa}. Iterating the covariance relation~\eqref{eq:barXcovariance} and invoking the Markov property lead us to
\be
\begin{split}
\EE_+[\bar X(\bomega)\,|\,\cO_n]
&=\EE_+[\cL^\ast_{\omega_1}\cdots\cL^\ast_{\omega_{n-1}}\bar X\circ\phi^{n-1}(\bomega)\,|\,\cO_n]\\
&=\cL^\ast_{\omega_1}\cdots\cL^\ast_{\omega_{n-1}}\EE_+[\bar X\circ\phi^{n-1}(\bomega)\,|\,\omega_{n}]\\
&=\cL^\ast_{\omega_1}\cdots\cL^\ast_{\omega_{n-1}}\bar{ \bar X}(\omega_{n}).
\end{split}
\label{eq:qbit}
\ee
$\LL$ being irreducible, it has a simple eigenvalue $1$ with left/right eigenvector $\un/R_+$.\footnote{Here, $\un$ denotes the unit of $\fA$.} From $\LL^\ast\un=\un$ and~\eqref{eq:LstarFix} we deduce that $\bar{\bar X}=\bra\lambda,X\ket\un$ for some $\lambda\in\fA_\ast$. Relation~\eqref{eq:qbit} further yields
$$
\EE_+[\bar X(\bomega)\,|\,\cO_n]=\bra\lambda,X\ket\un,
$$
for all $n\in\NN$. Letting $n\to\infty$ we get~\cite[Corollary~5.22]{Br}
$$
\bar X(\bomega)=\lim_{n\to\infty}\EE_+[\bar X(\bomega)\,|\,\cO_n]=\bra\lambda,X\ket\un,
$$
and in particular,
$$
\bra\rho_{\omega_0},\bar X(\bomega)\ket=\bra\lambda,X\ket\bra\rho_{\omega_0},\un\ket=\bra\lambda,X\ket.
$$
Finally, using~\eqref{eq:tomatos} and invoking the mean ergodic theorem gives
$$
\EE[\bra\rho_{\omega_0},\bar X(\bomega)\ket]
=\lim_{N\to\infty}\frac1N\sum_{n=0}^{N-1}\EE[\bra\rho_n(\bomega),X(\omega_{n+1})]
=\lim_{N\to\infty}\frac1N\sum_{n=0}^{N-1}\bra\LL^nR_0,X\ket
=\bra R_+,X\ket,
$$
from which we conclude that $\lambda=R_+$ and $\bar X(\bomega)=\bra R_+,X\ket\un$. The remaining statements are obvious.
\end{proof}

\subsection{Inhomogeneous MRIS in the Adiabatic Limit}\label{sec:adia}

In this section, we consider the possibility of driving our MRIS with an {\sl inhomogeneous} Markov chain, {\sl i.e.,} we allow the transition matrix $P$ to depend on the time step. More specifically, we address the regime of slow variation from one time step to the next, known as the {\sl adiabatic regime}, by applying the results of~\cite{HJPR,HJPR2} to control the expectations $$
\EE[\bra\rho_n(\bomega),X(\omega_{n+1})\ket]
$$ 
when $n$ goes to infinity.

\medskip
Notice that the whole construction that has been made in Section~\ref{sec:Feynman-Kac} does not rely on the homogeneity of the Markov chain, so that, given a sequence $(P^{(n)})_{n\in\NN^\ast}$ of right stochastic matrices, we can replace~\eqref{eq:markovproperty} with
\begin{equation*}
\PP([\omega_0\cdots\omega_n])\coloneqq\pi_{\omega_0}P_{\omega_0\omega_1}^{(1)}P_{\omega_1\omega_2}^{(2)}\cdots P_{\omega_{n-1}\omega_n}^{(n)}.
\end{equation*}
Further, defining the sequence $(\LL_n)_{n\in\NN^\ast}$ by
$$
(\LL_n R)(\omega)\coloneqq\sum_{\nu\in\Omega}P^{(n)}_{\nu\omega}\cL_\nu R(\nu),
$$
one can write, in analogy with Lemma~\ref{lem:LL},
\begin{equation}\label{eq:inhomogenousfk}
\EE[\bra\rho_n(\bomega),X(\omega_{n+1})\ket]=\bra\LL_n\cdots\LL_1R_0,X\ket.
\end{equation}
At such a level of generality, not much can be said of~\eqref{eq:inhomogenousfk}. Hence, we consider an adiabatic regime in which successive transition matrices are obtained by sampling a smooth family of transition matrices defined on $[0,1]$, with smaller and smaller variations between successive transition matrices, following~\cite{HJPR,HJPR2}.

As before, let $(\cL_\omega)_{\omega\in\Omega}$ be a finite family of CPTP maps on $\cA_\ast$. Let $[0,1]\ni s\mapsto P(s)$ be a $C^2$ function taking its values in the set of right stochastic $\Omega\times\Omega$-matrices. Define
$$
(\LL(s)R)(\omega)\coloneqq\sum_{\nu\in\Omega}P_{\nu\omega}(s)\cL_\nu R(\nu)
$$
so that $[0,1]\ni s\mapsto\LL(s)$ is also of class $C^2$. For $\epsilon=1/N$ denote by $\PP_\epsilon$ the Markov probability measure on $\Omega^{N+1}$ defined as
$$
\PP_\epsilon(\bomega=(\omega_0,\ldots,\omega_N))
=\pi_{\omega_0}P_{\omega_0\omega_1}(0)P_{\omega_1\omega_2}(\epsilon)\cdots P_{\omega_{N-1}\omega_N}((N-1)\epsilon).
$$
It was shown in~\cite{HJPR2} that, under spectral hypotheses on the family $(\LL(s))_{s\in[0,1]}$, the product 
$$
\LL_\epsilon^{(n)}=\LL(n\epsilon)\LL((n-1)\epsilon)\cdots\LL(\epsilon)
$$
admits an asymptotics for $n\in\{0,\dots,N\}$, in the adiabatic limit $\epsilon\downarrow0$. More precisely, we have 

\begin{theo}[{\cite[Corollary~3.14]{HJPR2}}]\label{thm:adia}
Assume that, for all $s\in [0,1]$, $\LL(s)$ is primitive. Denote by $R_+(s)$ its unique (and faithful) invariant state, and fix $R\in\fA_{\ast+1}$. Then, there exist $0<l<1$, $0<\epsilon_0<1$ and $C<\infty$,  such that, for all $\epsilon\in]0,\epsilon_0]$ with $1/\epsilon\in\NN$, and all $n \in \{0,\dots,1/\epsilon\}\subset\NN$,
\begin{align*}
\|\LL_\epsilon^{(n)}R-R_+(n\epsilon)\|\leq C\left(\frac\epsilon{(1-l)}+l^{n+1}\right).
\end{align*}
In particular, for any $X\in\fA$, 
\begin{equation*}
\EE_\epsilon[\bra\rho_n(\bomega),X(\omega_{n+1})\ket]=\bra R_+(n\epsilon),X\ket+\|X\|\,O(\epsilon/(1-l)+l^{n+1}).
\end{equation*}
If, moreover, we write $n=t/\epsilon$, with $0<t<1$, then
\begin{equation*}
\EE_\epsilon[\bra\rho_{\frac t\epsilon}(\bomega),X(\omega_{\frac{t+\epsilon}\epsilon})\ket]=\bra R_+(t),X\ket+\|X\|\,O(\epsilon).
\end{equation*}
\end{theo}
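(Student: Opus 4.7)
The plan is to exploit a uniform spectral gap for the family $(\LL(s))_{s\in[0,1]}$ and reduce the problem to a Duhamel-type recursion for the error $E_n\coloneqq\LL_\epsilon^{(n)}R-R_+(n\epsilon)$. Because each $\LL(s)$ is primitive, property~(iii) of Section~\ref{sec:ljpm} gives that the eigenvalue $1$ of $\LL(s)$ is simple and isolated. The $C^2$ dependence $s\mapsto\LL(s)$, together with compactness of $[0,1]$, upgrades this to a uniform gap: there exists $l_0\in(0,1)$ with $\spec(\LL(s))\setminus\{1\}\subset\{z\in\CC:|z|\le l_0\}$ for every $s\in[0,1]$. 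The associated Riesz projector
\[
\Pi(s)\coloneqq\frac1{2\pi\i}\oint_\gamma(z-\LL(s))^{-1}\,\d z,
\]
integrated over a fixed small circle $\gamma$ around $1$, is rank-one with $\Pi(s)X=(\Tr X)R_+(s)$, and inherits the $C^2$ regularity of $\LL(\cdot)$; in particular $s\mapsto R_+(s)$ is $C^2$.

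Next I would exploit trace preservation: since $\Tr R=\Tr R_+(n\epsilon)=1$ and each $\LL(s)$ preserves the trace, $E_n$ lies in $\Ker\Tr=\Ran(\Id-\Pi(s))$ for every $n$ and every $s$. Using the fixed-point identity $\LL(n\epsilon)R_+(n\epsilon)=R_+(n\epsilon)$ I get the Duhamel-type recursion
\[
E_n=\LL(n\epsilon)E_{n-1}+\LL(n\epsilon)\bigl(R_+((n-1)\epsilon)-R_+(n\epsilon)\bigr).
\]
The $C^1$ bound on $R_+(\cdot)$ gives $\|R_+((n-1)\epsilon)-R_+(n\epsilon)\|\le C_1\epsilon$, and both $E_{n-1}$ and $R_+((n-1)\epsilon)-R_+(n\epsilon)$ lie in $\Ker\Tr$. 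Granted a uniform contraction $\|\LL(s)Y\|\le l\,\|Y\|$ for $Y\in\Ker\Tr$ and some $l\in(l_0,1)$ (discussed below), this yields the scalar recursion $\|E_n\|\le l\,\|E_{n-1}\|+C_2\epsilon$, whence by iteration $\|E_n\|\le l^n\|E_0\|+C_2\epsilon/(1-l)$. Since $\|E_0\|\le 2$, adjusting $C$ absorbs $l^n$ into $l^{n+1}$ and produces the stated bound. The statements for $\EE_\epsilon[\bra\rho_n(\bomega),X(\omega_{n+1})\ket]$ follow at once from the Feynman--Kac identity~\eqref{eq:inhomogenousfk}, and the case $n=t/\epsilon$ uses that $l^{t/\epsilon+1}$ is $O(\epsilon^k)$ for every $k>0$.

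The main obstacle is the uniform per-step contraction on $\Ker\Tr$: the spectral radius $l_0$ of $\LL(s)|_{\Ker\Tr}$ does not by itself bound the operator norm of a single step. Two standard fixes, both used in~\cite{HJPR,HJPR2}, are available. One can construct a family of equivalent norms $\|\cdot\|_s$ on $\fA_\ast$, $C^1$ in $s$, in which $\LL(s)|_{\Ker\Tr}$ is a strict contraction by factor $l$; or one can pass to $n_0$-step blocks, using $\sup_s\|\LL(s)^{n_0}|_{\Ker\Tr}\|\le l^{n_0}$ for $n_0$ large enough, and control the drift of $s$ within each block via Lipschitz continuity $\|\LL(s)-\LL(s')\|=O(|s-s'|)$, so that replacing $\LL(k\epsilon)$ by $\LL((k-j)\epsilon)$ inside a block of size $n_0$ costs only $O(n_0\epsilon)$. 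Either implementation delivers the required uniform contraction, after which the argument is essentially algebraic.
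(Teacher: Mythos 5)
Your proposal is correct, but first a point of comparison: the paper contains no internal proof of Theorem~\ref{thm:adia} --- the statement is imported verbatim as Corollary~3.14 of~\cite{HJPR2} --- so what you have produced is a self-contained reconstruction, at first order, of the argument underlying the cited reference. Your reconstruction holds up. The recursion
$E_n=\LL(n\epsilon)E_{n-1}+\LL(n\epsilon)\bigl(R_+((n-1)\epsilon)-R_+(n\epsilon)\bigr)$
is exact (using $\LL_\epsilon^{(n)}=\LL(n\epsilon)\LL_\epsilon^{(n-1)}$ and the fixed-point identity), both summands indeed lie in $\Ker\Tr$ by trace preservation and $\Tr R=\Tr R_+(s)=1$, the uniform gap $l_0<1$ follows from primitivity of each $\LL(s)$ (simplicity of the gapped eigenvalue $1$, Section~\ref{sec:ljpm}~(iii)) together with continuity of eigenvalues and compactness of $[0,1]$, and $R_+(s)=\Pi(s)R$ inherits the $C^2$ regularity of the resolvent. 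You also correctly isolate the one genuine obstacle --- a spectral radius bound does not control a single step --- and both of your fixes are viable. For the blocking variant, the detail worth making explicit is that CPTP maps are trace-norm contractions, which gives the monotonicity $\|\LL(s)^n|_{\Ker\Tr}\|\le\|\LL(s)^{n_0}|_{\Ker\Tr}\|$ for $n\ge n_0$ (needed to choose a single $n_0$ by compactness) and handles the leftover $n\bmod n_0$ steps at constant cost. Alternatively, both fixes follow in one stroke from the uniform bound
$$
\|\LL(s)^k(\Id-\Pi(s))\|=\left\|\frac1{2\pi\i}\oint_{|z|=\lambda}z^k(z-\LL(s))^{-1}\,\d z\right\|\le M\lambda^k,
\qquad \lambda\in(l_0,1),
$$
the resolvent being bounded uniformly in $s$ on the circle by continuity and compactness. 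Two trivial quibbles: replacing all factors within a block of size $n_0$ by the frozen map costs $O(n_0^2\epsilon)$, not $O(n_0\epsilon)$ (harmless, since $n_0$ is a fixed constant), and the $C^1$ regularity of the adapted norms $\|\cdot\|_s$ is not needed --- uniform equivalence suffices for the first-order bound. What the machinery of~\cite{HJPR2} buys beyond your argument is higher-order adiabatic expansions and the extension, mentioned in the Remark following the theorem, to merely irreducible $\LL(s)$: there the peripheral spectrum sits on the unit circle, $\Ker\Tr$ is no longer uniformly contracted, and your scheme degenerates, whereas the spectral-projector expansion of the reference still yields an (more complicated) asymptotic state.
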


\medskip\noindent{\bf Remark.} A similar statement holds when the $\LL(s)$'s are irreducible, with a more complicated expression for the asymptotic state, in terms of all eigenvectors corresponding to the peripheral eigenvalues.

\section{Entropic Fluctuations and Linear Response Theory}\label{sec:thermo}

This section is devoted to applications to linear response theory and entropic fluctuations within our Markovian framework, generalizing~\cite{BoBr} which addresses these aspects for {\sl periodic} and {\sl uniform i.i.d.} repeated interaction systems. 

Within our setup, a periodic RIS (called cyclic in~\cite{BoBr}) is a MRIS such that $P$ is the permutation matrix associated to a cycle
$(12\cdots m)\in S_m$, so that $\bomega$ is the repetition of the word $12\cdots m$ with probability $1$. A i.i.d.\;RIS corresponds to the case $P_{\nu\omega}=\pi_\omega$ for all $\omega\in\Omega$ where $\pi$ is a probability vector. In this case $\bomega$ is a sequence of independent random variables~\cite{BJM}. The
special case of uniform distribution is simply called a Random RIS in~\cite{BoBr}.

In both cases, the reduced dynamics of the small system yields a discrete time quantum dynamical semigroup on $\cA_\ast=\cB^1(\cH_\cS)$, {with $\dim \cH_\cS<\infty$}. Indeed, in the periodic case, 
$$
\rho_{nm}(\bomega)=(\cL_m\cdots\cL_1)^n\rho
$$
while, in the random case,
$$
\EE[\rho_n(\bomega)]=\left(\frac1{|\Omega|}\sum_{\omega\in\Omega}\cL_\omega\right)^n\rho.
$$
The results of~\cite{BoBr} are largely based on this semigroup structure. We shall invoke Lemma~\ref{lem:LL}, to extend these results to the MRIS framework. 

\medskip\noindent{\bf Remark.} Continuous time semigroups of CPTP maps, often also called {\sl quantum Markov semigroups,} are generated by Lindbladians. As dynamics of a small quantum system $\cS$ interacting with several extended thermal reservoirs, they emerge in the van Hove weak coupling limit~\cite{davies1974markovian,davies1976markovian,davies1975markovian,davies1976quantum}. 
The interested reader should consult~\cite{LS,JPW} for discussions of the nonequilibrium thermodynamics of such systems.

\subsection{Entropy Balance}

Returning to the concrete MRIS with CPTP and dual CPU maps
$$
\cL_\omega\rho\coloneqq\tr_{\cH_{\cE_\omega}}\left(U_\omega(\rho\otimes\rho_{\cE_\omega})U_\omega^\ast\right),\qquad
\cL_\omega^\ast X=\tr_{\cH_{\cE_\omega}}\left(U_\omega^\ast(X\otimes\un)U_\omega(\un\otimes\rho_{\cE_\omega})\right),
$$
where $U_\omega\coloneqq\e^{-\i\tau_\omega(H_\cS+H_{\cE_\omega}+V_\omega)}$, we shall now assume that the reservoirs are in thermal equilibrium,
more precisely

\begin{hypo}[{{\bf KMS}}]\label{KMS}
There is $\bb=(\beta_\omega)_{\omega\in\Omega}\in\RR_+^\Omega$ such that, for all $\omega\in\Omega$,
\begin{equation*}
\rho_{\cE_\omega}=\e^{-\beta_\omega(H_{\cE_\omega}-F_\omega)},
\end{equation*}
where
$$
F_\omega\coloneqq-\frac1\beta_\omega\log\tr\left(\e^{-\beta_\omega H_{\cE_\omega}}\right)
$$
is the free energy of a probe from reservoir $\cR_\omega$.
\end{hypo}

{\noindent\bf Remark.} Given $\bb\in\RR_+^\Omega$ and assuming each $\rho_{\cE_\omega}$ to be faithful, it is of course possible to redefine the probe Hamiltonians in such a way that~(\nameref{KMS}) holds, at the cost of absorbing the change of $H_{\cE_\omega}$ in the interaction $V_\omega$. However, thermodynamic considerations require 
the propagators $U_\omega$ to be independent of $\bb$, so that such circumstances are excluded in the following.  

\medskip
The energy lost by the system $\cS$ during the $n+1^\mathrm{th}$ interaction, which we interpret as the amount of heat dumped in the reservoir $\cR_{\omega_{n+1}}$, is
$$
\Delta Q_{n+1}(\bomega)\coloneqq\tr\left(\rho_n(\bomega)\otimes\rho_{\cE_{\omega_{n+1}}}
(U_{\omega_{n+1}}^\ast H_{\cE_{\omega_{n+1}}} U_{\omega_{n+1}}-H_{\cE_{\omega_{n+1}}})\right)
=-\bra\rho_n(\bomega),J(\omega_{n+1})\ket
$$
with
$$
J(\omega)\coloneqq\tr_{\cH_{\cE_\omega}}(U_\omega^\ast[U_\omega,H_{\cE_\omega}](\un\otimes\rho_{\cE_\omega})).
$$
Further setting
\be
J_\nu:\Omega\ni\omega\mapsto\delta_{\nu\omega}J(\nu),
\label{eq:springoutthere}
\ee
yields an extended observable $J_\nu\in\fA$ describing the energy transferred from reservoir $\cR_\nu$ to the system $\cS$ during a single interaction. 

Accordingly, the time-averaged quantum mechanical expectation values of the heat extracted from the reservoir $\cR_\nu$ during a single interaction is given by
\begin{equation}\label{eq:asympenergy}
\lim_{N\rightarrow\infty}\frac{1}{N}\sum_{n=0}^{N-1}\bra\rho_n(\bomega),J_\nu(\omega_{n+1})\ket.
\end{equation}
If $\LL$ is irreducible, then Theorem~\ref{thm:ascv} implies that this limit exists for $\PP$-almost every $\bomega\in\bO$, and coincides with the ESS ensemble average
\be
\bra R_+,J_\nu\ket=\EE_+[\bra\rho_+(\omega_0),J_\nu(\omega_1)\ket].
\label{eq:greenwater}
\ee
The von Neumann entropy~\cite[Section~3.3]{Pe08} of the system $\cS$ after completion of the $n^\mathrm{th}$ interaction is
$$
S(\rho_n(\bomega))\coloneqq-\bra\rho_n(\bomega),\log\rho_n(\bomega)\ket,
$$
so that, during the $n+1^\mathrm{th}$ interaction, the system entropy decreases by
$$
\Delta S_{n+1}(\bomega)\coloneqq S(\rho_n(\bomega))-S(\rho_{n+1}(\bomega)).
$$
Recall that the relative entropy of a state $\mu\in\cA_\ast$ relatively to another state $\rho\in\cA_\ast$ is 
$$
\Ent(\mu\vert\rho)\coloneqq\begin{cases}
	\tr(\mu(\log\mu-\log\rho))&\text{if }\Ran\mu\subset\Ran\rho;\\
	+\infty&\text{otherwise},
\end{cases} 
$$ 
and that $\Ent(\mu\vert\rho)\geq 0$ with equality iff $\mu=\rho$ (see~\cite[Section~3.4]{Pe08}). Setting
$$
\ep_n(\bomega)
\coloneqq\Ent\left(U_{\omega_{n+1}}\left(\rho_n(\bomega)\otimes\rho_{\cE_{\omega_{n+1}}}\right)U_{\omega_{n+1}}^\ast
\,\bigg|\,\rho_{n+1}(\bomega) \otimes \rho_{\cE_{\omega_{n+1}}}\right),
$$
an elementary calculation yields
\begin{align*}
\ep_n(\bomega)=&-S_n(\bomega)-\beta_{\omega_{n+1}}\tr\left(\left(\rho_n(\bomega)\otimes\rho_{\cE_{\omega_{n+1}}}\right)(H_{\cE_{\omega_{n+1}}}-F_{\omega_{n+1}})\right)\\
&+S_{n+1}(\bomega)+\beta_{\omega_{n+1}}\tr\left(
\left(\rho_n(\bomega)\otimes\rho_{\cE_{\omega_{n+1}}}\right)U_{\omega_{n+1}}^\ast(H_{\cE_{\omega_{n+1}}}-F_{\omega_{n+1}})U_{\omega_{n+1}}\right),
\end{align*}
which can be rewritten as the one-step entropy balance relation
\be
\Delta S_{n+1}(\bomega)+\ep_n(\bomega)=\beta_{\omega_{n+1}}\Delta Q_{n+1}(\bomega).
\label{eq:balalaika}
\ee
Identifying the right-hand side of this identity with the amount of entropy dissipated in the
reservoir $\cR_{\omega_{n+1}}$, $\ep_n(\bomega)$ can be interpreted as the entropy produced by the
interaction process. The inequality $\ep_n(\bomega)\ge0$ thus becomes the expression of the $2^\mathrm{nd}$--law of thermodynamics, and yields Landauer's lower bound
$$
\Delta Q_{n+1}(\bomega)\ge\frac{\Delta S_{n+1}(\bomega)}{\beta_{\omega_{n+1}}},
$$ 
on the energetic cost of a reduction of the system entropy (see~\cite{RW14,JPlan,HJPR,HJPR2} for more details and discussions). Summing over $n$ we get
$$
\frac{S_0(\bomega)-S_N(\bomega)}{N}+\frac1N\sum_{n=0}^{N-1}\ep_n(\bomega)
=-\sum_{\nu\in\Omega}\beta_\nu\frac1N\sum_{n=0}^{N-1}\bra\rho_n(\bomega),J_\nu(\omega_{n+1})\ket.
$$
Observing that $0\le S_n(\bomega)\le\log\dim\cH_\cS$ and recalling~\eqref{eq:expval}, we deduce that whenever the limit~\eqref{eq:asympenergy}
exists, the following expression of the time-averaged entropy production
\be
\overline{\ep}(\bomega)\coloneqq\lim_{N\to\infty}\frac1N\sum_{n=0}^{N-1}\ep_n(\bomega)
=-\sum_{\nu\in\Omega}\beta_\nu\bra\rho_{\omega_0},\bar J_\nu(\bomega)\ket\ge0
\label{eq:barsigmadef}
\ee
holds, where $\bar J_\nu$ is the ergodic average~\eqref{eq:ergav}. This applies, in particular, under Assumption~(\nameref{STAT}), and expresses then the $2^\mathrm{nd}$--law in the context of steady-state thermodynamics. If, moreover, $\LL$ is irreducible,
then~\eqref{eq:greenwater} yields
\be
\overline{\ep}(\bomega)=-\sum_{\nu\in\Omega}\beta_\nu\bra R_+,J_\nu\ket
\label{eq:epirred}
\ee
$\PP$-a.s.

\medskip\noindent{\bf Remark.} Assuming~(\nameref{STAT}) but replacing Assumption~(\nameref{KMS}) by the faithfulness of the probe states $\rho_{\cE_\omega}$, setting\footnote{Recall the convention made after~\eqref{eq:prodstuff}}
$$
J_S(\omega)\coloneqq\tr_{\cH_{\cE_\omega}}\left(U_\omega^\ast[S_{\cE_\omega},U_\omega](\un\otimes\rho_{\cE_\omega})\right),
\qquad
S_{\cE_\omega}\coloneqq-\log\rho_{\cE_\omega},
$$
and repeating the previous calculation leads to the entropy balance equation
\be
\Delta S_{n+1}(\bomega)+\ep_n(\bomega)=\bra\rho_n(\bomega),J_S(\omega_{n+1})\ket
\label{eq:balalaika2}
\ee
with the time-average
$$
\overline{\ep}(\bomega)=\lim_{N\to\infty}\frac1N\sum_{n=0}^{N-1}\ep_n(\bomega)
=\bra\rho_{\omega_0},\overline{J}_S(\bomega)\ket.
$$
The irreducibility of $\LL$ further leads to the $\PP$-a.s. identity
\be
\overline{\ep}(\bomega)=\bra R_+,J_S\ket.
\label{eq:lesperles}
\ee
The extended observable $J_S$ describes the entropy dumped into the reservoirs during a
single interaction.

\medskip
The vanishing of entropy production is a signature of thermodynamic equilibrium, as such, it will play a central role in our discussion of linear response in Section~\ref{sec:linearresponse}. Our next result is a quite general necessary and sufficient condition for
the vanishing of entropy production in MRIS (see Section~\ref{sect:proofch3ne} for the proof).

\begin{prop}\label{prop:ch3ne}
Assume that the probe states $\rho_{\cE_\omega}$ are faithful and that\/ $\LL$ is irreducible. Then, the time averaged entropy production~\eqref{eq:lesperles} vanishes\/ $\PP$-a.s.\;iff 
the family of states $(\rho_{+\omega})_{\omega\in\Omega}$ associated with the unique ESS $R_+$ in~\eqref{eq:R+magic} satisfies
\be
U_\omega(\rho_{+\nu}\otimes\rho_{\cE_\omega})U_\omega^\ast=\rho_{+\omega}\otimes\rho_{\cE_\omega}
\label{eq:lespecheurs}
\ee
for all pairs $(\nu,\omega)\in\Omega\times\Omega$ such that $P_{\nu\omega}>0$. In this case, further assuming~{\rm (\nameref{KMS})}, the entropy balance equation
\be
S(\rho_{+\omega_{n+1}})-S(\rho_{+\omega_n})=\beta_{\omega_{n+1}}\bra\rho_{+\omega_n},J(\omega_{n+1})\ket=0.
\label{eq:Geneva}
\ee
holds\/ $\PP_+$-a.s., and in particular,
\be
\bar J_\nu=\bra R_+,J_\nu\ket=0
\label{eq:stillraining}
\ee
for all $\nu\in\Omega$.
\end{prop}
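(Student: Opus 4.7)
\bigskip
\noindent\textit{Proof proposal.}
The plan is to prove the equivalence in two steps and then read off the identities \eqref{eq:Geneva}--\eqref{eq:stillraining} from \eqref{eq:lespecheurs} and Assumption~(\nameref{KMS}).

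For the sufficiency $(\Leftarrow)$, I would take the partial trace of \eqref{eq:lespecheurs} over $\cH_{\cE_\omega}$ to get $\cL_\omega\rho_{+\nu}=\rho_{+\omega}$ whenever $P_{\nu\omega}>0$. By Lemma~\ref{lem:Lspec}~(ii), $P$ is itself irreducible, so $\PP_+$-almost every sample path uses only such positive transitions; an induction on $n$ then yields $\rho_{+n}(\bomega)=\rho_{+\omega_n}$ $\PP_+$-a.s. Feeding this back into $\ep_n(\bomega)$ and using \eqref{eq:lespecheurs} collapses each term to the relative entropy of $\rho_{+\omega_{n+1}}\otimes\rho_{\cE_{\omega_{n+1}}}$ with itself, so $\ep_n(\bomega)=0$ and $\overline{\ep}(\bomega)=0$ $\PP_+$-a.s. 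Since $\overline{\ep}=\bra R_+,J_S\ket$ is deterministic by Theorem~\ref{thm:ascv}, the identity extends to $\PP$-a.s.

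For the necessity $(\Rightarrow)$, assume $\overline{\ep}(\bomega)=0$, i.e., $\bra R_+,J_S\ket=0$. Summing the one-step balance \eqref{eq:balalaika2} from $n=0$ to $N-1$ under $\EE_+$ and using stationarity $\EE_+[\bra\rho_{+n}(\bomega),J_S(\omega_{n+1})\ket]=\bra\LL^nR_+,J_S\ket=0$ together with $0\le S(\rho_{+n})\le\log\dim\cH_\cS$ yields $\sum_{n\ge 0}\EE_+[\ep_n(\bomega)]\le\log\dim\cH_\cS<\infty$. Next, the Markov property combined with the telescoping underlying \eqref{eq:R+Form} gives $\EE_+[\rho_{+n}(\bomega)\mid\omega_n=\nu,\omega_{n+1}=\omega]=\rho_{+\nu}$, and joint convexity of the relative entropy then provides an $n$-independent lower bound
$$
\EE_+[\ep_n(\bomega)]\ge\sum_{\nu,\omega\in\Omega}\pi_{+\nu}P_{\nu\omega}\,\Ent\!\left(U_\omega(\rho_{+\nu}\otimes\rho_{\cE_\omega})U_\omega^\ast\,\big|\,(\cL_\omega\rho_{+\nu})\otimes\rho_{\cE_\omega}\right).
$$
Summability forces this non-negative constant to vanish, giving the decoupling $U_\omega(\rho_{+\nu}\otimes\rho_{\cE_\omega})U_\omega^\ast=(\cL_\omega\rho_{+\nu})\otimes\rho_{\cE_\omega}$ for every pair with $P_{\nu\omega}>0$. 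Upgrading the right-hand side from $\cL_\omega\rho_{+\nu}$ to $\rho_{+\omega}$, so as to recover \eqref{eq:lespecheurs} as stated, is the main technical obstacle. To do this I would iterate the Jensen bound under finer conditioning on blocks $(\omega_{n-k},\ldots,\omega_{n+1})$, showing that the analogous decoupling holds for every iterate $\cL_{\omega_k}\cdots\cL_{\omega_1}\rho_{+\omega_0}$ along a positively-weighted path, and then combine the barycentric identity \eqref{eq:rho+magic} with the fact that $\sigma\mapsto\tr_{\cH_{\cE_\omega}}(U_\omega^\ast(\sigma\otimes\rho_{\cE_\omega})U_\omega)$ acts as a left inverse of $\cL_\omega$ on this family to force $\nu\mapsto\cL_\omega\rho_{+\nu}$ to be constant on $\{\nu:P_{\nu\omega}>0\}$ and hence equal to $\rho_{+\omega}$.

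Once \eqref{eq:lespecheurs} is in hand, unitarity of $U_\omega$ and the additivity $S(\rho\otimes\sigma)=S(\rho)+S(\sigma)$ give $S(\rho_{+\nu})=S(\rho_{+\omega})$ for every $P_{\nu\omega}>0$, which is the first equality of \eqref{eq:Geneva} $\PP_+$-a.s. Substituting $\Delta S_{n+1}(\bomega)=0$ and $\ep_n(\bomega)=0$ into the one-step balance \eqref{eq:balalaika} then forces $\beta_{\omega_{n+1}}\bra\rho_{+\omega_n},J(\omega_{n+1})\ket=0$ under (\nameref{KMS}), completing \eqref{eq:Geneva}. Lastly, expanding $\bra R_+,J_\nu\ket=\bra R_+(\nu),J(\nu)\ket=\sum_{\mu\in\Omega}P_{\mu\nu}\pi_{+\mu}\bra\rho_{+\mu},J(\nu)\ket$ via \eqref{eq:rho+magic} and applying \eqref{eq:Geneva} term-by-term yields \eqref{eq:stillraining}.
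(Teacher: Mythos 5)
Your sufficiency direction and the concluding identities \eqref{eq:Geneva}--\eqref{eq:stillraining} are essentially sound (the paper does the $(\Leftarrow)$ step by a one-line trace computation using $R_+(\omega)=\sum_\nu\pi_{+\nu}P_{\nu\omega}\rho_{+\nu}$, whereas you argue pathwise; both work), and your derivation of the weak decoupling in the necessity direction is a nice, correct variant: the summability $\sum_n\EE_+[\ep_n]<\infty$, the identity $\EE_+[\rho_{+n}(\bomega)\mid\omega_n=\nu,\omega_{n+1}=\omega]=\rho_{+\nu}$, and the conditional Jensen bound via joint convexity of relative entropy are all valid. However, the necessity direction has a genuine gap exactly where you flag it: you only obtain $U_\omega(\rho_{+\nu}\otimes\rho_{\cE_\omega})U_\omega^\ast=(\cL_\omega\rho_{+\nu})\otimes\rho_{\cE_\omega}$, and your sketched upgrade to \eqref{eq:lespecheurs} does not go through. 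The ``left inverse'' observation $\tr_{\cH_{\cE_\omega}}(U_\omega^\ast((\cL_\omega\rho_{+\nu})\otimes\rho_{\cE_\omega})U_\omega)=\rho_{+\nu}$ runs in the wrong direction --- it shows $\nu\mapsto\cL_\omega\rho_{+\nu}$ is \emph{injective} on $\{\nu\mid P_{\nu\omega}>0\}$, the opposite of the constancy you need --- and applying it to the barycentric identity \eqref{eq:rho+magic} merely produces another linear identity. Iterating Jensen over blocks yields decoupling along longer positively-weighted paths, but again gives no mechanism forcing constancy. Worse, the gap cannot be closed by squeezing more out of your relative-entropy lower bound: the weak decoupling alone already makes every term of that bound vanish, so the bound is blind to the difference between $\cL_\omega\rho_{+\nu}$ and $\rho_{+\omega}$.

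The missing idea --- and the heart of the paper's proof --- is \emph{strict concavity of the von Neumann entropy}, fed by the reversed transition matrix $Q_{\nu\omega}=\pi_{+\omega}P_{\omega\nu}\pi_{+\nu}^{-1}$. The paper splits $0=\EE_+[\ep_0]$ via the balance \eqref{eq:balalaika2} into the relative-entropy term \emph{plus} $\EE_+[S(\rho_{+\omega_0})-S(\cL_{\omega_1}\rho_{+\omega_0})]$, shows the latter is non-negative using concavity of $S$ together with $\cL\cQ^\ast\rho_+=\rho_+$ (see \eqref{eq:ventilo}), concludes both terms vanish separately, and then invokes \emph{strict} concavity to force each $\cL_\omega\rho_{+\nu}$ with $Q_{\omega\nu}>0$ to equal the barycenter $\sum_\mu Q_{\omega\mu}\cL_\omega\rho_{+\mu}=\rho_{+\omega}$. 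If you want to salvage your route, the same ingredient closes your gap quickly: your weak decoupling, unitary invariance of $S$, and additivity $S(\rho\otimes\sigma)=S(\rho)+S(\sigma)$ give $S(\cL_\omega\rho_{+\nu})=S(\rho_{+\nu})$; combining with $\rho_{+\omega}=\sum_\nu Q_{\omega\nu}\cL_\omega\rho_{+\nu}$ and concavity gives $S(\rho_{+\omega})\ge\sum_\nu Q_{\omega\nu}S(\rho_{+\nu})$, and since $\pi_+Q=\pi_+$ the $\pi_+$-averages of both sides coincide, so equality holds for every $\omega$ ($\pi_+$ being faithful); strict concavity then yields $\cL_\omega\rho_{+\nu}=\rho_{+\omega}$ whenever $P_{\nu\omega}>0$. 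Without some such input, your proof of the $(\Rightarrow)$ direction is incomplete.
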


\noindent{\bf Remarks.} 1. Whenever $P_{\nu\omega}>0$ and $P_{\nu'\omega}>0$, it follows from~\eqref{eq:lespecheurs} that $\rho_{+\nu}=\rho_{+\nu'}$.

\medskip\noindent 2. The very special case where all reservoirs are identical in the sense that all the
CPTP maps $\cL_\omega$ coincide with some $\cL$ will be of interest 
in Section~\ref{sec:linearresponse}. One easily checks that in this circumstance the mean state 
$$
\bar\rho_+\coloneqq\sum_{\omega\in\Omega}R_+(\omega)=\sum_{\omega\in\Omega}\pi_{+\omega}\rho_{+\omega}
$$
satisfies $\cL\bar\rho_+=\bar\rho_+$. Setting $R(\omega)=\pi_{+\omega}\bar\rho_+$, we derive
$$
(\LL R)(\omega)=\sum_{\nu\in\Omega}P_{\nu\omega}\cL R(\nu)
=\sum_{\nu\in\Omega}\pi_{+\nu}P_{\nu\omega}\cL\bar\rho_+
=\sum_{\nu\in\Omega}\pi_{+\nu}P_{\nu\omega}\bar\rho_+
=\pi_{+\omega}\bar\rho_+=R(\omega).
$$
If $\LL$ is irreducible, then we can conclude that $R=R_+$, and~\eqref{eq:R+magic} yields
$$
\pi_{+\omega}\rho_{+\omega}=\cL R(\omega)=\pi_{+\omega}\cL\bar\rho_+=\pi_{+\omega}\bar\rho_+,
$$
so that the states $\rho_{+\omega}$ all coincide with $\bar\rho_+$.

\subsection{Full Statistics of Entropy}

While the entropy balance equation~\eqref{eq:balalaika} and the related Landauer bound are clearly supporting the
proposed interpretation of the quantum expectation value of the observable $J_\nu$, as defined in~\eqref{eq:springoutthere}, the fact that a real measurement of the energy transfer between the system $\cS$ and the
reservoir $\cR_\nu$ is unlikely to proceed through an ``instantaneous'' measurement of $J_\nu$ casts some doubts on the
physical meaning of its higher moments, and more generally of its spectral measure.
This issue is not specific to RIS but is relevant to more general open quantum systems (see~\cite{TLH07} and~\cite[Section~5.10]{JOPP}). In this section, we adopt an operational point of view and consider a two-time measurement protocol of the {\sl entropy observables}
\be
S_{\cE_\omega}\coloneqq-\log\rho_{\cE_\omega}=\beta_\omega(H_{\cE_\omega}-F_\omega).
\label{eq:wednesday}
\ee
Set $\Sigma\coloneqq\cup_{\omega\in\Omega}\spec(S_{\cE_\omega})$ and for $s\in\Sigma$ let $\Pi_s(\omega)$
denote the spectral projection of $S_{\cE_\omega}$ for the eigenvalue $s$.\footnote{The convention here is that $\Pi_s(\omega)=0$ whenever $s\not\in\spec(S_{\cE_\omega})$.} 

Let the state of the joint system just before the coupling of $\cS$ to $\cE_{\omega_{n+1}}$ be the product state $\rho=\rho_\cS\otimes\rho_{\cE_{\omega_{n+1}}}$. A first measurement of $S_{\cE_{\omega_{n+1}}}$ before the $n+1^\mathrm{th}$ interaction  thus yields the value $s\in\Sigma$ with probability 
$$
p(s)=\tr\left(\rho\Pi_s(\omega_{n+1})\right)
$$
and leaves the joint system in the state
$$
\rho_{|s}=
\frac{\Pi_s(\omega_{n+1})\rho\Pi_s(\omega_{n+1})}{\tr\left(\rho\Pi_s(\omega_{n+1})\right)}.
$$
Once the interaction between $\cE_{\omega_{n+1}}$ and $\cS$ is complete, the state of the joint system has evolved to
$$
\rho'_{|s}=U_{\omega_{n+1}}\rho_{|s}U_{\omega_{n+1}}^\ast=
\frac{U_{\omega_{n+1}}\Pi_s(\omega_{n+1})\rho\Pi_s(\omega_{n+1})U_{\omega_{n+1}}^\ast}{\tr\left(\rho\Pi_s(\omega_{n+1})\right)}
$$
and the probability for the outcome of a second measurement of $S_{\cE_{\omega_{n+1}}}$ to be $s'\in\Sigma$ is
$$
p(s'\,|\,s)=\tr(\rho'_{|s}\Pi_{s'}(\omega_{n+1}))=
\frac{\tr\left(U_{\omega_{n+1}}\Pi_s(\omega_{n+1})\rho\Pi_s(\omega_{n+1})U_{\omega_{n+1}}^\ast\Pi_{s'}(\omega_{n+1})\right)}{\tr\left(\rho\Pi_s(\omega_{n+1})\right)},
$$
the second measurement reducing the state of the joint system to
$$
\rho_{|ss'}=\frac{\Pi_{s'}(\omega_{n+1})U_{\omega_{n+1}}\Pi_s(\omega_{n+1})\rho\Pi_s(\omega_{n+1})U_{\omega_{n+1}}^\ast\Pi_{s'}(\omega_{n+1})}{\tr\left(U_{\omega_{n+1}}\Pi_s(\omega_{n+1})\rho\Pi_s(\omega_{n+1})U_{\omega_{n+1}}^\ast\Pi_{s'}(\omega_{n+1})\right)}.
$$
Applying Bayes' rule, we conclude that the joint probability for the two successive measurements of $S_{\cE_{\omega_{n+1}}}$ to have the outcome $\xi=(s,s')\in\Sigma\times\Sigma$ is
$$
p(\xi)=p(s'\,|\,s)p(s)
=\tr\left(U_{\omega_{n+1}}\Pi_s(\omega_{n+1})\rho\Pi_s(\omega_{n+1})U_{\omega_{n+1}}^\ast\Pi_{s'}(\omega_{n+1})\right)
=\tr_{\cH_\cS}(\cL_{\omega_{n+1},\xi}\rho_\cS),
$$
where $\cL_{\omega,\xi}$ is the CP map defined by
\be
\cL_{\omega,\xi}\rho\coloneqq\e^{-s}
\tr_{\cH_{\cE_\omega}}((\un\otimes\Pi_{s'}(\omega))U_\omega(\rho\otimes\Pi_{s}(\omega))U_\omega^\ast).
\label{eq:twotimeentropy}
\ee
Moreover, the state of the system $\cS$ after the two measurements is
$$
\rho_\cS'=\tr_{\cH_{\cE_{\omega_{n+1}}}}(\rho'_{|ss'})
=\frac{\cL_{\omega_{n+1},\xi}\rho_\cS}{\tr_{\cH_\cS}(\cL_{\omega_{n+1},\xi}\rho_\cS)}.
$$
Collecting the Markov sample path $\omega_1,\omega_2,\ldots$ and the sequence $\xi_1,\xi_2,\ldots$ of measurement outcomes into a single {\sl extended quantum trajectory} $\bx=(x_k)_{k\in\NN^\ast}$, with 
$$
x_k=(\omega_k,\xi_k)\in\fX:=\bigcup_{\omega\in\Omega}\{\omega\}\times\Xi_\omega,\qquad
\Xi_\omega\coloneqq\spec(S_{\cE_\omega})\times\spec(S_{\cE_\omega}),
$$ 
yields the {\sl Entropy Process}, a stochastic process with path space $\bX:=\fX^{\NN^\ast}$ and
law $\QQ_{R_0}\in\cP(\bX)$ determined by\footnote{Recall~\eqref{eq:R0form}.}
\begin{align*}
\QQ_{R_0}([x_1\cdots x_n])
&\coloneqq\sum_{\omega_0\in\Omega}\pi_{\omega_0}P_{\omega_0\omega_1}\cdots P_{\omega_{n-1}\omega_n}
\tr(\cL_{\omega_n,\xi_n}\cdots\cL_{\omega_1,\xi_1}\rho_{\omega_0})\\[4pt]
&=P_{\omega_1\omega_2}\cdots P_{\omega_{n-1}\omega_n}
\tr(\cL_{\omega_n,\xi_n}\cdots\cL_{\omega_1,\xi_1}R_0(\omega_1)).
\end{align*}
According to a widely used physical terminology, the measure $\QQ_{R_0}$ on the $\sigma$-algebra $\cX$ is the {\sl Full Statistics of Entropy}, we denote the associated expectation functional by $\EE_{R_0}$. 

Observing that
$$
\sum_{x_1\in\fX}\QQ_{R_0}([x_1\cdots x_n])
=\QQ_{\LL R_0}([x_2\cdots x_n])
$$
we conclude that if $R_0$ is an ESS, then the entropy process is stationary, $\QQ_{R_0}\in\cP_\phi(\bX)$.

\subsubsection{Level-1: Entropy Production}

In this section, we consider the statistical properties of the entropy increments. To this end, we set
$$
\Sigma\times\Sigma\ni\xi=(s,s')\mapsto\delta\xi\coloneqq s'-s,
$$
and for $\boldsymbol{f}\in C(\bX,\RR^d)$,
$$
S_n\boldsymbol{f}=\sum_{k=0}^{n-1}\boldsymbol{f}\circ\phi^k.
$$
Introducing the functions
$$
\boldsymbol{\fJ}
=(\fJ_\omega)_{\omega\in\Omega}\in C(\bX,\RR^\Omega),\qquad
\fJ_\omega(\bx)=1_{\{\omega_1=\omega\}}\delta\xi_1,
$$
the total entropy increments in each reservoir during the first $N$ interactions is described by the random vector $S_N\boldsymbol{\fJ}$.
Our next results concern the large $N$ asymptotics of the sequence $(S_N\boldsymbol{\fJ})_{N\in\NN^\ast}$. We shall see, in particular, that at the level of expectations this asymptotics is governed by the observables~\eqref{eq:springoutthere}. To formulate these results we need to introduce a few new objects. For $\alpha\in\RR$ let
\be
\cL_\omega^{[\alpha]}
\coloneqq\sum_{\xi\in\Xi_\omega}\e^{-\alpha \delta\xi}\cL_{\omega,\xi},
\label{eq:cLalphadef}
\ee
and for $\ba=(\alpha_\nu)_{\nu\in\Omega}\in\RR^\Omega$, define a CP map on $\fA_\ast$ by
\be
(\LL^{[\ba]}R)(\omega)\coloneqq\sum_{\nu\in\Omega}P_{\nu\omega}\cL_\nu^{[\alpha_\nu]}R(\nu).
\label{eq:LLalphadef}
\ee
Finally, denote by $\ell(\ba)$ the spectral radius of $\LL^{[\ba]}$, which coincides with its dominant eigenvalue. The next two theorems summarize the asymptotic properties of the entropy increments,
their proofs will be given in Section~\ref{sec:prooflimits}.

\begin{theo}[Limit Theorems and Large Deviations Principle]\label{thm:Limits}
Let $R_0$ be an arbitrary extended state in $\fA_{\ast+1}$.
\hspace{0em}%
\begin{enumerate}[label=(\roman*)]
\item\label{it:Limits-i} Under Assumption~{\rm(\nameref{STAT})}, for all $\nu\in\Omega$, one has
$$
\lim_{N\to\infty}\EE_{R_0}\left[\frac1NS_N\fJ_{\nu}\right]=-\beta_\nu\EE[\bra\rho_{\omega_0},\bar J_\nu(\bomega)\ket].
$$
\end{enumerate}

\noindent In the following, we assume that\/ $\LL$ is irreducible, with ESS $R_+$.

\medskip
\begin{enumerate}[label=(\roman*)]
\setcounter{enumi}{1}
\item\label{it:Limits-ii} The strong law of large numbers holds, i.e., $\QQ_{R_0}$-a.s.,
$$
\lim_{N\to\infty}\frac1NS_N\fJ_\nu=-\beta_\nu\bra R_+,J_\nu\ket.
$$
\item\label{it:Limits-iii} The limit\footnote{Here and in the sequel ``$\,\boldsymbol{\cdot}\,$'' denotes the Euclidean inner product.}
\be
e(\ba)\coloneqq\lim_{N\to\infty}\frac1N\log\EE_{R_0}[\e^{-\ba\cdot S_N\boldsymbol{\fJ}}]=\log\ell(\ba)
\label{eq:Cumulus}
\ee
is a real-analytic convex function on\/ $\RR^\Omega$.
\item\label{it:Limits-iv} The central limit theorem holds, i.e., as $N\to\infty$
$$
\frac1{\sqrt{N}}\left(S_N\boldsymbol{\fJ}-\EE_{R_0}\left[S_N\boldsymbol{\fJ}\right]\right)
$$
converges in law towards a centered Gaussian vector with covariance matrix
\be
C_{\omega\nu}=\ell_{\omega\nu}-\ell_\omega\ell_\nu,
\label{eq:CentralCova}
\ee
where
$$
\ell_\omega\coloneqq(\partial_{\alpha_\omega}\ell)(\bzero),\qquad
\ell_{\omega\nu}\coloneqq(\partial_{\alpha_\nu}\partial_{\alpha_\omega}\ell)(\bzero).
$$
\item\label{it:Limits-v} The steady measure $\QQ_{R_+}$ is $\phi$-mixing, in particular 
\be
\wlim_{n\to\infty}\QQ_{R_0}\circ\phi^{-n}=\QQ_{R_+},
\label{eq:return}
\ee
for any $R_0\in\fA_{\ast+1}$.
\item\label{it:Limits-vi} For any Borel set $G\subset\RR^\Omega$, the following large deviation estimates
hold:
\begin{equation}
\label{eq:ld}
\begin{split}
-\inf_{\bs\in\mathring{G}}I(\bs)
&\leq\liminf_{N\rightarrow\infty}\frac{1}{N}\log\QQ_{R_0}\left(\frac1NS_N\boldsymbol{\fJ}\in G\right)\\
&\leq\limsup_{N\rightarrow\infty}\frac{1}{N}\log\QQ_{R_0}\left(\frac1NS_N\boldsymbol{\fJ}\in G\right) 
\leq-\inf_{\bs\in\overline{G}}I(\bs),
\end{split}
\end{equation}
where $\mathring{G}$/$\overline{G}$ denote the interior/closure of $G$ and the good rate function $\bs\mapsto I(\bs)$ is given by the Legendre--Fenchel transform of the function $\ba\mapsto e(-\ba)$,
\be
I(\bs)\coloneqq\sup_{\ba\in\RR^\Omega}\left(\ba\cdot\bs-e(-\ba)\right). 
\label{eq:I-lim}
\ee
\end{enumerate}
\end{theo}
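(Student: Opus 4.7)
\emph{Overall strategy.} I plan to base the whole theorem on the generating-function identity
\[
\EE_{R_0}\bigl[\e^{-\ba\cdot S_N\boldsymbol{\fJ}}\bigr]=\Tr\bigl[(\LL^{[\ba]})^N R_0\bigr],
\]
obtained by expanding $-\ba\cdot S_N\boldsymbol{\fJ}=-\sum_{k=1}^N\alpha_{\omega_k}\delta\xi_k$ inside the defining sum for $\QQ_{R_0}$: for fixed $(\omega_k)$, summing over each $\xi_k$ with weight $\e^{-\alpha_{\omega_k}\delta\xi_k}$ turns $\cL_{\omega_k,\xi_k}$ into $\cL_{\omega_k}^{[\alpha_{\omega_k}]}$ by~\eqref{eq:cLalphadef}, and the Markov product of $P_{\omega_{k-1}\omega_k}$'s then assembles $N$ iterates of $\LL^{[\ba]}$. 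With this identity in hand, all six statements reduce to the spectral analysis of~$\LL^{[\ba]}$.

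\emph{Parts~\ref{it:Limits-iii} and~\ref{it:Limits-vi}.} From~\eqref{eq:twotimeentropy} I read off a Kraus family for $\cL_\omega^{[\alpha]}$ of the form $V_{ji}^{[\alpha]}=p_j^{\alpha/2}p_i^{(1-\alpha)/2}\bra j|U_\omega|i\ket$, where the $p_i>0$ are the eigenvalues of $\rho_{\cE_\omega}$ (strictly positive by~(\nameref{KMS})); these Kraus operators span the same linear subspace as those of $\cL_\omega$, so the positivity structure of $\cL_\omega$ is inherited by $\cL_\omega^{[\alpha]}$ for every real~$\alpha$, and the same observation applied to the combined Kraus representation of $\LL$ transports its irreducibility to $\LL^{[\ba]}$ for every $\ba\in\RR^\Omega$. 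By the Perron--Frobenius facts~(i)--(iii) recalled in Section~\ref{sec:ljpm}, $\ell(\ba)$ is then a simple eigenvalue of $\LL^{[\ba]}$; analytic perturbation theory upgrades this to real-analyticity of $\ell$ globally on $\RR^\Omega$, and convexity of $e=\log\ell$ is a Hölder one-liner applied to the generating-function identity. The cyclic peripheral spectrum of $\LL^{[\ba]}$ does not affect the logarithmic growth rate, so $\frac1N\log\EE_{R_0}[\e^{-\ba\cdot S_N\boldsymbol{\fJ}}]\to\log\ell(\ba)$, giving~\ref{it:Limits-iii}. Part~\ref{it:Limits-vi} then comes out of the Gärtner--Ellis theorem applied to the differentiable $e$, with sign-flip $\Lambda(\ba)=e(-\ba)$ producing~\eqref{eq:I-lim}.

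\emph{Remaining parts.} For~\ref{it:Limits-i} (which needs only~(\nameref{STAT})), using $\sum_\xi\cL_{\omega,\xi}=\cL_\omega$ and $[S_{\cE_\omega},\rho_{\cE_\omega}]=0$ I would reduce the conditional expectation to $\EE_{\QQ_{R_0}}[\delta\xi_k\mid\omega_1,\dots,\omega_k]=-\beta_{\omega_k}\bra\rho_{k-1}(\bomega),J(\omega_k)\ket$, then Cesàro-average and invoke the $L^1$ conclusion of Theorem~\ref{thm:pergo}. Part~\ref{it:Limits-ii} drops out of~\ref{it:Limits-vi} by Borel--Cantelli: the upper LDP bound provides summable exponential decay around the unique zero $\bs^\ast$ of the rate function, whose components $-\beta_\nu\bra R_+,J_\nu\ket$ are identified via~\ref{it:Limits-i} and Theorem~\ref{thm:ascv}. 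For~\ref{it:Limits-iv} I expand $\ell(-\i\boldsymbol{t}/\sqrt{N})^N$ to second order, insert it into the generating-function identity, and after centering by $\EE_{R_0}[S_N\boldsymbol{\fJ}]/\sqrt{N}$ recognize the Gaussian characteristic function with covariance~\eqref{eq:CentralCova}. For~\ref{it:Limits-v}, the shift identity $\QQ_{R_0}\circ\phi^{-n}([x_1\cdots x_k])=\QQ_{\LL^n R_0}([x_1\cdots x_k])$ combined with $\LL^n R_0\to R_+$ yields~\eqref{eq:return} on cylinders, extended to~$\cX$ by a Dynkin class argument, and the mixing of $\QQ_{R_+}$ itself follows from the gap of $\LL$ on the complement of its stationary eigenspace.

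\emph{Main obstacle.} The crux is the \emph{global} (in $\ba\in\RR^\Omega$) simplicity and real-analyticity of $\ell(\ba)$, underlying both~\ref{it:Limits-iii} and~\ref{it:Limits-vi}. One has to verify that the twist $\alpha\mapsto\cL_\omega^{[\alpha]}$ preserves irreducibility for every real $\alpha$, which hinges critically on the strict positivity of the $p_i$ afforded by~(\nameref{KMS}). Once that Perron--Frobenius structure is secured globally, the cyclic peripheral part of $\spec(\LL^{[\ba]})$ is absorbed by the standard cyclic decomposition of irreducible positive maps, and the remainder of the argument is a near-mechanical adaptation of the scheme used in the primitive and i.i.d.~cases of~\cite{BoBr}.
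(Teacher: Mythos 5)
Your generating-function identity $\EE_{R_0}[\e^{-\ba\cdot S_N\boldsymbol{\fJ}}]=\bra\LL^{[\ba]N}R_0,\un\ket$ and the spectral treatment of parts (i), (iii), (iv) and (vi) follow essentially the paper's route (your Kraus-comparison transfer of irreducibility from $\LL$ to $\LL^{[\ba]}$ is exactly Lemma~\ref{lem:Lalpha}, and perturbation of the simple dominant eigenvalue is the paper's argument for analyticity), while your LDP-plus-Borel--Cantelli derivation of the strong law (ii) is a legitimate alternative to the paper's route via total ergodicity of $\QQ_{R_+}$ (from upper decoupling and \cite[Lemma~A.2]{CJPS}), Birkhoff, and $\QQ_{R_0}\ll\QQ_{R_+}$. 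The genuine gap is in part (v): you invoke ``$\LL^nR_0\to R_+$'' and ``the gap of $\LL$ on the complement of its stationary eigenspace'', but the hypothesis is only \emph{irreducibility}, not primitivity. An irreducible $\LL$ may have period $p>1$, in which case its peripheral spectrum is the full cyclic group $\{\zeta^k\}_{0\le k<p}$ on the unit circle (Proposition~\ref{prp:PerronFrobenius}(ii)): there is no spectral gap, $\LL^nR_0$ oscillates and converges only in Ces\`aro mean, and both your cylinder-wise argument for \eqref{eq:return} and your claimed mixing of $\QQ_{R_+}$ collapse. The paper's proof of (v) is of an entirely different nature: it extracts the uniform correlation bound \eqref{eq:Friday} from the upper decoupling property (Lemma~\ref{lem:decoupling}(ii)), runs Ornstein's argument (weak mixing via the Koopman operator and unique ergodicity of irrational rotations, then mixing via the $\phi\times\phi$-ergodicity characterization), and finally transfers \eqref{eq:return} to arbitrary $R_0$ by $\QQ_{R_0}\ll\QQ_{R_+}$ (Lemma~\ref{lem:QQorder}(i)). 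Nothing in your purely spectral toolkit yields this under irreducibility alone.

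Two smaller repairs are also needed. First, the simplicity of $\ell(\ba)$ cannot be read off the CPTP facts recalled in Section~\ref{sec:ljpm}, since $\LL^{[\ba]}$ is \emph{not} trace preserving for $\ba\neq\bzero$; what is required is Perron--Frobenius theory for irreducible CP maps, including \emph{algebraic} simplicity of the peripheral eigenvalues, which the paper establishes separately (Proposition~\ref{prp:PerronFrobenius}, via a Jordan-block argument) precisely because the citation you attempt does not apply. Second, in (iii) for a non-faithful $R_0$ the upper bound $\bra\LL^{[\ba]N}R_0,\un\ket\le C\,\ell(\ba)^N$ follows from $R_0\le CR_{\ba}$, but the matching lower bound does not: your remark that the cyclic peripheral spectrum ``does not affect the logarithmic growth rate'' is only valid for faithful initial data. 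The paper fills this with the selective lower decoupling of Lemma~\ref{lem:decoupling}(i) (estimate \eqref{eq:lool}); a spectral fix is possible (replace $R_0$ by the faithful $(\Id+\LL^{[\ba]})^mR_0$ and compare), but it must be said. Relatedly, in (iv) your direct expansion of $\ell(-\i\boldsymbol{t}/\sqrt{N})^N$ ignores the peripheral eigenprojection contributions, which carry oscillating factors $\zeta^{kN}$ when $p>1$; they are harmless only because $\bra R_k(\bzero),\un\ket=0$ for $k\neq0$, so that they enter at order $O(\ba)=O(|\boldsymbol{t}|/\sqrt{N})$ --- this is exactly the content of the paper's estimate \eqref{eq:complexbound}, which it then feeds into Bryc's theorem, and your sketch needs the same observation to close.
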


\subsubsection{Level-3: The Entropy Process}

The empirical measures of the entropy process is the family $(\mu_n)_{n\in\NN^\ast}$ of $\cP(\bX)$-valued random variables 
$$
\bX\ni\bx\mapsto\mu_n\coloneqq\frac1n\sum_{k=0}^{n-1}\delta_{\phi^k(\bx)}.
$$
We denote by
$$
\bX_\mathrm{fin}=\bigcup_{k\ge0}\fX^k
$$
the set of finite words $\bx=x_1\cdots x_n$ on the alphabet $\fX$, and let $|\boldsymbol{x}|=n$ be the length of $\boldsymbol{x}$.

\begin{theo}\label{thm:Level3}
Assume that\/ $\LL$ is irreducible with ESS $R_+$, and let $R_0\in\fA_{\ast+1}$.
\begin{enumerate}[label=(\roman*)]
\item\label{it:Level3-i} For $f\in C(\bX)$, the limit
\be
Q(f)\coloneqq\lim_{n\to\infty}\frac1n\log\EE_{R_0}\left[\e^{n\langle f,\mu_n\rangle}\right]
\label{eq:Qflim}
\ee
exists, defines a $1$-Lipschitz function on $C(\bX)$, and does not depend on the choice of $R_0$.
\item\label{it:Level3-ii} For any Borel set $M\subset\cP(\bX)$, the following large deviation estimates
hold:
\begin{equation}
\label{eq:ld3}
-\inf_{\mu\in\mathring{M}}\II(\mu)
\leq\liminf_{n\rightarrow\infty}\frac{1}{n}\log\QQ_{R_0}\left(\mu_n\in M\right)
\leq\limsup_{n\rightarrow\infty}\frac{1}{n}\log\QQ_{R_0}\left(\mu_n\in M\right) 
\leq-\inf_{\mu\in\overline{M}}\II(\mu),
\end{equation}
where $\mathring{M}$/$\overline{M}$ denote the interior/closure of $M$ and the good convex rate function $\mu\mapsto\II(\mu)$ is the restriction of the Legendre--Fenchel transform of the function $f\mapsto Q(f)$
to $\cP(\bX)$, i.e.,
\be
\II(\mu)\coloneqq\sup_{f\in C(\bX)}\left(\langle f,\mu\rangle-Q(f)\right),
\label{eq:II-lim}
\ee
and satisfies $\II(\mu)=+\infty$ for $\mu\in\cP(\bX)\setminus\cP_\phi(\bX)$.
\item\label{it:Level3-iii} For $\mu\in\cP_\phi(\bX)$, the limit
$$
\varsigma(\mu)\coloneqq\lim_{n\to\infty}\langle-\frac1n\log(\QQ_{R_+|\cX_n}),\mu\rangle
=-\lim_{n\to\infty}\frac1n\sum_{\bx\in\bX_{\mathrm{fin}}\atop|\bx|=n}\log(\QQ_{R_+}([\bx]))\mu([\bx]),
$$
exists and defines a lower-semicontinuous affine map\/ $\cP_\phi(\bX)\ni\mu\mapsto\varsigma(\mu)$ such that
$\varsigma(\QQ_{R_+})=h(\QQ_{R_+})$, the Kolmogorov--Sinai entropy of\/ $\QQ_{R_+}$ w.r.t.\;the shift $\phi$. Moreover,
$$
\II(\mu)=\varsigma(\mu)-h(\mu)
$$
holds for any $\mu\in\cP_\phi(\bX)$.
\end{enumerate}
\end{theo}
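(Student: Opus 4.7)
The three parts will be established in sequence. Part~\ref{it:Level3-i} is a spectral statement for a family of deformed Feynman--Kac operators $\LL^f$, part~\ref{it:Level3-ii} is a Gärtner--Ellis/Bryc-type LDP in the dual pair $(C(\bX),\cP(\bX))$, and part~\ref{it:Level3-iii} identifies the rate function as the difference of a cross-entropy rate and the Kolmogorov--Sinai entropy. The main obstacle will be part~\ref{it:Level3-iii}, since $\QQ_{R_+}$ is not Markov but only ``quantum Markov'' on $\bX$.

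For part~\ref{it:Level3-i}, I would first treat $f \in C(\bX)$ depending on finitely many coordinates, illustrated on $f=f(x_1)$, by introducing the deformed CP map
$$
(\LL^f R)(\omega) \coloneqq \sum_{\nu\in\Omega} P_{\nu\omega} \sum_{\xi\in\Xi_\nu} \e^{f(\nu,\xi)} \cL_{\nu,\xi} R(\nu),
$$
entirely analogous to $\LL^{[\ba]}$ in~\eqref{eq:LLalphadef}. A direct computation as in the proof of Theorem~\ref{thm:Limits}\ref{it:Limits-iii} yields $\EE_{R_0}[\e^{n\langle f,\mu_n\rangle}] = \bra\un,(\LL^f)^n R_0\ket$. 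Since $\LL^f$ has the same zero pattern as $\LL$, it inherits irreducibility via Lemma~\ref{lem:Lspec}, so by finite-dimensional Perron--Frobenius it has a simple positive dominant eigenvalue $\ell(f)$ with faithful right eigenvector. This gives $Q(f)=\log\ell(f)$ independently of $R_0$. Functions depending on longer blocks are handled analogously on an enlarged state space that encodes the required memory. The extension to arbitrary $f\in C(\bX)$ follows from the density of cylindrical functions together with the trivial estimate $|\tfrac1n\log\EE_{R_0}[\e^{n\langle f,\mu_n\rangle}]-\tfrac1n\log\EE_{R_0}[\e^{n\langle g,\mu_n\rangle}]|\leq\|f-g\|_\infty$, which also yields the claimed $1$-Lipschitz property.

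For part~\ref{it:Level3-ii}, once $Q$ is known to be well defined and continuous on $C(\bX)$, the LDP~\eqref{eq:ld3} follows from a Bryc/Gärtner--Ellis type theorem in the pairing $(C(\bX),\cP(\bX))$: exponential tightness is automatic because $\cP(\bX)$ is already compact (as $\bX$ is compact Polish), so both upper and lower bounds flow from the existence of $Q$, with rate function given by~\eqref{eq:II-lim}. Convexity and lower semicontinuity of $\II$ are standard properties of Legendre--Fenchel transforms. To show $\II(\mu)=+\infty$ for $\mu\notin\cP_\phi(\bX)$, use the telescoping bound $|S_n(f\circ\phi - f)|\leq 2\|f\|_\infty$ to deduce $Q(t(f\circ\phi - f))=0$ for every $t\in\RR$; inserting this into~\eqref{eq:II-lim} and letting $t\to\pm\infty$ forces $\langle f\circ\phi - f,\mu\rangle=0$ for all $f\in C(\bX)$ whenever $\II(\mu)<\infty$, which is exactly the $\phi$-invariance of $\mu$.

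For part~\ref{it:Level3-iii}, the plan is to identify $\II(\mu)$ with the relative entropy rate $\lim_n\tfrac{1}{n}\Ent(\mu_{|\cX_n}\,|\,\QQ_{R_+|\cX_n})$ by comparing, at each finite $n$-block level, the process-level LDP with the classical Sanov theorem, the two being matched on cylindrical test functions through part~\ref{it:Level3-i}. Combined with the decomposition
$$
\tfrac{1}{n}\Ent(\mu_{|\cX_n}\,|\,\QQ_{R_+|\cX_n}) = -\tfrac{1}{n}\langle\log\QQ_{R_+|\cX_n},\mu\rangle + \tfrac{1}{n}\langle\log\mu_{|\cX_n},\mu\rangle,
$$
this simultaneously yields the existence of $\varsigma(\mu)$ and the identity $\II(\mu)=\varsigma(\mu)-h(\mu)$, since the second term on the right converges to $-h(\mu)$ by definition of the Kolmogorov--Sinai entropy. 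Affineness and lower semicontinuity of $\varsigma$ then carry over from those of $\Ent$ and the Shannon entropy, and the equality $\varsigma(\QQ_{R_+})=h(\QQ_{R_+})$ is immediate from $\Ent(\QQ_{R_+|\cX_n}\,|\,\QQ_{R_+|\cX_n})=0$. The technical heart of the argument is the projective identification of $\II$ with the relative entropy rate, which requires quasi-multiplicativity estimates of the form $\QQ_{R_+}([\bx\bx'])\asymp \QQ_{R_+}([\bx])\QQ_{R_+}([\bx'])$ up to bounded correction factors; these follow from the Perron--Frobenius structure of $\LL$ (through its spectral gap when primitive, or a Cesàro-averaged version when merely irreducible) applied to products of the CP maps $\cL_{\omega,\xi}$.
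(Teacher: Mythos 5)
Your plan fails at the crucial step of part~\ref{it:Level3-ii}. You assert that, since $\cP(\bX)$ is compact, ``both upper and lower bounds flow from the existence of $Q$'' via a Bryc/G\"artner--Ellis argument. This conflates two different theorems. Bryc's inverse Varadhan lemma would indeed give the full LDP, but it requires the existence of the limit $\lim_n\frac1n\log\EE_{R_0}[\e^{nF(\mu_n)}]$ for \emph{all} bounded continuous, generally nonlinear, functionals $F$ on $\cP(\bX)$ --- part~\ref{it:Level3-i} only provides it for the linear functionals $\mu\mapsto\langle f,\mu\rangle$ with $f\in C(\bX)$. With linear functionals only, the abstract G\"artner--Ellis/Baldi theorem yields the \emph{upper} bound with rate $\II=Q^\ast$, while the lower bound requires exposed-point/differentiability hypotheses on $Q$ that are not available here; and the implication you invoke is genuinely false in general (for a mixture of two ergodic processes, $Q(f)$ exists for every $f$, but the true rate is a non-convex minimum and the lower bound with the convex conjugate $Q^\ast$ fails). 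This is precisely why the paper routes the entire theorem through the Ruelle--Lanford-type machinery of~\cite{CJPS}: the lower bound comes from the \emph{selective lower decoupling} property of Lemma~\ref{lem:decoupling}~(i) --- for suitable words $\bx,\boldsymbol{y}$ one can find a connecting word $\boldsymbol{z}$ of bounded length with $\QQ_{R_0}([\bx\boldsymbol{z}\boldsymbol{y}])\ge C\,\QQ_{R_0}([\bx])\QQ_{R_+}([\boldsymbol{y}])$ --- with the non-stationary case $R_0\neq R_+$ handled by modifying the gluing map $\psi_{n,t}$ of~\cite{CJPS} using Lemma~\ref{lem:decoupling}~(iii). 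Nothing in your sketch supplies a substitute for this gluing argument.

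Two further points. In part~\ref{it:Level3-iii} your ``quasi-multiplicativity estimates $\QQ_{R_+}([\bx\bx'])\asymp\QQ_{R_+}([\bx])\QQ_{R_+}([\bx'])$ up to bounded correction factors'' are false as stated in the lower direction: when $P$ has vanishing entries (e.g.\ an irreducible periodic chain), direct concatenation can give $\QQ_{R_+}([\bx\bx'])=0$ while both factors are positive; only the upper bound holds for plain concatenation (Lemma~\ref{lem:decoupling}~(ii), via Lemma~\ref{lem:Stings}), and the lower bound necessarily involves an inserted word, as above. The existence of $\varsigma$, its affineness and lower semicontinuity, and the identity $\II=\varsigma-h$ do follow from upper decoupling alone --- that is how the paper invokes~\cite[Proposition~6.3]{CJPS} --- but your proposed ``block-level matching with Sanov'' is not an argument for the non-Markovian measure $\QQ_{R_+}$; it is exactly the hard content of that proposition. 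Finally, in part~\ref{it:Level3-i}, your spectral treatment of one-coordinate functions via the deformed map $\LL^f$ parallels the paper's proof of Theorem~\ref{thm:Limits}~\ref{it:Limits-iii} (irreducibility of $\LL^f$ follows from the Kraus-family comparison of Lemma~\ref{lem:Lalpha}, not Lemma~\ref{lem:Lspec}, and non-faithful $R_0$ requires the word-insertion bound~\eqref{eq:lool} rather than Perron--Frobenius alone), but for functions of $m$ coordinates the irreducibility of your enlarged ``blocked'' operator does not follow from that of $\LL$ --- the same zero-pattern/periodicity obstruction appears --- so the claimed reduction is unjustified; the paper instead obtains existence of $Q(f)$ for $\cX_m$-measurable $f$ from~\cite[Theorem~2.13]{CJPS} and only uses the (correct) $1$-Lipschitz density argument, which matches yours, to pass to general $f$ and to prove independence of $R_0$.
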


\subsection{The Fluctuation Theorem}

We are now in position to formulate the Fluctuation Relations (FRs for short) satisfied by the full statistics $(S_N\boldsymbol{\fJ})_{N\in\NN^\ast}$. The first FR dates back to 1905 and the celebrated work of Einstein on Brownian motion. The reader is referred to~\cite{RM07} for an overview of the subsequent developments pertaining to classical physical systems. We shall follow the mathematical formulation of FRs through large deviation estimates initiated by Gallavotti and Cohen in their foundational works on steady-state FRs in chaotic dynamics~\cite{GC1,GC2}. See also~\cite{jakvsic2011entropic} for a general approach to transient and steady-state FRs in the framework of classical dynamical systems. The direct quantization of a classical observable obeying a FR fails to satisfy a ``quantum FR'', see~\cite[Section~5.10]{JOPP} for concrete examples. In fact, to obtain an operationally meaningful extension of FRs to the quantum regime, one has to take into account the peculiar status of measurements in quantum mechanics. This precludes, in particular, the existence of steady-state FRs at the quantum scale. We refer the interested reader 
to~\cite{EHM,CHT} for exhaustive reviews and extensive lists of references to the physics literature. The mathematically oriented reader can also consult~\cite{RM06,DRM08,CM12,JPW} for a study of entropic FRs in the Markovian approximation of open quantum systems, \cite{DR09,jakvsic2013entropic,jakvsic2015energy,benoist2015full,benoist2019control,benoist2018heat}
for studies of two-time measurement protocols and to~\cite{benoist2018entropy,benoist2021entropy}
for extensions to repeated quantum measurements.

Fluctuation relations are deeply linked with microscopic time-reversal invariance, as is already apparent in Onsager's theory of irreversible processes~\cite{Ons1,Ons2}. Hence, we need to assume that our MRIS has some form of time-reversal invariance.

The driving Markov chain $(\pi,P)$ is said to be {\sl reversible} whenever the probability $\PP$ is invariant upon reversing the chronological order of events, that is, for any cylinder $[\omega_0,\ldots,\omega_n]$ one has
$$
\PP([\omega_0,\ldots,\omega_n])=\PP([\omega_n,\ldots,\omega_0]).
$$
It is well known, and straightforward to check, that reversibility is equivalent to the so-called {\sl Detailed Balance} condition, namely:
$$
\pi_\omega P_{\omega\nu}=\pi_\nu P_{\nu\omega}
$$
for all $\omega,\nu\in\Omega$. Note that under this condition one has $\pi P=\pi$, so that the following assumption implies~(\nameref{STAT}).

\begin{hypo}[{{\bf DB}}]\label{DB} The driving Markov chain $(\pi,P)$ satisfies detailed balance. 
\end{hypo}

Our second, complementary assumption ensures the reversibility of the interaction processes.

\begin{hypo}[{{\bf TRI}}]\label{TRI}
There are anti-unitary involutions $\theta$ and $\theta_\omega$ acting on $\cH_\cS$ and $\cH_{\cE_\omega}$, such that
$$
\theta_\omega H_{\cE_\omega}=H_{\cE_\omega}\theta_\omega,\qquad (\theta\otimes\theta_\omega)U_\omega=U_\omega^\ast(\theta\otimes\theta_\omega),
$$
for all $\omega\in\Omega$.
\end{hypo}

Whenever Assumption~(\nameref{TRI}) holds, we denote 
$\Theta:X\mapsto\theta X\theta$ the map induced on $\cA$ and on $\fA$.

\begin{theo}[Fluctuation Theorem]\label{thm:FT}
If\/ $\LL$ is irreducible, then Assumptions~{\rm (\nameref{TRI})} and~{\rm (\nameref{DB})} imply that the rate function of the large deviation principle~\eqref{eq:ld} satisfies the Fluctuation Relation
$$
I(-\bs)-I(\bs)=\bun\cdot\bs,
$$
for all $\bs\in\RR^\Omega$, where $\bun=(1,1,\ldots,1)$. This relation is associated with the following symmetry of the cumulant
generating function
\be
e(\bun-\ba)=e(\ba),
\label{eq:cumulsym}
\ee
for all $\ba\in\RR^\Omega$.
\end{theo}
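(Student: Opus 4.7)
The plan is to derive~\eqref{eq:cumulsym} from the spectral equality $\ell(\bun-\ba) = \ell(\ba)$, since by Theorem~\ref{thm:Limits}\ref{it:Limits-iii} one has $e(\ba) = \log\ell(\ba)$; the stated fluctuation relation for the rate function $I$ is then a formal consequence of~\eqref{eq:cumulsym} and the Legendre--Fenchel formula~\eqref{eq:I-lim}. The required spectral identity will come from an anti-linear similarity combining Assumption~(\nameref{TRI}) locally on each $\cL_\omega$ with Assumption~(\nameref{DB}) globally on the chain $(\pi_+,P)$.

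First, I will establish the local microscopic identity
\[
\cL_\omega^{[1-\alpha]} = \Theta \circ (\cL_\omega^{[\alpha]})^\ast \circ \Theta, \qquad \Theta(X) := \theta X \theta.
\]
Rewriting~\eqref{eq:cLalphadef} via~\eqref{eq:wednesday} one gets
$\cL_\omega^{[\alpha]}(\rho) = \tr_{\cH_{\cE_\omega}}\bigl((\un \otimes \rho_{\cE_\omega}^{\alpha}) U_\omega (\rho \otimes \rho_{\cE_\omega}^{1-\alpha}) U_\omega^\ast\bigr)$
and, by adjoining, a similar expression for $(\cL_\omega^{[\alpha]})^\ast$ with $U_\omega \leftrightarrow U_\omega^\ast$ and the two powers of $\rho_{\cE_\omega}$ swapped. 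Substituting $U_\omega = \Theta_\omega U_\omega^\ast \Theta_\omega$ with $\Theta_\omega = \theta \otimes \theta_\omega$ (from~(\nameref{TRI})), using that $\theta_\omega$ commutes with every real power $\rho_{\cE_\omega}^\alpha$ (since $\theta_\omega H_{\cE_\omega} = H_{\cE_\omega} \theta_\omega$), and the partial-trace compatibility $\tr_{\cH_{\cE_\omega}}(\Theta_\omega A \Theta_\omega) = \Theta \tr_{\cH_{\cE_\omega}}(A)$ (which follows from testing against arbitrary $X \in \cA$ and the fact that $\tr(\Theta_\omega B \Theta_\omega) = \overline{\tr B}$), the identity drops out after collapsing $\Theta^2 = \mathrm{id}$.

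Next, I will lift this to the full generator using DB. Write $\LL^{[\ba]} = M_P \circ \cN^{[\ba]}$ and $(\LL^{[\ba]})^\ast = \cN^{[\ba]\ast} \circ M_{P^T}$, where $(M_P R)(\omega) = \sum_\nu P_{\nu\omega} R(\nu)$, $(M_{P^T} X)(\omega) = \sum_\nu P_{\omega\nu} X(\nu)$, and $(\cN^{[\ba]} R)(\omega) = \cL_\omega^{[\alpha_\omega]} R(\omega)$. Detailed balance gives the classical conjugation $M_P = D_{\pi_+} M_{P^T} D_{\pi_+}^{-1}$ with $(D_{\pi_+} R)(\omega) = \pi_{+\omega} R(\omega)$. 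Introduce the anti-linear invertible operator $\cJ = D_{\pi_+}^{-1} \widetilde\Theta$, where $\widetilde\Theta$ is the pointwise extension of $\Theta$ to $\fA$. Then the local TRI identity converts $\widetilde\Theta \cN^{[\bun-\ba]} \widetilde\Theta$ into $\cN^{[\ba]\ast}$, the DB relation converts $M_P$ into $M_{P^T}$, and the $D_{\pi_+}^{\pm 1}$ pieces cancel the corresponding pieces of $\cJ^{\pm 1}$; the net outcome is
\[
\cJ\,\LL^{[\bun-\ba]}\,\cJ^{-1} = M_{P^T} \cN^{[\ba]\ast}.
\]

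Finally, I conclude by spectral considerations. Since $M_{P^T} \cN^{[\ba]\ast}$ and $(\LL^{[\ba]})^\ast = \cN^{[\ba]\ast} M_{P^T}$ differ only by the order of their two factors, they share the same nonzero spectrum. Anti-linear conjugation by $\cJ$ complex-conjugates the spectrum, and so does passing to the Hilbert-space adjoint on a finite-dimensional space; the two conjugations therefore cancel and yield $\spec(\LL^{[\bun-\ba]}) \setminus \{0\} = \spec(\LL^{[\ba]}) \setminus \{0\}$. In particular $\ell(\bun-\ba) = \ell(\ba)$, noting that the dominant eigenvalue is real and positive because $\LL^{[\ba]}$ is positive on $\fA_\ast$; this is~\eqref{eq:cumulsym}. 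The rate-function relation then follows from~\eqref{eq:I-lim} by the change of variable $\ba \to \bun-\ba$ and~\eqref{eq:cumulsym}: $I(-\bs) = \bun\cdot\bs + I(\bs)$. The main obstacle is the careful bookkeeping with the three anti-linear objects $\Theta$, $\widetilde\Theta$, $\cJ$: real scalars must be tracked as passing through them unchanged, while the spectrum gets complex-conjugated under anti-linear similarity --- a harmless feature here, precisely because $\ell(\ba)$ is real and positive.
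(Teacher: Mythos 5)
Your proof is correct and follows essentially the same route as the paper: the local identity $\cL_\omega^{[1-\alpha]}=\Theta\,\cL_\omega^{[\alpha]\ast}\,\Theta$ (the paper obtains it by summing the per-outcome relation~\eqref{eq:cLrevForm} over $\xi$, you derive it directly in the integrated form with powers of $\rho_{\cE_\omega}$), the lift via detailed balance written as the similarity $\cP=\Pi\cP^\ast\Pi^{-1}$, the cyclicity of the nonzero spectrum under reordering of the two factors, and the Legendre--Fenchel change of variable. The only cosmetic differences are that you package $\Theta$ and $\Pi^{-1}$ into a single anti-linear similarity $\cJ$ and explicitly track the complex conjugations (which the paper sidesteps by comparing only spectral radii, immune to conjugation); both are sound.
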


Since $S_N\fJ_{\nu}$ accounts for the entropy produced by the interaction with reservoir $\cR_\nu$ during the $N$
first interactions, the total entropy produced during this period is given by
$$
\sigma_N=\sum_{\nu\in\Omega}S_N\fJ_{\nu}=S_N\bun\cdot\boldsymbol{\fJ}.
$$
Applying the contraction principle~\cite[Theorem~4.2.1]{DZ98} to the map $\boldsymbol{\fJ}\mapsto\bun\cdot\boldsymbol{\fJ}$
immediately yields the following

\begin{coro}\label{cor:epLDP}
Under the assumptions of Theorem~\ref{thm:FT} the sequence of entropy production random variable $(\sigma_N)_{N\in\NN^\ast}$
satisfies a large deviation principle
\begin{equation*}
-\inf_{s\in\mathring{S}}\bar I(s)
\leq\liminf_{N\rightarrow\infty}\frac{1}{N}
\log\QQ_{R_0}\left(\frac{\sigma_N}{N}\in S\right) 
\leq\limsup_{N\rightarrow\infty}\frac{1}{N}
\log\QQ_{R_0}\left(\frac{\sigma_N}{N}\in S\right)
\leq-\inf_{s\in\overline{S}}\bar I(s),
\end{equation*}
for any $R_0\in\fA_{\ast+1}$ and any Borel set $S\subset\RR$, with the rate function
$$
\bar I(s)\coloneqq\inf\{I(\bs)\mid \bs\in\RR^\Omega,\bun\cdot\bs=s\}
$$
satisfying the Fluctuation Relation
$$
\bar I(-s)-\bar I(s)=s.
$$
\end{coro}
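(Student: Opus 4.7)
The plan is essentially to execute the two steps already sketched in the statement: invoke the contraction principle on the LDP of Theorem~\ref{thm:Limits}\ref{it:Limits-vi}, and then transfer the symmetry~\eqref{eq:cumulsym} (equivalently, the symmetry of $I$ established in Theorem~\ref{thm:FT}) through the contraction. The first step is routine; the second is a short calculation with an infimum.

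First I would observe that $\sigma_N/N = \bun\cdot(S_N\boldsymbol{\fJ}/N)$, so the law of $\sigma_N/N$ under $\QQ_{R_0}$ is the image of the law of $S_N\boldsymbol{\fJ}/N$ under the continuous (in fact linear) map $F:\RR^\Omega\to\RR$, $\bs\mapsto\bun\cdot\bs$. Since Theorem~\ref{thm:Limits}\ref{it:Limits-vi} provides an LDP for $S_N\boldsymbol{\fJ}/N$ with good rate function $I$, the contraction principle~\cite[Theorem~4.2.1]{DZ98} immediately yields the claimed LDP for $\sigma_N/N$, with good rate function
\begin{equation*}
\bar I(s)=\inf\{I(\bs)\mid \bs\in\RR^\Omega,\,\bun\cdot\bs=s\}.
\end{equation*}
The fact that $\bar I$ is a good rate function (hence the infimum is attained whenever finite) follows from the goodness of $I$ together with the linearity of $F$.

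Next I would derive the fluctuation relation for $\bar I$ from the one for $I$. Writing $\bs=-\bs'$ in the defining infimum,
\begin{equation*}
\bar I(-s)=\inf_{\bun\cdot\bs=-s}I(\bs)=\inf_{\bun\cdot\bs'=s}I(-\bs')=\inf_{\bun\cdot\bs'=s}\bigl(I(\bs')+\bun\cdot\bs'\bigr)=\bar I(s)+s,
\end{equation*}
where the third equality uses Theorem~\ref{thm:FT} and the fourth uses the constraint $\bun\cdot\bs'=s$ to pull the additive term out of the infimum.

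There is essentially no obstacle here beyond bookkeeping; the real work has already been done in proving Theorem~\ref{thm:Limits} (for the LDP) and Theorem~\ref{thm:FT} (for the symmetry of $I$). The only point that deserves a brief mention is the goodness of $I$ (so that the contraction principle applies in the form stated), which follows from the fact that $I$ is a Legendre--Fenchel transform of the real-analytic convex function $e$ on $\RR^\Omega$ with superlinear growth, as established in Theorem~\ref{thm:Limits}\ref{it:Limits-iii}.
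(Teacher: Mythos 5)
Your proposal is correct and is exactly the paper's argument: the text preceding the corollary derives it by applying the contraction principle \cite[Theorem~4.2.1]{DZ98} to the linear map $\boldsymbol{\fJ}\mapsto\bun\cdot\boldsymbol{\fJ}$, and the fluctuation relation for $\bar I$ follows from the symmetry $I(-\bs)-I(\bs)=\bun\cdot\bs$ by the same substitution-in-the-infimum computation you give. One cosmetic remark: the goodness of $I$ does not require ``superlinear growth'' of $e$ (which the paper never establishes); it already follows from the finiteness of $e$ on all of $\RR^\Omega$ (Theorem~\ref{thm:Limits}~\ref{it:Limits-iii}), as in \cite[Lemma~2.3.9]{DZ98}, which is implicit in the G\"artner--Ellis theorem invoked for Theorem~\ref{thm:Limits}~\ref{it:Limits-vi}.
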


The last relation implies
$$
\lim_{\delta\downarrow0}\lim_{N\to\infty}\left(
\frac{\QQ_{R_0}\left(\left|\frac{\sigma_N}{N}+s\right|<\delta\right)}{\QQ_{R_0}\left(\left|\frac{\sigma_N}{N}-s\right|<\delta\right)}\right)^{1/N}=\e^{-s},
$$
which is often written as
$$
\frac{\QQ_{R_0}\left(\frac{\sigma_N}{N}=-s\right)}{\QQ_{R_0}\left(\frac{\sigma_N}{N}=s\right)}\simeq\e^{-sN}
$$
in the physics literature. It shows that negative values of the entropy production rate are exponentially suppressed  relative to positive values, thus providing a deep refinement of the Second Law~\eqref{eq:barsigmadef}. 

\medskip\noindent{\bf Remark.} Theorems~\ref{thm:Limits} and~\ref{thm:FT} as well as Corollary~\ref{cor:epLDP} concern the full statistics of entropy fluxes into the reservoirs. Relation~\eqref{eq:wednesday} links entropy to energy and can be used to convert these results into statements on the full statistics of the energy fluxes $\boldsymbol{\fI}=(\fI_{\omega})_{\omega\in\Omega}$, where $\fI_{\omega}=-\fJ_{\omega}/\beta_\omega$. We shall leave the details of this alternative formulation to the reader.

\medskip
We provide a proof of Theorem~\ref{thm:FT} based on the spectral properties of $\LL^{[\ba]}$ in Section~\ref{ssec:FTproof}. In the remaining part of this section, we discuss some other consequences of time-reversal invariance for the full statistics process and the information theoretic interpretation of the Fluctuation Theorem. We assume both Conditions~(\nameref{TRI}) and~(\nameref{DB}).

We start with two general constructions associated to any stationary process $\QQ\in\cP_\phi(\boldsymbol{\fX})$.
\begin{itemize}
\item Setting $\xi=(s,s')\mapsto\hat\xi=(s',s)$ and $x=(\omega,\xi)\mapsto\hat x=(\omega,\hat\xi)$, 
the involution of $\bX_{\rm fin}$ defined by $\bx=x_1\cdots x_n\mapsto\hat\bx=\hat x_n\cdots\hat x_1$ implements time reversal. One easily checks that 
$$
\wh\QQ([\bx])\coloneqq\QQ([\hat\bx]),
$$
uniquely extends to probability measure $\wh\QQ\in\cP_\phi(\bX)$: the time-reversed process.
\item Let $\overline{\boldsymbol{\fX}}\coloneqq\fX^\ZZ$. There is a unique probability measure $\overline{\QQ}\in\cP_\phi(\overline{\boldsymbol{\fX}})$\footnote{We use the same symbol $\phi$ to denote the left shift on $\boldsymbol{\fX}$ and $\overline{\boldsymbol{\fX}}$.} such that
$$
\overline{\QQ}(Z)=\QQ\circ\phi^n(Z)
$$
for any finite cylinder $Z\subset\overline{\boldsymbol{\fX}}$ such that $\phi^n(Z)$ is $\cX$-measurable.
\end{itemize}

One easily checks that the time-reversed and two-sided measures $\wh\QQ$ and $\overline{\QQ}$ are related by
$$
\overline{\wh\QQ}=\overline{\QQ}\circ\vartheta,
$$
where $\vartheta$ is the time-reversal involution of $\overline{\boldsymbol{\fX}}$ defined by $\vartheta(\bx)_n=\hat x_{-n}$. The expectation functional $\overline{\EE}_{R_+}$ of the two-sided measure $\overline{\QQ}_{R_+}$ associated to $\QQ_{R_+}$ will be particularly useful to write the Green--Kubo formula, in Theorem~\ref{thm:thermo}~(ix) below, in a simple and standard form.

\begin{theo}\label{Thm:Level3FT}
Let \/ $\LL$ be irreducible, with ESS $R_+$. Under Assumptions~{\rm (\nameref{TRI})} and~{\rm (\nameref{DB})}, the following hold.
\begin{enumerate}[label=(\roman*)]
\item The two measures $\QQ_{R_+|\cX_n}$ and $\wh\QQ_{R_+|\cX_n}$ are equivalent for all $n$ and the log-likelihood ratio 
$$
\tilde\sigma_n(\bx)\coloneqq\log\frac{\d\QQ_{R_+|\cX_n}}{\d\wh\QQ_{R_+|\cX_n}}(\bx),
$$
satisfies
\be
\sup_{\bx\in\bX}|\sigma_n(\bx)-\tilde\sigma_n(\bx)|\le c,
\label{eq:physequ}
\ee
for some constant $c$.
\item For $\alpha\in\RR$, one has
$$
\lim_{n\to\infty}\frac1n\Ent_\alpha[\wh\QQ_{R_+}|_{\cX_n}|\QQ_{R_+}|_{\cX_n}]=e(\alpha\bun).
$$
\item $\overline{\ep}=0$ iff\/ $\wh\QQ_{R_+}=\QQ_{R_+}$.
\item The level-$3$ fluctuation relation
$$
\II(\wh\QQ)=\II(\QQ)+\langle \sigma_1,\QQ\rangle,
$$
holds for all $\QQ\in\cP_\phi(\boldsymbol{\fX})$.
\end{enumerate}
\end{theo}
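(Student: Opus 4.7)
The backbone of the proof is a single-step time-reversal identity for the two-time-measurement CP maps. Under~(\nameref{KMS}), $S_{\cE_\omega}$ is a polynomial function of $H_{\cE_\omega}$, and by~(\nameref{TRI}) the conjugation $X\mapsto\theta_\omega X\theta_\omega$ fixes $H_{\cE_\omega}$, hence every spectral projector $\Pi_s(\omega)$. Combining this with the relation $(\theta\otimes\theta_\omega)U_\omega(\theta\otimes\theta_\omega) = U_\omega^\ast$ and the intertwining $\tr_\cE\circ\mathrm{Ad}_{\theta\otimes\theta_\omega} = \Theta\circ\tr_\cE$, a direct substitution in~\eqref{eq:twotimeentropy} yields
\begin{equation*}
\Theta\circ\cL_{\omega,\xi}\circ\Theta = \e^{\delta\xi}\cL_{\omega,\hat\xi}^\ast
\end{equation*}
as maps on $\cA$, the clean exponential factor arising from pairing the prefactor $\e^{-s}$ of~\eqref{eq:twotimeentropy} with the Gibbs weight $\Pi_s\rho_{\cE_\omega} = \e^{-s}\Pi_s$. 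Iterating (and using $\Theta^2 = \Id$) gives
\begin{equation*}
\cL_{\omega_n,\xi_n}\cdots\cL_{\omega_1,\xi_1}(\rho) = \e^{\sigma_n(\bx)}\,\Theta\bigl(\cL_{\omega_n,\hat\xi_n}^\ast\cdots\cL_{\omega_1,\hat\xi_1}^\ast(\Theta\rho)\bigr),
\end{equation*}
from which, after taking traces (using $\tr\circ\Theta = \tr$ on Hermitian operators) and applying the duality $\tr(\cL^\ast(A)B) = \tr(A\cL(B))$, I obtain the trace identity
\begin{equation*}
\tr\bigl(\cL_{\omega_n,\xi_n}\cdots\cL_{\omega_1,\xi_1}R_+(\omega_1)\bigr) = \e^{\sigma_n}\tr\bigl(\Theta R_+(\omega_1)\cdot\cL_{\omega_1,\hat\xi_1}\cdots\cL_{\omega_n,\hat\xi_n}(\un)\bigr).
\end{equation*}

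For (i), I form the likelihood ratio $\QQ_{R_+}([x_1\cdots x_n])/\wh\QQ_{R_+}([x_1\cdots x_n])$ from the cylinder formulas. By~(\nameref{DB}) the ratio of $P$-products equals $\pi_{+\omega_n}/\pi_{+\omega_1}$, which is bounded above and below in log since $\pi_+$ is faithful (Lemma~\ref{lem:Lspec}~(ii) combined with irreducibility of~$\LL$). The ratio of the trace factors, by the trace identity above, contributes $\e^{\sigma_n}$ times
\begin{equation*}
\frac{\tr\bigl(\Theta R_+(\omega_1)\cdot\cL_{\omega_1,\hat\xi_1}\cdots\cL_{\omega_n,\hat\xi_n}(\un)\bigr)}{\tr\bigl(\cL_{\omega_1,\hat\xi_1}\cdots\cL_{\omega_n,\hat\xi_n}R_+(\omega_n)\bigr)}.
\end{equation*}
Faithfulness of $R_+$ (Theorem~\ref{thm:ascv}) provides uniform bounds $c_-\un\le R_+(\omega),\,\Theta R_+(\omega)\le c_+\un$; since $\cL_{\omega_1,\hat\xi_1}\cdots\cL_{\omega_n,\hat\xi_n}$ is CP, both numerator and denominator are sandwiched between $c_\pm\tr\bigl(\cL_{\omega_1,\hat\xi_1}\cdots\cL_{\omega_n,\hat\xi_n}(\un)\bigr)$, so the displayed ratio lies in $[c_-/c_+,c_+/c_-]$ uniformly in $\bx$. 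Combining these bounds yields $|\tilde\sigma_n - \sigma_n|\le c$, which is (i).

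Part (ii) is then immediate: on $\cX_n$ one has $\d\wh\QQ_{R_+}/\d\QQ_{R_+} = \e^{-\tilde\sigma_n}$, so
\begin{equation*}
\Ent_\alpha[\wh\QQ_{R_+|\cX_n}\,|\,\QQ_{R_+|\cX_n}] = \log\EE_{\QQ_{R_+}}[\e^{-\alpha\tilde\sigma_n}] = \log\EE_{\QQ_{R_+}}[\e^{-\alpha\sigma_n}] + O(1)
\end{equation*}
by (i), and since $\sigma_n = \bun\cdot S_n\boldsymbol{\fJ}$, Theorem~\ref{thm:Limits}~(iii) applied at $\ba = \alpha\bun$ gives $n^{-1}\log\EE_{\QQ_{R_+}}[\e^{-\alpha\sigma_n}]\to e(\alpha\bun)$. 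For (iv) I use $\II = \varsigma - h$ from Theorem~\ref{thm:Level3}~(iii): the relabeling $\bx\mapsto\hat\bx$ is a length-preserving bijection on $\bX_\mathrm{fin}$, so $h(\wh\QQ) = h(\QQ)$; while for $\varsigma$, performing this change of variable and substituting $\log\QQ_{R_+}([\hat\bx]) = \log\QQ_{R_+}([\bx]) - \tilde\sigma_n(\bx)$ from (i) gives
\begin{equation*}
\varsigma(\wh\QQ) - \varsigma(\QQ) = \lim_{n\to\infty}\tfrac1n\int\tilde\sigma_n\,\d\QQ = \lim_{n\to\infty}\tfrac1n\int\sigma_n\,\d\QQ = \langle\sigma_1,\QQ\rangle,
\end{equation*}
using the $O(1)$ bound of (i) in the second equality and $\sigma_n = S_n\sigma_1$ together with $\phi$-invariance of $\QQ$ in the third. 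Subtracting $h$ yields $\II(\wh\QQ) = \II(\QQ) + \langle\sigma_1,\QQ\rangle$.

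For the forward direction of (iii), applying (iv) with $\QQ = \QQ_{R_+}$ under the hypothesis $\wh\QQ_{R_+} = \QQ_{R_+}$ forces $\langle\sigma_1,\QQ_{R_+}\rangle = 0$, and the stationary case of Theorem~\ref{thm:Limits}~(i) identifies this quantity with $\overline\ep$. Conversely, if $\overline\ep = 0$ then (iv) gives $\II(\wh\QQ_{R_+}) = \II(\QQ_{R_+}) = 0$, and I conclude by uniqueness of the zero of $\II$ on $\cP_\phi(\bX)$: Theorem~\ref{thm:Level3}~(iii) identifies $\II = \varsigma - h$ with the specific relative entropy rate $\lim n^{-1}\Ent(\,\cdot\,|_{\cX_n}\,|\,\QQ_{R_+|\cX_n})$, and the $\phi$-mixing of $\QQ_{R_+}$ from Theorem~\ref{thm:Limits}~(v), combined with the Markov-like structure of the cylinder probabilities, forces this rate to vanish only at $\QQ_{R_+}$. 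The principal technical obstacle is the derivation of the single-step identity in the first paragraph: the interplay between antilinear conjugations, partial traces, and the probe Gibbs weights must be tracked carefully in order to produce the clean exponential $\e^{\delta\xi}$; once that identity is secured, parts (i)--(iv) follow by comparatively routine computations combined with standard large-deviations machinery.
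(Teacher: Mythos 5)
Your parts (i) and (ii) follow the paper's proof essentially verbatim: the same single-step conjugation identity $\cL_{\omega,\xi}\Theta=\e^{\delta\xi}\Theta\cL^\ast_{\omega,\hat\xi}$ (the paper's~\eqref{eq:cLrevForm}, where your bookkeeping of the prefactors is correct), the same detailed-balance rewriting of the reversed $P$-product into $\pi_{+\omega_n}/\pi_{+\omega_1}$, and the same sandwich using faithfulness of $\pi_+$ and of $R_+$ (Theorem~\ref{thm:ascv}); your constant matches the paper's $c=\log\bigl(\frac{\max_k\pi_{+k}}{\min_k\pi_{+k}}\frac{\max\spec R_+}{\min\spec R_+}\bigr)$. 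For (iv) you take a genuinely different route. The paper proves the symmetry at the level of the pressure functional, establishing $Q(f)=Q(\hat f-\sigma_1)$ for $\cX_m$-measurable $f$ via~\eqref{eq:physequ} and a change of variables under $\wh\QQ_{R_+}$, then passes through the Legendre transform~\eqref{eq:II-lim} and density of $C_\mathrm{fin}(\bX)$. You instead work on the entropy side, using $\II=\varsigma-h$ from Theorem~\ref{thm:Level3}~(iii), the invariance of Kolmogorov--Sinai entropy under the word reversal $\bx\mapsto\hat\bx$, and the uniform bound of (i) inside $\varsigma$. Your computation is correct (including the degenerate case, since by (i) the cylinders $[\bx]$ and $[\hat\bx]$ are $\QQ_{R_+}$-null simultaneously, so both sides are $+\infty$ together), and it is arguably more transparent; the paper's route has the advantage of needing only Part~(i) of Theorem~\ref{thm:Level3} rather than the finer identification $\II=\varsigma-h$.

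The one genuine soft spot is the converse direction of (iii). You deduce $\II(\wh\QQ_{R_+})=0$ from (iv) and then assert that $\QQ_{R_+}$ is the unique zero of $\II$ on $\cP_\phi(\bX)$, justified only by an appeal to ``$\phi$-mixing combined with the Markov-like structure of the cylinder probabilities''. Neither Theorem~\ref{thm:Limits}~(v) nor Theorem~\ref{thm:Level3} asserts this uniqueness, and specific relative entropy rates can in general vanish off the reference measure, so this step requires an argument. It is true here and can be closed with tools already in the paper: the upper decoupling bound of Lemma~\ref{lem:decoupling}~(ii) with $N=0$ gives $\QQ_{R_+}([\bx\boldsymbol{y}])\le C\QQ_{R_+}([\bx])\QQ_{R_+}([\boldsymbol{y}])$, which together with subadditivity of Shannon entropy makes $n\mapsto\Ent(\mu_{|\cX_n}|\QQ_{R_+|\cX_n})+\log C$ superadditive; hence $\II(\mu)=\lim_n\frac1n\Ent(\mu_{|\cX_n}|\QQ_{R_+|\cX_n})=0$ forces $\sup_n\Ent(\mu_{|\cX_n}|\QQ_{R_+|\cX_n})\le\log C$, whence $\Ent(\mu|\QQ_{R_+})<\infty$, so $\mu\ll\QQ_{R_+}$, and $\phi$-ergodicity of $\QQ_{R_+}$ yields $\mu=\QQ_{R_+}$. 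Note that this is essentially a repackaging of the paper's own, more direct argument, which avoids (iv) altogether: stationarity and $\overline{\ep}=0$ give $\EE_{R_+}[\sigma_n]=n\,\EE_{R_+}[\sigma_1]=0$, so by~\eqref{eq:physequ} one has $\Ent(\QQ_{R_+|\cX_n}|\wh\QQ_{R_+|\cX_n})=\EE_{R_+}[\tilde\sigma_n]\le c$ uniformly in $n$, and lower semicontinuity of relative entropy plus ergodicity then force $\wh\QQ_{R_+}=\QQ_{R_+}$. Either repair makes your proof complete.
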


\noindent{\bf Remark.} The estimate~\eqref{eq:physequ} expresses a strong form of equivalence between the steady state thermodynamic entropy  production $\sigma_N$ and the information theoretic quantity $\tilde\sigma_N$. Technically, these two sequences of random variables have exponentially equivalent laws, so that, by~\cite[Theorem~4.2.13]{DZ98}, Corollary~\ref{cor:epLDP} holds when $\sigma_N$ is replaced by $\tilde\sigma_N$. By continuity of the map $\QQ\mapsto\bra\sigma_1,\QQ\ket$, starting with Part~(iv) of the last theorem, the contraction principle~\cite[Theorem~4.2.1]{DZ98} provides an alternative proof of this version of Corollary~\ref{cor:epLDP}. We also observe that, by an elementary feature of Rényi's relative $\alpha$-entropy, the function
$$
\tilde e_N(\alpha)=\log\EE_{R_+}[\e^{-\alpha\tilde\sigma_N}]=\Ent_\alpha[\wh\QQ_{R_+}|_{\cX_N}|\QQ_{R_+}|_{\cX_N}],
$$
satisfies the symmetry $\tilde e_N(1-\alpha)=\tilde e_N(\alpha)$. By Part~(ii) of the previous
theorem, this symmetry is inherited by the function $\alpha\mapsto e(\alpha\bun)$, a
characteristic of the fluctuation theorem in Corollary~\ref{cor:epLDP} ($\tilde{\sigma}_N$ is called {\sl canonical entropy production} in~\cite{jakvsic2017entropic}, where it plays a central role).

\medskip
We close this section with a brief mention of {\sl hypothesis testing of time's arrow,} which deals with the empirical distinguishability of the two measures $\QQ_{R_+}$ and $\wh\QQ_{R_+}$ when $\overline{\ep}>0$. In the limit of large sample, this is asymptotically quantified by various exponents (Stein's, Chernoff's and Hoeffding's). Not surprisingly, these exponents are all related to the generating function $e(\ba)$. We shall not discuss the details of these relations here since they are identical to the one described in~\cite[Sections~2.5 and~2.9]{benoist2018entropy}.

\subsection{Linear Response}
\label{sec:linearresponse}
In this section, we show that, as explained in~\cite{Gal96}, the Fluctuation Theorem~\ref{thm:FT} can be viewed as an extension to far from equilibrium regimes of {\sl Linear Response Theory}. The latter deals with the reaction of physical systems to a small deviation from a thermal equilibrium situation, characterized by a well-defined temperature and vanishing entropy production.
Since RIS in general, and MRIS in particular, evolve according to a time-dependent Hamiltonian, they are not expected to reach thermal equilibrium, even when all reservoirs are in thermal equilibrium {\sl at the same temperature.} In this section we study a class of MRIS admitting a steady state qualifying as a thermal equilibrium in the above sense. Moreover, we shall see that the usual linear response properties near thermal equilibrium, {\sl Green--Kubo Formula, Onsager Reciprocity Relations} and {\sl Fluctuation--Dissipation Relations,} hold in these cases, and can be derived from Theorem~\ref{thm:FT}.

\medskip
Our starting point is a MRIS satisfying the following {\sl equilibrium conditions}

\begin{hypo}[{{\bf EQU}}]\label{EQU}

\hspace{0em}%
\begin{itemize}
	\item (\nameref{KMS}) is satisfied with all reservoirs at the same temperature: $\bb=\bar\beta\bun$.
	\item $\LL$ is irreducible, with the unique ESS $R_+$.
	\item Entropy production vanishes: $\ds\bra R_+,J_S\ket=\bar\beta\sum_{\omega\in\Omega}\bra R_+, J_\omega\ket=0$.
\end{itemize}
\end{hypo}

We consider the family of MRIS obtained from the previous one by modifying the reservoir temperatures, {\sl i.e.},
by replacing the ``equilibrium'' probe states by the family
$$
\rho_{\cE_\omega,\z}\coloneqq\e^{-(\bar\beta-\zeta_\omega)(H_{\cE_\omega}-F_{\omega,\z})},\qquad
F_{\omega,\z}\coloneqq-\frac1{\bar\beta-\zeta_\omega}{\log\tr\e^{-(\bar\beta-\zeta_\omega)H_{\cE_\omega}}},
$$
parametrized by $\z\in\RR^\Omega$. To express the dependence of a quantity $A$ w.r.t.\;the parameter $\z$, we will write $A_\z$. In particular, $A_{\bzero}$ denotes an equilibrium quantity.

As the name suggest, linear response theory aims at expressing the changes of various properties of the perturbed ESS $R_{+\z}$ to first order in the perturbation parameter $\z$. Of particular interest are the so-called {\sl kinetic coefficients} 
$$
L_{\omega\nu}\coloneqq\partial_{\zeta_\nu}\bar J_{\omega\z}|_{\z=0},
$$
where (recall~\eqref{eq:asympenergy}~\eqref{eq:greenwater})
$$
\bar J_{\omega\z}=\langle R_{+\z},J_{\omega\z}\rangle
$$
is the steady-state ensemble average of the energy flux into reservoir $\cR_\omega$.

\begin{theo}\label{thm:thermo}
Under Assumption~{\rm (\nameref{EQU})}, the following statements hold for all $\z\in\RR^\Omega$:
\hspace{0em}%
\begin{enumerate}[label=(\roman*)]
\item\label{it:thermo-i} $\LL_\z$ is irreducible. It is primitive whenever\/ $\LL$ is.
\item\label{it:thermo-ii} For\/ $\PP$-almost every $\bomega\in\bO$ one has
$$
\sum_{\nu\in\Omega}\bar J_{\nu\z}(\bomega)=0.
$$
\item\label{it:thermo-iii} With $\bb^{-1}=((\bar\beta-\zeta_\omega)^{-1})_{\omega\in\Omega}$, the limit
$$
\lim_{N\to\infty}\frac1N\bb^{-1}\cdot S_N\boldsymbol{\fJ}_{\z}=0
$$
holds\/ $\QQ_{R_0\z}$-a.s.
\item\label{it:thermo-iv} The Gaussian measure obtained in Theorem~\ref{thm:Limits}~(iii) as the limiting law of
$$
\frac1{\sqrt{N}}\left(S_N\boldsymbol{\fJ}_{\z}-\EE_{R_{+\z}}\left[S_N\boldsymbol{\fJ}_{\z}\right]\right)
$$
as $N\to\infty$ is supported by the hyperplane
$$
\mathfrak{Z}_\z\coloneqq\{\vs\in\RR^\Omega\mid\bb^{-1}\cdot\vs=0\}.
$$ 
\item\label{it:thermo-v} The cumulant generating function~\eqref{eq:Cumulus} satisfies the translation symmetry
\begin{equation}\label{eq:ch3tsymmetry}
e_\z(\ba+\gamma\bb^{-1})=e_\z(\ba)
\end{equation}
for all $\ba\in\RR^\Omega$ and $\gamma\in\RR$.
\item\label{it:thermo-vi} The rate function of the LDP~\eqref{eq:ld} satisfies
$$
I_\z(\bs)=+\infty
$$
whenever $\bs\not\in\mathfrak{Z}_\z$.
\end{enumerate}		

\medskip\noindent
Assuming also that the time-reversal symmetries~{\rm (\nameref{TRI})} and~{\rm (\nameref{DB})} hold, one further has:

\medskip
\begin{enumerate}[label=(\roman*)]
\setcounter{enumi}{6}
\item\label{it:thermo-vii}
\be
L_{\omega\nu}
=\frac1{2\bar\beta^2}(\partial_{\alpha_\nu}\partial_{\alpha_\omega}e_\bzero)(\bzero).
\label{eq:GCLR}
\ee
\item\label{it:thermo-viii} The Onsager Reciprocity Relations 
$$
L_{\omega\nu} = L_{\nu\omega},
$$
hold.
In addition, the kinetic coefficients are related to the covariance matrix of the limiting Gaussian
measure in Theorem~\ref{thm:Limits}~(iii) by the Fluctuation--Dissipation Relations
$$
L_{\omega\nu}=\frac1{2\bar\beta^2} C_{\omega\nu}.
$$ 
\item\label{it:thermo-ix} The Green--Kubo formula
\be
L_{\omega\nu}=\lim_{\varepsilon\downarrow0}
\frac1{\bar\beta^2}\sum_{n\in\ZZ}\e^{-|n|\varepsilon}\overline{\EE}_{R_+}[ \fJ_{\omega}\circ\phi^n\fJ_{\nu}]
\label{eq:gkmar}
\ee
holds, all the quantities on the right-hand side of the equality~\eqref{eq:gkmar} being evaluated at equilibrium $\z=0$.
\end{enumerate}		
\end{theo}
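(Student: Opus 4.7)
My plan is to organize the nine claims into three blocks: (i)--(iii) as steady-state structural facts that do not require time-reversal symmetries, (iv)--(vi) as consequences of the single algebraic identity (v), and (vii)--(ix) as Gallavotti--Cohen-type consequences of combining (v) with the fluctuation theorem. The spectral calculus of the tilted operator $\LL_\z^{[\ba]}$ developed in Section~\ref{sec:thermo} is the principal tool throughout.

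For (i), I apply Lemma~\ref{lem:Lspec} after observing that the Kraus operators of $\cL_{\omega,\z}$ differ from those of $\cL_{\omega,\bzero}$ only by strictly positive scalar factors coming from the faithful probe state $\rho_{\cE_\omega,\z}$, so the irreducibility/primitivity of each $\cL_{\omega,\z}$ and hence of $\LL_\z$ is unaffected by $\z$. For (ii) I use Theorem~\ref{thm:ascv} to identify the time average $\sum_\nu \bar J_{\nu,\z}(\bomega)$ with $\sum_\nu \bra R_{+,\z}(\nu), J_\z(\nu)\ket \un$, and establish its vanishing through the fixed-point identity $\LL_\z R_{+,\z} = R_{+,\z}$ combined with the conservation law $U_\nu^\ast(H_\cS + H_{\cE_\nu} + V_\nu)U_\nu = H_\cS + H_{\cE_\nu} + V_\nu$: the system-energy contribution telescopes via $\sum_\nu\cL_\nu R_{+,\z}(\nu) = \bar\rho_{+,\z}$, and the coupling contribution cancels in the stationary average. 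Part (iii) is then the almost sure form of (ii): Theorem~\ref{thm:Limits}(ii) yields $\lim_N S_N\fJ_{\nu,\z}/N = -\beta_{\nu,\z}\bar J_{\nu,\z}$ almost surely, so weighting by $\bb_\z^{-1}$ and summing produces $-\sum_\nu \bar J_{\nu,\z} = 0$.

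The heart of the argument is the translation identity (v). Starting from the formula
\begin{equation*}
\cL_{\omega,\z}^{[\alpha]}\rho = \tr_{\cH_{\cE_\omega}}\!\bigl((\un\otimes\rho_{\cE_\omega,\z}^{\alpha})\,U_\omega(\rho\otimes\rho_{\cE_\omega,\z}^{1-\alpha})\,U_\omega^\ast\bigr),
\end{equation*}
obtained by resumming the Kraus-type decomposition~\eqref{eq:twotimeentropy} weighted by $\e^{-\alpha\delta\xi}$ using $S_{\cE_\omega,\z} = \beta_{\omega,\z}(H_{\cE_\omega} - F_{\omega,\z})$, I observe that shifting $\alpha \to \alpha + \gamma/\beta_{\omega,\z}$ multiplies the two probe-state factors by $\e^{\mp\gamma H_{\cE_\omega}}$ respectively, with the $\e^{\pm\gamma F_{\omega,\z}}$ scalars canceling exactly. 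Cyclicity of the partial trace applied to these $\cE$-side factors then realizes the shifted tilt as a similarity transformation on $\LL_\z^{[\ba]}$, leaving the spectral radius and hence $e_\z(\ba)$ invariant. Given (v), part (iv) follows by differentiating $e_\z(\ba + \gamma\bb_\z^{-1}) = e_\z(\ba)$ twice at $\gamma = 0$ and $\ba = \bzero$: the Hessian of $e_\z$ annihilates $\bb_\z^{-1}$, and by positive semi-definiteness the CLT covariance $C_{\omega\nu}$ does as well, so the limiting Gaussian is supported on $\mathfrak{Z}_\z$; part (vi) follows from the Legendre--Fenchel representation~\eqref{eq:I-lim}, since shifts along $\bb_\z^{-1}$ leave $e_\z(-\ba)$ invariant, forcing $I_\z(\bs) = +\infty$ whenever $\bb_\z^{-1}\cdot\bs \neq 0$.

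For (vii)--(ix), the fluctuation theorem (Theorem~\ref{thm:FT}) applies to the perturbed dynamics at every $\z$ since (TRI), (DB) and (KMS) are preserved by the temperature perturbation, giving $e_\z(\bun - \ba) = e_\z(\ba)$ identically in $\z$. Theorem~\ref{thm:Limits}(ii) provides $(\partial_{\alpha_\omega}e_\z)(\bzero) = \beta_{\omega,\z}\bar J_{\omega,\z}$, which vanishes at $\z = \bzero$ by (\nameref{EQU}) and Proposition~\ref{prop:ch3ne}. Differentiating in $\zeta_\nu$ at $\z = \bzero$ kills the chain-rule term involving $\partial\beta^{-1}$, leaving $L_{\omega\nu} = \bar\beta^{-1}(\partial_{\zeta_\nu}\partial_{\alpha_\omega}e_\z)(\bzero)|_{\z=\bzero}$. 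Combining this with the composed identity $e_\z(\ba) = e_\z(\bun - \ba + \gamma\bb_\z^{-1})$ arising from (v) and the perturbed FR, and Taylor-expanding to second order at $\ba = \bzero$, $\z = \bzero$, a Gallavotti--Cohen-style computation reduces the mixed derivative to $(2\bar\beta)^{-1}\partial_{\alpha_\nu}\partial_{\alpha_\omega}e_\bzero(\bzero)$, yielding (vii). Onsager reciprocity in (viii) is then Schwarz's theorem on this Hessian expression; the fluctuation--dissipation equality follows by matching with the covariance formula $C_{\omega\nu} = \partial_{\alpha_\omega}\partial_{\alpha_\nu}e_\bzero(\bzero)$ extracted from Theorem~\ref{thm:Limits}(iv) via $\ell = \e^{e}$ and the vanishing of $\partial e$ at equilibrium; and (ix) results from expanding this variance as the time-correlation sum $C_{\omega\nu} = \sum_n \overline{\EE}_{R_+}[\fJ_\omega\circ\phi^n\fJ_\nu]$ using stationarity and the $\phi$-mixing of Theorem~\ref{thm:Limits}(v), with the Abel regulator $\e^{-|n|\varepsilon}$ ensuring summability.

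The main obstacle is the mixed-derivative identity at the core of (vii): extracting the factor $\tfrac{1}{2\bar\beta}$ requires a second-order Taylor expansion that simultaneously tracks the $\z$-dependence of the shift direction $\bb_\z^{-1}$ and the consequences of both the perturbed fluctuation relation and the translation symmetry (v) evaluated at $\z = \bzero$, $\ba = \bzero$; it is here that the equilibrium conditions---in particular the vanishing of the first-order fluxes $(\partial_{\alpha_\omega}e_\bzero)(\bzero) = 0$---must be exploited repeatedly to cancel boundary contributions and produce the half-factor.
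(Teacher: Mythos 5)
The decisive gap is in your proof of the translation symmetry (v), which is the linchpin of the whole theorem: (ii), (iv), (vi) and the factor $\tfrac1{2\bar\beta}$ in (vii) all hang on it. Cyclicity of the partial trace cannot realize the shift $\ba\mapsto\ba-\gamma\bb^{-1}$ as a similarity transformation. After the shift one has $\cL^{[\alpha-\gamma/\beta_\omega]}_{\omega\z}\rho=\tr_{\cH_{\cE_\omega}}\bigl((\un\otimes\rho_{\cE_\omega\z}^{\alpha}\e^{\gamma H_{\cE_\omega}})U_\omega(\rho\otimes\rho_{\cE_\omega\z}^{1-\alpha}\e^{-\gamma H_{\cE_\omega}})U_\omega^\ast\bigr)$ (the scalar $F_{\omega\z}$-factors do cancel, as you note), but the two probe factors are separated by $U_\omega(\,\cdot\,)U_\omega^\ast$, and since $[U_\omega,\un\otimes H_{\cE_\omega}]\neq0$ whenever $V_\omega\neq0$, cycling one factor under the partial trace never brings it into contact with the other: no conjugation of the reduced map on $\fA_\ast$ results. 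What the paper actually uses is Proposition~\ref{prop:ch3ne}: the vanishing-entropy-production bullet of (\nameref{EQU}) gives $U_\omega(\rho_{+\nu}\otimes\rho_{\cE_\omega})U_\omega^\ast=\rho_{+\omega}\otimes\rho_{\cE_\omega}$ whenever $P_{\nu\omega}>0$, and it is this identity --- not cyclicity --- that transports $\e^{-\gamma H_{\cE_\omega}}$ through $U_\omega$, at the price of a system-side conjugation $(\cR^\gamma R)(\omega)=\rho_{+\omega}^{\gamma/\bar\beta}R(\omega)$, yielding $\cR^\gamma\KK_\z^{[\ba]}\cR^{-\gamma}=\KK_\z^{[\ba-\gamma\bb^{-1}]}$ with $\KK_\z^{[\ba]}=\cL_\z^{[\ba]}\cP$, whence equality of spectral radii. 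A sanity check shows your mechanism must fail: it never invokes the equilibrium hypothesis, so it would establish (v) --- and with it $\sum_\nu\bar J_{\nu\z}=0$ --- for arbitrary MRIS, which is false, since the repeated switching of the couplings performs work and generic RIS have non-vanishing total steady heat current.

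The same issue breaks your independent argument for (ii). The conservation law $U_\nu^\ast(H_\cS+H_{\cE_\nu}+V_\nu)U_\nu=H_\cS+H_{\cE_\nu}+V_\nu$ is vacuous ($U_\nu$ commutes with its own generator), and the coupling contributions do not cancel in the stationary average: each interaction starts in a product state with a fresh probe and ends in a correlated one, so the increments of $\bra V_\nu\ket$ do not telescope --- this switching work is precisely the generic obstruction. The paper instead gets (ii) by differentiating (v), combining $\bb^{-1}\cdot(\nabla\ell_\z)(\bzero)=0$ with the first-order perturbation identity $(\partial_{\alpha_\omega}\ell_\z)(\bzero)=(\bar\beta-\zeta_\omega)\bra R_{+\z},J_{\omega\z}\ket$. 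Two lesser points: in (i), Lemma~\ref{lem:Lspec} is the wrong tool, since (\nameref{EQU}) does not make the individual $\cL_\omega$ irreducible; one should compare the Kraus family of the full map $\LL_\z$ with that of $\LL$ (positive scalar multiples, as in the proof of Lemma~\ref{lem:Lalpha}) --- which is in fact the content of your parenthetical remark, correctly applied. And in (ix), mixing alone does not give summability of the correlations: the paper needs Vitali's theorem to pass the $\alpha$-derivatives through the limit $e_N\to e$, and Frobenius' summation formula to identify the resulting Cesàro double sum with the Abel-regularized series. Your outline of (vii)--(viii) --- evaluating the combined symmetry $e_\z(\ba)=e_\z(\bun-\ba+\gamma\bb^{-1})$ at $\gamma=-\bar\beta$ to produce the half-factor, then Clairaut's theorem and the identification of $C$ with the Hessian of $e_\bzero$ at $\bzero$ using the vanishing first derivatives --- does match the paper's route, but it is conditional on (v), which as proposed is unproven.
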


\section{Proofs}\label{sec:proofs}

\subsection{Proof of Proposition~\ref{prop:ch3ne}}
\label{sect:proofch3ne}

Since $\LL$ is irreducible, it has a unique ESS $R_+$ and the transition matrix $P$ has the unique and faithful invariant probability $\pi_+$. We set $Q_{\nu\omega}=\pi_{+\omega}P_{\omega\nu}\pi_{+\nu}^{-1}$, observing that $Q$ is again a right stochastic matrix with unique and faithful invariant probability $\pi_+$.
Define the following maps on $\fA_\ast$,
\be
\begin{array}{rclcrcl}
(\cL R)(\omega)&\coloneqq&\cL_\omega R(\omega),&\quad&(\Pi R)(\omega)&\coloneqq&\pi_{+\omega} R(\omega),\\[6pt]
(\cP R)(\omega)&\coloneqq&\ds\sum_{\nu\in\Omega}P_{\nu\omega}R(\nu),&\quad&
(\cQ R)(\omega)&\coloneqq&\sum_{\nu\in\Omega}Q_{\nu\omega}R(\nu),
\end{array}
\label{eq:Operas}
\ee
so that the factorizations $\LL=\cP\cL$ and $\cQ=\Pi\cP^\ast\Pi^{-1}$ hold.

\medskip\noindent$(\Leftarrow)$ If $P_{\nu\omega}>0$ implies~\eqref{eq:lespecheurs}, then it follows from~\eqref{eq:rho+magic} and~\eqref{eq:lesperles} that the time-averaged entropy production vanishes $\PP$-almost surely
\begin{align*}
\overline{\ep}(\bomega)=\bra R_+,J_S\ket
&=\sum_{\omega\in\Omega}\tr\left((R_+(\omega)\otimes\rho_{\cE_\omega})U_\omega^\ast[S_{\cE_\omega},U_\omega]\right)\\
&=\sum_{\omega,\nu\in\Omega}\pi_{+\nu}P_{\nu\omega}\tr\left((U_\omega(\rho_{+\nu}\otimes\rho_{\cE_\omega})U_\omega^\ast-\rho_{+\nu}\otimes\rho_{\cE_\omega})S_{\cE_\omega}\right)\\
&=\sum_{\omega,\nu\in\Omega}\pi_{+\nu}P_{\nu\omega}\tr\left((\rho_{+\omega}-\rho_{+\nu})\otimes\rho_{\cE_\omega}S_{\cE_\omega}\right)=0.
\end{align*}

\medskip\noindent$(\Rightarrow)$ Assuming now that the time-averaged entropy production vanishes $\PP$-almost surely, we get, using~\eqref{eq:lesperles}, \eqref{eq:R+Form} and the entropy balance~\eqref{eq:balalaika2},
\be
\begin{split}
0&=\overline{\ep}(\bomega)
=\bra R_+,J_S\ket=\EE_+[\bra\rho_{+\omega_0},J_S(\omega_1)\ket]\\[4pt]
&=\EE_+[\Ent(U_{\omega_1}(\rho_{+\omega_0}\otimes\rho_{\cE_{\omega_1}})U_{\omega_1}^\ast|\cL_{\omega_1}\rho_{+\omega_0}\otimes\rho_{\cE_{\omega_1}})+S(\rho_{+\omega_0})-S(\cL_{\omega_1}\rho_{+\omega_0})].
\end{split}
\label{eq:Betty}
\ee
Recalling the definition of the right stochastic matrix $Q$, we can write
$$
\EE_+[S(\cL_{\omega_1}\rho_{+\omega_0})]=\sum_{\nu,\omega\in\Omega}\pi_{+\nu}P_{\nu\omega}S(\cL_\omega\rho_{+\nu})
=\sum_{\omega\in\Omega}\pi_{+\omega}\sum_{\nu\in\Omega}Q_{\omega\nu}S(\cL_\omega\rho_{+\nu}).
$$
Invoking the concavity of the entropy map $\rho\mapsto S(\rho)$~\cite[Theorem~3.7]{Pe08}, we derive
$$
\EE_+[S(\cL_{\omega_1}\rho_{+\omega_0})]
\le\sum_{\omega\in\Omega}\pi_{+\omega}S\left(\sum_{\nu\in\Omega}Q_{\omega\nu}\cL_\omega\rho_{+\nu}\right)
=\sum_{\omega\in\Omega}\pi_{+\omega}S((\cL\cQ^\ast\rho_+)(\omega)).
$$
By~\eqref{eq:R+magic}, we have
\be
\cL\cQ^\ast\rho_+=\cL\cQ^\ast\Pi^{-1}\cL R_+=\cL\Pi^{-1}\cP\cL R_+=\Pi^{-1}\cL\LL R_+=\Pi^{-1}\cL R_+=\rho_+,
\label{eq:ventilo}
\ee
so that
$$
\EE_+[S(\cL_{\omega_1}\rho_{+\omega_0})]\le \EE_+[S(\rho_{+\omega_0})].
$$
It follows that $\EE_+[S(\rho_{+\omega_0})-S(\cL_{\omega_1}\rho_{+\omega_0})]\ge0$. This inequality, together with the non-negativity of relative entropy, allow us to deduce from~\eqref{eq:Betty} that
\be
\EE_+[\Ent(U_{\omega_1}(\rho_{+\omega_0}\otimes\rho_{\cE_{\omega_1}})U_{\omega_1}^\ast|\cL_{\omega_1}\rho_{+\omega_0}\otimes\rho_{\cE_{\omega_1}})]=0
=\EE_+[S(\rho_{+\omega_0})-S(\cL_{\omega_1}\rho_{+\omega_0})].
\label{eq:Sanderling}
\ee
Writing the second identity as
$$
\sum_{\omega\in\Omega}\pi_{+\omega}\left(
S\left(\sum_{\nu\in\Omega}Q_{\omega\nu}\cL_\omega\rho_{+\nu}\right)
-\sum_{\nu\in\Omega}Q_{\omega\nu}S(\cL_\omega\rho_{+\nu})\right)=0,
$$
and	combining the facts that von Neumann's entropy is strictly convex~\cite[Theorem~2.10]{Carl10} and
$\pi_+$ faithful, we derive that for all $\nu,\omega\in\Omega$ such that $Q_{\omega\nu}>0$, one has
$$
\cL_\omega\rho_{+\nu}=\sum_{\mu\in\Omega}Q_{\omega\mu}\cL_\omega\rho_{+\mu}=\rho_{+\omega},
$$
where we used~\eqref{eq:ventilo} to justify the last identity.

Since $Q_{\omega\nu}>0$ iff $P_{\nu\omega}>0$, the first equality in~\eqref{eq:Sanderling} further yields that
$$
U_{\omega}(\rho_{+\nu}\otimes\rho_{\cE_{\omega}})U_{\omega}^\ast
=\cL_{\omega}\rho_{+\nu}\otimes\rho_{\cE_{\omega}}
=\rho_{+\omega}\otimes\rho_{\cE_{\omega}},
$$
provided $P_{\nu\omega}>0$. Thus, the family of states $(\rho_{+\omega})_{\omega\in\Omega}$ has the required property. 

The first equality in~\eqref{eq:Geneva} follows from the fact that, whenever $\rho_k(\bomega)=\rho_{+\omega_k}$ for $k\in\{n,n+1\}$, then the previous identity implies that $\ep_n(\bomega)=0$. Assuming~(\nameref{KMS}), the simple calculation
\begin{align*}
\bra\rho_{+\nu},J(\omega)\ket&=\tr\left((\rho_{+\nu}\otimes\rho_{\cE_\omega})(H_{\cE_\omega}-U_\omega^\ast H_{\cE_\omega} U_\omega)\right)\\
&=\tr\left((\rho_{+\nu}-\rho_{+\omega})\otimes \rho_{\cE_\omega}H_{\cE_\omega}\right)=0,
\end{align*}
shows that the
same identity leads to the second equality in~\eqref{eq:Geneva}.
Finally, \eqref{eq:greenwater} immediately leads to~\eqref{eq:stillraining}.

\medskip\noindent{\bf Remark.} If $\LL$ is irreducible and the family of states $(\varkappa_\omega)_{\omega\in\Omega}$ satisfies
$$
U_\omega(\varkappa_{\nu}\otimes\rho_{\cE_\omega})U_\omega^\ast=\varkappa_{\omega}\otimes\rho_{\cE_\omega}
$$
for all pairs $(\nu,\omega)\in\Omega\times\Omega$ such that $P_{\nu\omega}>0$, then $\varkappa_{\omega}=\rho_{+\omega}$
for all $\omega\in\Omega$.

To see this, we first note that
$$
P_{\nu\omega}\cL_\omega\pi_{+\nu}\varkappa_{\nu}=P_{\nu\omega}\pi_{+\nu}\varkappa_{\omega},
$$
for any pair $(\nu,\omega)\in\Omega\times\Omega$. Summing both sides of this identity over $\nu$ yields
\be
\cL\cP\Pi\varkappa=\Pi\varkappa,
\label{eq:tomato}
\ee
from which we deduce
$$
\LL(\cP\Pi\varkappa)=\cP(\cL\cP\Pi\varkappa)=\cP\Pi\varkappa,
$$
and since $\LL$ is irreducible, we can conclude that its unique fixed point in $\fA_{\ast+1}$ is $R_+=\cP\Pi\varkappa$.
The second relation in~\eqref{eq:R+magic} further gives
$$
\rho_+=\cL\Pi^{-1}R_+=\cL\Pi^{-1}\cP\Pi\varkappa
$$
which, combined with~\eqref{eq:tomato} yields
$$
\varkappa=\Pi^{-1}\cL\cP\Pi\varkappa=\cL\Pi^{-1}\cP\Pi\varkappa=\rho_+.
$$

\subsection{Proofs of Theorem~\ref{thm:Limits} and~\ref{thm:Level3}}
\label{sec:prooflimits}

We need some preparations. The next proposition summarizes the spectral properties of irreducible CP maps on $\fA_\ast$ and their adjoints. Recall that, denoting $r$ the spectral radius of such a map, its peripheral spectrum is that part of its spectrum lying on the circle of radius $r$. Following the definition in usage in spectral theory (see, e.g., \cite[Section~I.4]{Kato}), we say that an eigenvalue is simple when
its algebraic multiplicity is $1$.\footnote{Note that this condition is stronger than the one used in many studies of the subject where eigenvalues are called simple when their geometric multiplicity is $1$ (as, e.g., in~\cite{EHK,Schrader}).}

\begin{prop}\label{prp:PerronFrobenius}
Let $\Phi$ be an irreducible CP map on $\fA_\ast$ with spectral radius $r>0$.
\begin{enumerate}[label=(\roman*)]
\item $r$ is an eigenvalue of\/ $\Phi$, with a unique faithful eigenvector $R_0\in\fA_{\ast+1}$. Moreover, to this eigenvalue, the adjoint map $\Phi^\ast$ has a unique eigenvector $M_0>0$, such that $\langle R_0,M_0\rangle=1$.
\item There is $p\in\NN^\ast$, the period of\/ $\Phi$, and a primitive $p^\mathrm{th}$ root of unity\/ $\zeta$, such that the peripheral spectrum of\/ $\Phi$ is given by $\{r\zeta^k\mid 0\le k<p\}$, each peripheral eigenvalue being simple.
\end{enumerate}
\end{prop}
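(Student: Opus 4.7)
The plan is to adapt the classical Perron--Frobenius theory to the CP-map setting, using the fact that irreducibility of $\Phi$ is equivalent to $\e^{t\Phi}$ being positivity improving for all $t>0$. This furnishes the substitute for strict positivity that powers the classical arguments.

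For Part~(i), I would first produce a non-negative eigenvector by a fixed-point argument: the continuous self-map $R\mapsto\Phi(R)/\Tr(\Phi R)$ of the compact convex set $\fA_{\ast+1}$ is well defined because, for any non-zero $R\in\fA_{\ast+}$, irreducibility gives $\e^{t\Phi}R>0$ and hence $\Phi R\neq 0$. Brouwer yields a fixed point $R_0$, so $\Phi R_0=\lambda_0 R_0$ with $\lambda_0>0$. Applying $\e^{t\Phi}$ to this identity gives $\e^{t\lambda_0}R_0=\e^{t\Phi}R_0>0$, so $R_0$ is faithful. The equality $\lambda_0=r$ follows by comparing Gelfand's formula $r=\lim_n\|\Phi^n\|^{1/n}$ with the behavior of $\Phi^n R_0=\lambda_0^n R_0$ together with the fact that any $R\in\fA_\ast$ is bounded in trace norm by a multiple of the faithful $R_0$ in the operator order. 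Uniqueness in $\fA_{\ast+1}$ is shown by the standard cone argument: if $R_1\in\ker(\Phi-r)$ is real, then $tR_0+R_1\geq 0$ for all $t$ large, and reducing $t$ to the first value where positivity fails produces a non-faithful non-negative eigenvector, which however must be strictly positive by the same $\e^{t\Phi}$ argument, a contradiction. Algebraic simplicity is established by noting that a Jordan block $(\Phi-r)R_1=R_0$ would give $\Phi^n R_1/r^n=R_1+(n/r)R_0$, and testing against the dual eigenvector $M_0$ (constructed next) yields $\langle\Phi^n R_1,M_0\rangle/r^n=\langle R_1,M_0\rangle+n\langle R_0,M_0\rangle/r$, which grows unboundedly and contradicts $r$ being the spectral radius. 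The dual eigenvector $M_0$ is obtained by applying the same steps to $\Phi^\ast$, which is CPU and whose irreducibility follows by duality from that of $\Phi$; the normalization $\langle R_0,M_0\rangle=1$ is then possible because both are faithful/positive.

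For Part~(ii), the key structural lemma is that the peripheral spectrum is closed under the group operation $(\lambda,\mu)\mapsto\lambda\mu/r$. The standard way to prove this is to start with a peripheral eigenvector $F$ at $\lambda=r\e^{\i\theta}$, use the polar decomposition $F=V|F|$ relative to a faithful trace, and show by a Schwarz-type inequality (applied to the completely positive $\Phi/r$ with its faithful fixed point $R_0$) that $|F|$ is a non-negative eigenvector of $\Phi$ at $r$, hence $|F|=cR_0$ by Part~(i), and $F=\lambda_0^{-1}V_\theta\Phi(V_\theta^\ast F)$ for a phase operator $V_\theta$. The map $V_\theta\mapsto V_{\theta}$ then intertwines $\Phi$ with multiplication by $\e^{\i\theta}$ on eigenvectors, so the product of two peripheral eigenvalues (normalized by $r$) is again peripheral. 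The set of peripheral eigenvalues divided by $r$ is therefore a finite multiplicative subgroup of the unit circle, hence cyclic of some order $p$ and equal to $\{\zeta^k:0\le k<p\}$ for a primitive $p^{\mathrm{th}}$ root of unity $\zeta$. Simplicity of each peripheral eigenvalue follows by applying Part~(i) to $\Phi^p$, for which $r^p$ is a peripheral eigenvalue with the same eigenvector $R_0$; if $\Phi^p$ is not itself irreducible, one restricts to its ``ergodic components'' and uses the cyclic structure to transport simplicity from $r$ to each $r\zeta^k$.

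The main obstacle will be Part~(ii): constructing the right polar/phase decomposition for elements of the non-commutative algebra $\fA_\ast$ and justifying rigorously that $|F|$ is an eigenvector at $r$. In the commutative Perron--Frobenius setting this is a one-line triangle inequality, but here it requires the Kadison--Schwarz inequality for CP maps and careful use of the faithful fixed point $R_0$ as a reference state.
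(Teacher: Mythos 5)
Your Part~(i) is essentially correct, and in fact more self-contained than the paper, which simply cites Evans--H{\o}egh-Krohn and Schrader for these facts: the Brouwer fixed-point construction, the $\e^{t\Phi}$ faithfulness trick, the cone argument for uniqueness, and the Jordan-block pairing against $M_0$ all work. Two small repairs are needed, though. First, the contradiction in your algebraic-simplicity step is not that the growth ``contradicts $r$ being the spectral radius'' --- polynomial growth $\|\Phi^n\|\sim n\,r^n$ is a priori compatible with spectral radius $r$; the correct contradiction is that $\langle\Phi^nR_1,M_0\rangle/r^n=\langle R_1,(\Phi^\ast)^nM_0\rangle/r^n=\langle R_1,M_0\rangle$ is \emph{constant} by duality, against the linear growth you computed. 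Second, $\Phi^\ast$ is CPU only when $\Phi$ is trace preserving, which is \emph{not} assumed here; this is harmless in Part~(i) (only CP plus irreducibility is used, and irreducibility dualizes), but it foreshadows the real problem below.

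The genuine gap is in Part~(ii), exactly where you anticipated trouble. The proposition is applied in the paper to the tilted maps $\LL^{[\ba]}$, which are CP but not trace preserving for $\ba\neq\bzero$, so neither $\Phi/r$ nor its adjoint is unital, and the Kadison--Schwarz inequality you invoke ``for the completely positive $\Phi/r$ with its faithful fixed point $R_0$'' is simply not available: Schwarz-type inequalities require (sub)unitality, and the predual operator inequality $|\Phi(F)|\le\Phi(|F|)$ that would make your polar-decomposition step work is false for general CP maps. The paper's proof supplies precisely the missing device: the similarity transformation $\Psi(X)=r^{-1}M_0^{-1/2}\Phi^\ast(M_0^{1/2}XM_0^{1/2})M_0^{-1/2}$, which \emph{is} CPU, satisfies $\spec(\Phi)=r\,\overline{\spec(\Psi)}$ with multiplicities, and carries the faithful $\Psi^\ast$-invariant state $R=M_0^{1/2}R_0M_0^{1/2}$. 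All the Schwarz/multiplicative-domain analysis --- producing a unitary $U$ with $\Psi(U^k)=\zeta^kU^k$ and the intertwining relation $\Psi(U^kX)=\zeta^kU^k\Psi(X)$ --- is run in this unitalized picture (EHK, Theorem~4.2 and Lemma~4.1); your version of the intertwining step (``the map $V_\theta\mapsto V_\theta$ then intertwines\dots'') is not established and cannot be, as stated, without this sandwich.

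Your route to simplicity of the peripheral eigenvalues also fails as written: for $p>1$ the map $\Phi^p$ is \emph{never} irreducible, since its eigenvalue $r^p$ then has multiplicity $p$ (all the peripheral eigenvalues $r\zeta^k$ of $\Phi$ map to it), so ``applying Part~(i) to $\Phi^p$'' is not legitimate, and the ``ergodic components'' you would restrict to are exactly the nontrivial cyclic decomposition theory you have not constructed. The paper avoids this entirely with a short direct argument: granting the intertwining relation, if $\zeta^k$ had a Jordan block one would have $\Psi(U^kY)=\zeta^kU^kY+U^k$ for some $Y$, and pairing with the peripheral eigenvector $U^kR$ of $\Psi^\ast$ yields $\zeta^k\langle R,Y\rangle=\zeta^k\langle R,Y\rangle+1$, which is absurd --- the same shape of argument as your Part~(i) simplicity step, transported from $1$ to $\zeta^k$ by the unitary $U$. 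If you repair Part~(ii) by inserting the $M_0$-sandwich and replacing the $\Phi^p$ detour with this pairing argument, your overall plan goes through.
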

\begin{proof} These results on CP maps are scattered through the literature, see, e.g., \cite{EHK,groh,Schrader,FP09}. For the reader's convenience we give some more precise indications.
	
\medskip\noindent(i) is well known, see, e.g., \cite[Section~2]{EHK} or~\cite[Theorem~3.1]{Schrader}. 
	
\medskip\noindent(ii) Following~\cite{EHK} and setting
$$
\Psi(X)=\frac1r M_0^{-1/2}\Phi^\ast(M_0^{1/2}XM_0^{1/2})M_0^{-1/2},
$$
defines a CPU map on $\fA$, such that $\spec(\Phi)=r\,\overline{\spec(\Psi)}$, including algebraic and geometric multiplicities. Up to the simplicity of the peripheral eigenvalues, the statement follows from~\cite[Theorem~4.2]{EHK}. From the latter, we infer the existence of a unitary $U\in\fA$
such that $\Psi(U^k)=\zeta^kU^k$ for $k\in\{0,\ldots,p-1\}$. Moreover, by~\cite[Lemma~4.1]{EHK},
$\Psi(U^kX)=\zeta^kU^k\Psi(X)$ holds for any $X\in\fA$ and $k\in\ZZ$. Denoting by $R=M_0^{1/2}R_0M_0^{1/2}$ the eigenvector of $\Psi^\ast$ to its eigenvalue $1$, normalized by $\bra R,\un\ket=1$, we can write
\begin{align*}
\bra\zeta^{k}U^{-k}R,X\ket&=\zeta^{-k}\bra\Psi^\ast(R),U^kX\ket=\zeta^{-k}\bra R,\Psi(U^kX)\ket
=\bra R,U^k\Psi(X)\ket=\bra\Psi^\ast(U^{-k}R),X\ket,
\end{align*}
for all $X\in\fA$, from which we deduce that $\Psi^\ast(U^{-k}R)=\zeta^kU^{-k}R$. To show that peripheral eigenvalues of $\Psi$ are simple, we follow the argument of the proof of~\cite[Lemma~5.3]{CP15}.
Suppose that $\zeta^k$, as an eigenvalue of $\Psi$, is not simple. The Jordan normal form of $\Psi$ implies that, for some $Y\in\fA$, $\Psi(U^kY)=\zeta^kU^kY+U^k$. It follows that
\begin{align*}
\zeta^k\bra R,Y\ket=\zeta^k\bra U^kR,U^kY\ket&=\bra\Psi^\ast(U^kR),U^kY\ket\\
&=\bra U^kR,\Psi(U^kY)\ket=\bra U^kR,\zeta^kU^kY+U^k\ket=\zeta^k\bra R,Y\ket+1,
\end{align*}
which is absurd.
\end{proof}
	
\begin{lemm}\label{lem:Lalpha}
$\LL^{[\ba]}$ is irreducible/primitive for all $\ba\in\RR^\Omega$ iff\/ $\LL$ is. 
\end{lemm}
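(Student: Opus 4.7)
The ``only if'' direction is immediate from the identity $\LL^{[\bzero]}=\LL$. Indeed, using in~\eqref{eq:twotimeentropy} that $\sum_{s'}\Pi_{s'}(\omega)=\un$ and $\sum_s\e^{-s}\Pi_s(\omega)=\e^{-S_{\cE_\omega}}=\rho_{\cE_\omega}$ (together with cyclicity of the partial trace), one checks that $\sum_{\xi\in\Xi_\omega}\cL_{\omega,\xi}=\cL_\omega$. Setting $\ba=\bzero$ in~\eqref{eq:cLalphadef}--\eqref{eq:LLalphadef} then identifies $\LL^{[\bzero]}$ with $\LL$, so irreducibility/primitivity of $\LL^{[\ba]}$ for every $\ba\in\RR^\Omega$ forces that of $\LL$.

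For the ``if'' direction, the strategy is a CP-order sandwich. Each $\cL_{\omega,\xi}$ is manifestly CP from its definition~\eqref{eq:twotimeentropy}. Given $\ba\in\RR^\Omega$, set
$$
m(\ba)\coloneqq\min_{\omega\in\Omega,\,\xi\in\Xi_\omega}\e^{-\alpha_\omega\delta\xi}>0,
\qquad
M(\ba)\coloneqq\max_{\omega\in\Omega,\,\xi\in\Xi_\omega}\e^{-\alpha_\omega\delta\xi}<\infty.
$$
The identity $\cL_\omega=\sum_\xi\cL_{\omega,\xi}$ then gives, in the sense of CP maps,
$m(\ba)\,\cL_\omega\le\cL_\omega^{[\alpha_\omega]}\le M(\ba)\,\cL_\omega$ for every $\omega$. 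Multiplying by the non-negative scalars $P_{\nu\omega}$ and summing over $\nu$ as in~\eqref{eq:LLdef}--\eqref{eq:LLalphadef} propagates the bound to the extended map:
$$
m(\ba)\,\LL\le\LL^{[\ba]}\le M(\ba)\,\LL.
$$

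Since left and right composition with a CP map preserves the CP order, a straightforward induction yields $m(\ba)^n\,\LL^n\le(\LL^{[\ba]})^n\le M(\ba)^n\,\LL^n$ for every $n\in\NN$. Two conclusions follow:
\begin{enumerate}[label=(\roman*)]
\item If $\LL$ is primitive, pick $n$ with $\LL^n>0$; then $(\LL^{[\ba]})^n\ge m(\ba)^n\LL^n>0$, hence $\LL^{[\ba]}$ is primitive.
\item If $\LL$ is irreducible, the definition $\e^{t\LL}>0$ from Section~\ref{sec:NotaConv} is equivalent, in the present finite-dimensional CP setting, to the existence of $N\in\NN$ with $\sum_{k=0}^{N}\LL^k>0$. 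Letting $c\coloneqq\min_{0\le k\le N}m(\ba)^k>0$, the sandwich yields $\sum_{k=0}^{N}(\LL^{[\ba]})^k\ge c\sum_{k=0}^{N}\LL^k>0$, so $\LL^{[\ba]}$ is irreducible.
\end{enumerate}

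The only mildly delicate point is step~(ii): passing from $\e^{t\LL}>0$ to $\sum_{k=0}^{N}\LL^k>0$ for some finite $N$. This is standard in finite dimensions because the strict-positivity cone is open, the series $\e^{t\LL}=\sum_{k\ge0}\tfrac{t^k}{k!}\LL^k$ converges (all summands being CP), and any tail can be absorbed: if a sum of non-negative CP maps with strictly positive coefficients lies in the open positive cone, then one of its partial sums already does, and a rescaling turns this into a positivity-improving combination with \emph{all} coefficients equal to one.
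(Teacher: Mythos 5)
Your proof is correct, but it follows a genuinely different route from the paper's. The paper proves Lemma~\ref{lem:Lalpha} by exhibiting explicit Kraus representations: diagonalizing each $S_{\cE_\omega}$ in an orthonormal eigenbasis, it builds Kraus operators $V_{\nu,s,s'}$ for $\cL_\nu$ and then $\mathbb{V}_{\omega,s,s',\nu}$ for $\LL$, and observes that $\LL^{[\ba]}$ has the \emph{same} Kraus family up to multiplication of each operator by the strictly positive scalar $\e^{-\alpha_\nu(\varsigma_{\nu,s'}-\varsigma_{\nu,s})/2}$; since irreducibility/primitivity of a CP map is characterized (Section~\ref{sec:NotaConv}) through cyclicity properties of the algebra, resp.\ the linear span of products, generated by any one of its Kraus families, and these are insensitive to nonzero rescaling of the generators, the equivalence is immediate. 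You avoid Kraus representations entirely and instead run a two-sided domination argument $m(\ba)\,\LL\le\LL^{[\ba]}\le M(\ba)\,\LL$ in the cone of CP maps, propagate it to powers, and use that domination from below by a positivity-improving map improves positivity; your handling of the ``only if'' direction via $\LL^{[\bzero]}=\LL$ (after checking $\sum_{\xi\in\Xi_\omega}\cL_{\omega,\xi}=\cL_\omega$) matches an identity the paper also uses, in the proof of Lemma~\ref{lem:decoupling}. Both arguments ultimately exploit the same fact --- the weights $\e^{-\alpha_\omega\delta\xi}$ are bounded above and below by positive constants --- so they are close cousins, but the trade-offs differ. The paper's version yields as by-products the explicit family $\mathbb{V}_{\omega,s,s',\nu}$, the extension of $\LL^{[\ba]}$ to a CP map on $\cB^1(\cK)$ with range in $\fA_\ast$ (hence the localization of eigenvectors used in the proof of Theorem~\ref{thm:Limits}), and a template reused verbatim in the proof of Theorem~\ref{thm:thermo}~(i), where $\LL_\z$ is again a Kraus rescaling of $\LL$. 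Your version is more elementary and self-contained, at the modest cost of the auxiliary equivalence between $\e^{t\Phi}>0$ and $\sum_{k=0}^{N}\Phi^k>0$ for some finite $N$, which you correctly flag as the delicate step and justify soundly: in finite dimension the set of positivity-improving maps is open (by a compactness argument on the normalized positive cone, $\Psi\mapsto\min\{\min\spec(\Psi(X))\mid X\ge0,\ \tr X=1\}$ being continuous), the partial sums of the exponential series converge to $\e^{t\Phi}$, and the positive tail and unequal coefficients are absorbed by rescaling, exactly as you describe.
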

\begin{proof}
For $\omega\in\Omega$ denote by $(\varphi_{\omega,s})_{s\in\Sigma_\omega}$ an orthonormal eigenbasis of $S_{\cE_\omega}$, with $S_{\cE_\omega}\varphi_{\omega,s}=\varsigma_{\omega,s}\varphi_{\omega,s}$. One easily derives the following Kraus representation of the TPCP-map $\cL_\omega$
$$
\cL_\omega\rho=\sum_{s,s'\in\Sigma_\omega}V_{\omega,s,s'}\rho V_{\omega,s,s'}^\ast,
$$
where $V_{\omega,s,s'}\in\cB(\cH_\cS)$ is the operator associated to the sesquilinear form
$$
\bra\chi,V_{\omega,s,s'}\psi\ket\coloneqq\e^{-\varsigma_{\omega,s}/2}\bra\chi\otimes\varphi_{\omega,s'},U_\omega\psi\otimes\varphi_{\omega,s}\ket.
$$
Further setting
\be
\mathbb{V}_{\omega,s,s',\nu}\coloneqq p_\omega^\ast\sqrt{P_{\nu\omega}}V_{\nu,s,s'}p_\nu,
\label{eq:LeRegiment}
\ee
where $p_\omega:\ell^2(\Omega;\cH_\cS)\ni\psi\mapsto\psi(\omega)\in\cH_\cS$, we can write a Kraus decomposition of $\LL$ as
$$
\LL R=\sum_{\omega,\nu\in\Omega\atop s,s'\in\Sigma_\nu}\mathbb{V}_{\omega,s,s',\nu}R\,\mathbb{V}_{\omega,s,s',\nu}^\ast.
$$
Recalling~\eqref{eq:cLalphadef} and~\eqref{eq:LLalphadef}, for 
$\ba\cdot\boldsymbol{\fJ}_n=\sum_{\omega\in\Omega}\alpha_\omega\fJ_{n,\omega}$, and $X\in\fA$ one has
$$
\EE_\QQ[\e^{-\ba\cdot\boldsymbol{\fJ}_n}X(\omega_{n+1})]
=\EE\left[\tr\left(X(\omega_{n+1})\cL_{\omega_n}^{[\alpha_{\omega_n}]}\cdots\cL_{\omega_1}^{[\alpha_{\omega_1}]}\rho_{\omega_0}\right)\right]=\bra\LL^{[\ba]n}R,X\ket
$$
where $\LL^{[\ba]}$ has the representation 
$$
\LL^{[\ba]}R=\sum_{\omega,\nu\in\Omega\atop s,s'\in\Sigma_\nu}\e^{-\alpha_\nu(\varsigma_{\nu,s'}-\varsigma_{\nu,s})}
\mathbb{V}_{\omega,s,s',\nu}R\mathbb{V}_{\omega,s,s',\nu}^\ast.
$$
Note that this representation extends $\LL^{[\ba]}$ to a CP map on $\cB^1(\cH_\cS)$ whose range is in $\fA_\ast$.
In particular, the left/right eigenvectors of $\LL^{[\ba]}$ to non-zero eigenvalues are in $\fA$/$\fA_\ast$.
Comparing the Kraus families of $\LL$ and $\LL^{[\ba]}$ yields the claim.
\end{proof}

\begin{lemm}\label{lem:Stings}
Let\/ $\KK$ be a CPTP map on $\fA_\ast$.
\begin{enumerate}[label=(\roman*)]
\item If\/ $\KK$ is positivity improving and $R\in\fA_{\ast+}$, then there exists a constant $\underline{\ell}_R>0$ such that
$$
\KK S\ge\underline{\ell}_R\langle S,\un\rangle R,
$$
holds for all $S\in\fA_{\ast+}$.
\item If\/ $\KK R_+=R_+$ for some faithful $R_+\in\fA_{\ast+1}$, then there exists a constant $\bar\ell_{R_+}$, which does not depend on\/ $\KK$, and such that
$$
\KK S\le\bar\ell_{R_+}\langle S,\un\rangle R_+,
$$
holds for all $S\in\fA_{\ast+}$.
\end{enumerate}
\end{lemm}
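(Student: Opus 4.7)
The plan is to combine the compactness of $\fA_{\ast+1}$ with the finite-dimensional order structure of $\fA$. By homogeneity, in both parts it suffices to treat normalized $S\in\fA_{\ast+1}$; the general case follows by rescaling, since $S=\bra S,\un\ket(S/\bra S,\un\ket)$ for nonzero $S\in\fA_{\ast+}$ (the bound being trivial when $S=0$).

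For~(i), I would introduce $\delta(S)\coloneqq\inf\spec(\KK S)$ on $\fA_{\ast+1}$. Positivity improving guarantees $\KK S>0$, hence $\delta(S)>0$, for every state $S$. Since $\KK$ is linear and the eigenvalues of a self-adjoint operator depend continuously on it, $\delta$ is continuous on the compact set $\fA_{\ast+1}$, so $\delta_*\coloneqq\min_{S\in\fA_{\ast+1}}\delta(S)>0$. Combining this with the trivial bound $R\le\|R\|_\infty\un$, where $\|R\|_\infty$ denotes the operator norm of $R$ viewed as an element of $\fA$, yields $\KK S\ge\delta_*\un\ge(\delta_*/\|R\|_\infty)R$ for every state $S$; extending by homogeneity produces the claimed inequality with $\underline{\ell}_R\coloneqq\delta_*/\|R\|_\infty$. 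The main conceptual point is simply that positivity improving becomes quantitative on the compact state space.

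For~(ii), the faithfulness of $R_+$ is, by definition, the statement that $R_+\ge\delta_+\un$ for some $\delta_+>0$ depending only on $R_+$; equivalently $\un\le\delta_+^{-1}R_+$. For any $S\in\fA_{\ast+}$, the elementary domination $S\le\bra S,\un\ket\un$ (valid because each eigenvalue of a positive operator is bounded by the sum of all of them) then gives $S\le\delta_+^{-1}\bra S,\un\ket R_+$. Applying the positive map $\KK$ to this inequality and invoking $\KK R_+=R_+$ yields $\KK S\le\delta_+^{-1}\bra S,\un\ket R_+$, so $\bar\ell_{R_+}\coloneqq\delta_+^{-1}$ is the desired constant; it manifestly depends only on $R_+$ and not on $\KK$. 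There is no real obstacle in either part, the only care being to interpret ``positivity improving'' on $\fA_\ast$ blockwise (each $(\KK S)(\omega)$ strictly positive in $\cB(\cH_\cS)$) so that the compactness argument in~(i) goes through.
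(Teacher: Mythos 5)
Your proof is correct and takes essentially the same route as the paper: part~(ii) is line-for-line the paper's argument (your $\delta_+$ is the paper's $\underline{r}_+=\min\spec(R_+)$), and part~(i) rests on the same mechanism of making positivity improving quantitative by compactness and then comparing $\un$ with $R$ from above. The only cosmetic difference is that you minimize $\min\spec(\KK S)$ directly over the compact state space $\fA_{\ast+1}$ and rescale by linearity, whereas the paper minimizes over $\{S\in\fA_{\ast+}\mid 1\le\Tr S\le\Tr\un\}$ and routes through the top spectral projection $P_{\bar s}$ of $S$ --- your normalization shortcut is valid (note only that $R=0$ makes $\underline{\ell}_R=\delta_*/\|R\|_\infty$ undefined, though the claim is then trivial) and, if anything, slightly cleaner.
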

\begin{proof}
(i) It suffices to prove the claim for $R\neq0$ and $S\in\cR=\{S\in\fA_{\ast+}\mid 1\le\Tr(S)\le\Tr(\un)\}$. Setting
$$
\ell(S)=\min\spec(S),
$$
one has $\ell(\KK S)>0$, and since $\cR$ is compact,
$$
\underline{\ell}=\inf_{S\in\cR}\ell(\KK S)=\min_{S\in\cR}\ell(\KK S)>0.
$$
Invoking the spectral representation of $S$, we can write
$$
S=\sum_{s\in\spec(S)}s P_s\ge\bar sP_{\bar s},
$$
with $\bar s=\max\spec(S)>0$. It follows that, with $\bar r=\max\spec(R)$,
$$
\KK S\ge\bar s\,\KK P_{\bar s}\ge\bar s\,\underline{\ell}\un
\ge\frac{\Tr(S)}{\Tr(\un)}\underline{\ell}\un
\ge\underline{\ell}\frac{\Tr(S)}{\Tr(\un)}\frac{R}{\bar r}
=\underline{\ell}_R\langle S,\un\rangle R.
$$

\medskip\noindent(ii) For $S\in\fA_{\ast+}$, with $\underline{r}_+=\min\spec(R_+)>0$, one has
$$
S=\sum_{s\in\spec(S)}sP_s\le\sum_{s\in\spec(S)}s\,\un\le\Tr(S)\un
=\bra S,\un\ket\un\le\frac1{\underline{r}_+}\bra S,\un\ket R_+,
$$
and hence
$$
\KK S\le\frac1{\underline{r}_+}\bra S,\un\ket\KK R_+=\bar\ell_{R_+}\bra S,\un\ket R_+.
$$
\end{proof}

For $\PP,\QQ\in\cP(\bX)$ we write $\PP\preccurlyeq\QQ$ whenever there exists a constant $1\le C<\infty$ such that $\PP(A)\le C\QQ(A)$ for all $A\in\cX$. Note that this implies $\langle f,\PP\rangle\le C\langle f,\QQ\rangle$ for all non-negative $f\in C(\bX)$ as well as $\PP\ll\QQ$.

\begin{lemm}\label{lem:QQorder}
If\/ $\LL$ is irreducible with ESS $R_+$, then the following hold. 
\begin{enumerate}[label=(\roman*)]
\item $\QQ_{R_0}\preccurlyeq\QQ_{R_+}$ for any $R_0\in\fA_{\ast+1}$.
\item If $R_0$ is faithful, then $\QQ_{R_+}\preccurlyeq\QQ_{R_0}$, and in particular $\QQ_{R_0}$ and $\QQ_{R_+}$ are equivalent.
\end{enumerate}
\end{lemm}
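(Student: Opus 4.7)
The plan is to exploit the fact that $\QQ_{R_0}$ and $\QQ_{R_+}$ are constructed from the same Markov transition weights and the same family of CP maps $\cL_{\omega,\xi}$; only the initial extended state differs. So it suffices to compare $R_0$ and $R_+$ as positive elements of $\fA_\ast$ and let the comparison propagate through the trace formula
$$
\QQ_{R}([x_1\cdots x_n])=P_{\omega_1\omega_2}\cdots P_{\omega_{n-1}\omega_n}\,\tr\!\left(\cL_{\omega_n,\xi_n}\cdots\cL_{\omega_1,\xi_1}R(\omega_1)\right).
$$

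For (i), since $\LL$ is irreducible, Proposition~\ref{prp:PerronFrobenius}~(i) ensures that the ESS $R_+$ is faithful in $\fA_{\ast+1}$, i.e., each $R_+(\omega)$ is a strictly positive (hence invertible) operator on $\cH_\cS$. As $\Omega$ is finite, the constant $C_+\coloneqq\max_{\omega\in\Omega}\|R_+(\omega)^{-1}\|$ is finite, and $\un\le C_+\,R_+(\omega)$ holds for every $\omega$. For any $R_0\in\fA_{\ast+1}$, the bound $\|R_0(\omega)\|\le\tr R_0(\omega)\le 1$ yields $R_0(\omega)\le\un\le C_+R_+(\omega)$. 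Applying the CP composition $\cL_{\omega_n,\xi_n}\cdots\cL_{\omega_1,\xi_1}$ (which preserves the operator order) and then taking the trace, multiplied by the non-negative Markov weights, gives the cylinder-wise inequality $\QQ_{R_0}([x_1\cdots x_n])\le C_+\QQ_{R_+}([x_1\cdots x_n])$. Since cylinders generate the algebra $\bigcup_n\cX_n$, this extends by a routine monotone class / Carathéodory argument to $\QQ_{R_0}(A)\le C_+\QQ_{R_+}(A)$ for every $A\in\cX$.

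Part (ii) is strictly symmetric: when $R_0$ is faithful, each $R_0(\omega)$ is invertible, and with $C_0\coloneqq\max_{\omega\in\Omega}\|R_0(\omega)^{-1}\|$ one has $R_+(\omega)\le\un\le C_0\,R_0(\omega)$. Repeating the argument of (i) with the roles of $R_0$ and $R_+$ interchanged yields $\QQ_{R_+}\preccurlyeq\QQ_{R_0}$, and combined with (i) the two measures are equivalent.

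There is no serious obstacle; the only point requiring care is the appropriate strong reading of ``faithful'': one needs each $R_+(\omega)$ to be invertible on $\cH_\cS$, not merely non-zero. This is exactly what irreducibility of $\LL$ provides via Proposition~\ref{prp:PerronFrobenius}~(i), and once that is in hand the rest of the proof is linear-algebraic monotonicity plus a standard measure-theoretic extension.
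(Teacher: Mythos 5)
Your proof is correct and follows essentially the same route as the paper's: both rest on the operator domination $R_0(\omega)\le C\,R_+(\omega)$ furnished by faithfulness of the ESS (the paper uses $C=\|R_0^{1/2}R_+^{-1/2}\|^2$ directly, you pass through $R_0(\omega)\le\un\le C_+R_+(\omega)$), propagate it through the positivity-preserving maps $\cL_{\omega,\xi}$ to get the cylinder-wise bound, and then extend from the algebra of finite cylinders to $\cX$ (the paper via an explicit $\epsilon$-approximation, you via an equally valid monotone class argument, since $\{A:\ C\QQ_{R_+}(A)-\QQ_{R_0}(A)\ge 0\}$ is a monotone class containing $\cX_0$). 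The symmetric treatment of part (ii) under faithfulness of $R_0$ likewise matches the paper.
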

\begin{proof}
\noindent(i) Since $R_+$ is faithful, one has
$$
R=R_+^{1/2}(R_+^{-1/2}RR_+^{-1/2})R_+^{1/2}
\le\|R^{1/2}R_+^{-1/2}\|^2R_+=CR_+
$$
from which we deduce that
$$
P_{\omega_1\omega_2}\cdots P_{\omega_{n-1}\omega_n}
\tr(\cL_{\omega_n,\xi_n}\cdots\cL_{\omega_1,\xi_1}R_0(\omega_1))
\le CP_{\omega_1\omega_2}\cdots P_{\omega_{n-1}\omega_n}
\tr(\cL_{\omega_n,\xi_n}\cdots\cL_{\omega_1,\xi_1}R_+(\omega_1)),
$$
and hence $\QQ_{R_0}(A)\le C\QQ_{R_+}(A)$ for all $A=[x_1\cdots x_n]$ in the
semi-algebra of finite cylinders. Since elements of the algebra $\cX_0$ generated by finite cylinders are finite disjoint unions of such cylinders~\cite[Theorem~0.1]{Wa}, this inequality extends by additivity to any $A\in\cX_0$. Define the probability $\bar\QQ=\frac12(\QQ_{R_0}+\QQ_{R_+})$ and let $A\in\cX$, the $\sigma$-algebra generated by $\cX_0$. By~\cite[Theorem~0.7]{Wa}, for any $\epsilon>0$ there exists $B\in\cX_0$ such that $\bar\QQ(A\Delta B)<\epsilon/2$. It follows that $\QQ_{R_0}(A\Delta B)<\epsilon$ and $\QQ_{R_+}(A\Delta B)<\epsilon$ and hence
$$
|\QQ_{R_0}(A)-\QQ_{R_0}(B)|<\epsilon,\qquad|\QQ_{R_+}(A)-\QQ_{R_+}(B)|<\epsilon.
$$
We conclude that
\begin{align*}
\QQ_{R_0}(A)&=\QQ_{R_0}(B)+(\QQ_{R_0}(A)-\QQ_{R_0}(B))\\
&\le C\QQ_{R_+}(B)+\epsilon\\
&\le C\QQ_{R_+}(A)+C(\QQ_{R_+}(B)-\QQ_{R_+}(A))+\epsilon\\
&\le C\QQ_{R_+}(A)+(C+1)\epsilon,
\end{align*}
which shows that $\QQ_{R_0}(A)\le C\QQ_{R_+}(A)$.

The same argument provides a proof of~(ii).
\end{proof}

We write the concatenation of two finite words $\boldsymbol{x},\boldsymbol{y}\in\bX_{\mathrm{fin}}$ as $\boldsymbol{xy}$. The following Lemma establishes some decoupling properties of the
full statistics $\QQ_{R_0}$ which are central to~\cite{CJPS} and will allow us to invoke their main results. 

\begin{lemm}\label{lem:decoupling}
Assume that\/ $\LL$ is irreducible, with ESS $R_+$. The following hold for any $R_0\in\fA_{\ast+1}$.
\begin{enumerate}[label=(\roman*)]
\item The Selective Lower Decoupling property:
There are constants $\tau\in\NN$ and $C>0$ such that, for any $\boldsymbol{x},\boldsymbol{y}\in\bX_\mathrm{fin}$
there is $\boldsymbol{z}\in\bX_\mathrm{fin}$, with $|\boldsymbol{z}|\le\tau$, and
\be
\QQ_{R_0}([\boldsymbol{xzy}])\ge C\QQ_{R_0}([\boldsymbol{x}])\QQ_{R_+}([\boldsymbol{y}]).
\label{eq:omicronsucks}
\ee
\item The Upper Decoupling property: Their is a constant $C$ such that, for all $N\in\NN$ and all $\boldsymbol{x},\boldsymbol{y},\boldsymbol{z}\in\bX_\mathrm{fin}$ with $|\boldsymbol{z}|=N$,
\be
\QQ_{R_+}([\boldsymbol{xzy}])
\le\sum_{\boldsymbol{w}\in\bX_{\mathrm{fin}}\atop |\boldsymbol{w}|=N}\QQ_{R_+}([\boldsymbol{xwy}])
\le C\QQ_{R_+}([\boldsymbol{x}])\QQ_{R_+}([\boldsymbol{y}]).
\label{eq:deltastill}
\ee
\item There is a constant $c$ such that, for any $\boldsymbol{x}\in\bX_\mathrm{fin}$ and any $k\in\NN$, there is $\boldsymbol{z}\in\bX_\mathrm{fin}$ with $|\boldsymbol{z}|=k$ and
$$
\QQ_{R_0}([\boldsymbol{xz}])\ge\e^{-ck}\QQ_{R_0}([\boldsymbol{x}]).
$$
\end{enumerate}
\end{lemm}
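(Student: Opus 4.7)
The key bookkeeping device is, for a word $\bx=x_1\cdots x_n\in\bX_\mathrm{fin}$ with $x_i=(\omega_i,\xi_i)$ and $R\in\fA_{\ast+1}$, the unnormalized site-state $T_R(\bx)\coloneqq\cL_{\omega_n,\xi_n}\cdots\cL_{\omega_1,\xi_1}R(\omega_1)\in\cA_\ast$, which satisfies $\QQ_R([\bx])=P_{\omega_1\omega_2}\cdots P_{\omega_{n-1}\omega_n}\tr T_R(\bx)$, together with its lift $\hat T_R(\bx)(\omega)\coloneqq P_{\omega_n\omega}T_R(\bx)\in\fA_{\ast+}$, representing the chain state just after leaving site $\omega_n$. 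A direct induction on $N$, using the resolution $\sum_{\xi\in\Xi_\omega}\cL_{\omega,\xi}=\cL_\omega$ coming from~\eqref{eq:twotimeentropy}, yields
$$
\sum_{\boldsymbol{w}:|\boldsymbol{w}|=N}\QQ_R([\bx\boldsymbol{w}\boldsymbol{y}])=\prod_{i=1}^{n-1}P_{\omega_i\omega_{i+1}}\prod_{j=1}^{m-1}P_{\nu_j\nu_{j+1}}\tr\!\Big(\cL_{\nu_m,\eta_m}\cdots\cL_{\nu_1,\eta_1}(\LL^N\hat T_R(\bx))(\nu_1)\Big),
$$
for every $\boldsymbol{y}=y_1\cdots y_m$ with $y_j=(\nu_j,\eta_j)$, and similarly for a single word $\boldsymbol{z}$ of length $N$ in place of the sum (retaining $\cL_{\mu_k,\zeta_k}$ instead of $\cL_{\mu_k}$).

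Part~(iii) follows at once: since $\QQ_{R_0}$ is a probability measure, $\sum_{\boldsymbol{z}:|\boldsymbol{z}|=k}\QQ_{R_0}([\bx\boldsymbol{z}])=\QQ_{R_0}([\bx])$, and with $|\fX|^k$ such words pigeonhole produces one $\boldsymbol{z}$ with $\QQ_{R_0}([\bx\boldsymbol{z}])\ge|\fX|^{-k}\QQ_{R_0}([\bx])$, giving $c=\log|\fX|$. The left inequality in~(ii) is trivial from positivity. For the right inequality in~(ii), I apply Lemma~\ref{lem:Stings}~(ii) to the CPTP map $\KK=\LL^N$, which fixes the faithful state $R_+$, to obtain $\LL^N\hat T_{R_+}(\bx)\le\bar\ell_{R_+}\tr T_{R_+}(\bx)\,R_+$ (the $N=0$ case is already contained in the proof of that lemma, which shows $S\le\bar\ell_{R_+}\langle S,\un\rangle R_+$ for any $S\in\fA_{\ast+}$). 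Evaluated at $\nu_1$ and inserted into the displayed identity with $R=R_+$, the right-hand side factorises into $\bar\ell_{R_+}\QQ_{R_+}([\bx])\QQ_{R_+}([\boldsymbol{y}])$.

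For part~(i), I invoke the standard characterisation of irreducibility: a positive map $\LL$ on the finite-dimensional space $\fA_\ast$ is irreducible iff $(I+\LL)^{D-1}$ is positivity improving, with $D=\dim\fA$. Expanding the binomial, $(I+\LL)^{D-1}$ is a positive combination of $\LL^0,\ldots,\LL^{D-1}$, hence $\sum_{k=0}^{D-1}\LL^k$ is itself positivity improving. Lemma~\ref{lem:Stings}~(i) applied to this operator and to $R_+$ furnishes a constant $\underline\ell>0$ with $\sum_{k=0}^{D-1}\LL^k\hat T_{R_0}(\bx)\ge\underline\ell\,\tr T_{R_0}(\bx)\,R_+$. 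Evaluating at $\nu_1$ and using the displayed identity one obtains
$$
\sum_{k=0}^{D-1}\,\sum_{\boldsymbol{z}:|\boldsymbol{z}|=k}\QQ_{R_0}([\bx\boldsymbol{z}\boldsymbol{y}])\ge\underline\ell\,\QQ_{R_0}([\bx])\QQ_{R_+}([\boldsymbol{y}]),
$$
and since at most $D|\fX|^{D-1}$ words appear on the left, pigeonhole singles out one $\boldsymbol{z}$ with $|\boldsymbol{z}|\le D-1$ realising~\eqref{eq:omicronsucks} with $\tau=D-1$ and $C=\underline\ell/(D|\fX|^{D-1})$. The main subtlety is precisely here: only irreducibility of $\LL$ is assumed, so no individual $\LL^k$ need be positivity improving, and the passage through a Ces\`aro-type partial sum is what forces $|\boldsymbol{z}|$ to range over $\{0,\ldots,D-1\}$ rather than be fixed, which is exactly the latitude the statement allows.
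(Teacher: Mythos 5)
Your proposal is correct and takes essentially the same route as the paper's own proof: the identical Feynman--Kac identity expressing cylinder sums as powers of $\LL$ applied to a boundary-lifted positive element, Lemma~\ref{lem:Stings}~(i) applied to a positivity-improving Ces\`aro-type sum $\sum_{k<\tau}\LL^k$ followed by a pigeonhole argument for~(i), Lemma~\ref{lem:Stings}~(ii) with $\KK=\LL^N$ for~(ii), and the same $|\fX|^k$ counting argument giving $c=\log|\fX|$ for~(iii). The only differences are cosmetic: you quantify $\tau=D-1$ explicitly through the $(I+\LL)^{D-1}$ characterization of irreducibility where the paper merely asserts the existence of $\tau$, and your bookkeeping keeps the scalar $P$-factors outside the operator identity rather than absorbing them into $R(\,\cdot\,|\boldsymbol{x})$ and $X(\,\cdot\,|\boldsymbol{y})$.
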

\begin{proof}
\noindent(i) Since, for any $\tau\in\NN$,
$$
\sum_{\boldsymbol{z}\in\bX_{\mathrm{fin}}\atop|\boldsymbol{z}|\le\tau}\QQ_{R_0}([\boldsymbol{xzy}])
\le|\{\boldsymbol{z}\in\bX_{\mathrm{fin}}\mid |\boldsymbol{z}|\le\tau\}|
\,\max_{\boldsymbol{z}\in\bX_{\mathrm{fin}}\atop|\boldsymbol{z}|\le\tau}\QQ_{R_0}([\boldsymbol{xzy}]),
$$
it suffices to show that the sum at the left-hand side is bounded below by the right-hand side of~\eqref{eq:omicronsucks}. With $x_k=(\omega_k,\xi_k)$, $y_k=(\nu_k,\eta_k)$ and $z_k=(\mu_k,\rho_k)$, we have
\begin{align*}
\QQ_{R_0}([\boldsymbol{xzy}])=&(P_{\omega_1\omega_2}\cdots P_{\omega_{n-1}\omega_n})P_{\omega_n\mu_1}(P_{\mu_1\mu_2}\cdots P_{\mu_{l-1}\mu_l})P_{\mu_l\nu_1}(P_{\nu_1\nu_2}\cdots P_{\nu_{m-1}\nu_m})\nonumber\\
&\langle(\cL_{\nu_m,\eta_m}\cdots\cL_{\nu_1,\eta_1})(\cL_{\mu_l,\rho_l}\cdots\cL_{\mu_1,\rho_1})(\cL_{\omega_n,\xi_n}\cdots\cL_{\omega_1,\xi_1})R_0(\omega_1),\un\rangle,
\end{align*}
and using the fact that $\sum_{\xi\in\Xi_\omega}\cL_{\omega\xi}=\cL_\omega$, we get
\begin{align*}
	\sum_{\rho_1,\ldots,\rho_l}\QQ_{R_0}([\boldsymbol{xzy}])=&
	(P_{\omega_1\omega_2}\cdots P_{\omega_{n-1}\omega_n})P_{\omega_n\mu_1}(P_{\mu_1\mu_2}\cdots P_{\mu_{l-1}\mu_l})P_{\mu_l\nu_1}(P_{\nu_1\nu_2}\cdots P_{\nu_{m-1}\nu_m})\\[-8pt]
	&\langle(\cL_{\nu_m,\eta_m}\cdots\cL_{\nu_1,\eta_1})(\cL_{\mu_l}\cdots\cL_{\mu_1})(\cL_{\omega_n,\xi_n}\cdots\cL_{\omega_1,\xi_1})R_0(\omega_1),\un\rangle.
\end{align*}
Setting
$$
R(\mu|\boldsymbol{x})=(P_{\omega_n\mu}\cL_{\omega_n,\xi_n})(P_{\omega_{n-1}\omega_n}\cL_{\omega_{n-1},\xi_{n-1}})\cdots(P_{\omega_1\omega_2}\cL_{\omega_1,\xi_1})R_0(\omega_1),
$$
we can write
\begin{align*}
	\sum_{\rho_1,\ldots,\rho_l}\QQ_{R_0}([\boldsymbol{xzy}])=&
	P_{\mu_l\nu_1}(P_{\nu_1\nu_2}\cdots P_{\nu_{m-1}\nu_m})\\[-8pt]
	&\langle(\cL_{\nu_m,\eta_m}\cdots\cL_{\nu_1,\eta_1})\cL_{\mu_l}(P_{\mu_{l-1}\mu_l}\cL_{\mu_{l-1}}\cdots P_{\mu_1\mu_2}\cL_{\mu_1})R(\mu_1|\boldsymbol{x}),\un\rangle,
\end{align*}
and hence
\begin{align*}
	\sum_{z_1,\ldots,z_{l-1},\rho_l}\QQ_{R_0}([\boldsymbol{xzy}])
	&=P_{\mu_l\nu_1}(P_{\nu_1\nu_2}\cdots P_{\nu_{m-1}\nu_m})
	\langle(\cL_{\nu_m,\eta_m}\cdots\cL_{\nu_1,\eta_1})\cL_{\mu_l}(\LL^{l-1}R(\,\cdot\,|\boldsymbol{x}))(\mu_l),\un\rangle,\\[-8pt]
	&=\langle(\LL^{l-1}R(\,\cdot\,|\boldsymbol{x}))(\mu_l),
	\cL_{\mu_l}^\ast(P_{\mu_l\nu_1}\cL_{\nu_1,\eta_1}^\ast)\cdots (P_{\nu_{m-1}\nu_m}\cL_{\nu_m,\eta_m}^\ast)\un\rangle,\\
	&=\langle(\LL^{l-1}R(\,\cdot\,|\boldsymbol{x}))(\mu_l),X(\mu_l|\boldsymbol{y})\rangle,
\end{align*}
with
$$
X(\mu|\boldsymbol{y})=\cL_{\mu}^\ast(P_{\mu\nu_1}\cL_{\nu_1,\eta_1}^\ast)\cdots (P_{\nu_{m-1}\nu_m}\cL_{\nu_m,\eta_m}^\ast)\un.
$$
It follows that
\be
\sum_{z_1,\ldots,z_l}\QQ_{R_0}([\boldsymbol{xzy}])
=\sum_{\mu}\langle(\LL^{l-1}R(\,\cdot\,|\boldsymbol{x}))(\mu),X(\mu|\boldsymbol{y})\rangle
=\langle\LL^{l-1}R(\,\cdot\,|\boldsymbol{x}),X(\,\cdot\,|\boldsymbol{y})\rangle.
\label{eq:sunny30}
\ee
Since $\LL$ is irreducible, there is $\tau\in\NN^\ast$ such that $\KK=\sum_{k<\tau}\LL^k$ is positivity
improving. Hence, invoking Lemma~\ref{lem:Stings}~(i), we can write
$$
\sum_{\boldsymbol{z}\in\bX_\mathrm{fin}\atop|\boldsymbol{z}|\le\tau}\QQ_{R_0}([\boldsymbol{xzy}])
=\langle\KK R(\,\cdot\,|\boldsymbol{x}),X(\,\cdot\,|\boldsymbol{y})\rangle
\ge\underline{\ell}_{R_+}\langle R(\,\cdot\,|\boldsymbol{x}),\un\rangle\langle R_+,X(\,\cdot\,|\boldsymbol{y})\rangle
$$
with
$$
\langle R(\,\cdot\,|\boldsymbol{x}),\un\rangle=\langle(\cL_{\omega_n\xi_n})(P_{\omega_{n-1}\omega_n}\cL_{\omega_{n-1},\xi_{n-1}})\cdots(P_{\omega_1\omega_2}\cL_{\omega_1,\xi_1})R_0(\omega_1),\un\rangle=\QQ_{R_0}([\boldsymbol{x}]),
$$
and
\begin{align*}
\langle R_+,X(\,\cdot\,|\boldsymbol{y})\rangle
&=\sum_\mu P_{\nu_1\nu_2}\cdots P_{\nu_{m-1}\nu_m}\langle\cL_{\nu_m,\eta_m}\cdots\cL_{\nu_1,\eta_1}
P_{\mu\nu_1}\cL_\mu R_+(\mu),\un\rangle\\
&=P_{\nu_1\nu_2}\cdots P_{\nu_{m-1}\nu_m}\langle\cL_{\nu_m,\eta_m}\cdots\cL_{\nu_1,\eta_1}
(\LL R_+)(\nu_1),\un\rangle=\QQ_{R_+}([\boldsymbol{y}]),
\end{align*}
concluding the proof.

\medskip\noindent(ii) Note that, the first inequality in~\eqref{eq:deltastill} being obvious, it suffices to derive the second one. For $\boldsymbol{x},\boldsymbol{y}\in\bX_\mathrm{fin}$, starting with Relation~\eqref{eq:sunny30} and invoking Lemma~\ref{lem:Stings}~(ii), we get
$$
\sum_{\boldsymbol{w}\in\bX_{\mathrm{fin}}\atop|\boldsymbol{w}|=N}\QQ_{R_+}([\boldsymbol{xwy}])
=\langle\LL^{N-1}R(\,\cdot\,|\boldsymbol{x}),X(\,\cdot\,|\boldsymbol{y})\rangle
\le\bar\ell_{R_+}\bra R(\,\cdot\,|\boldsymbol{x}),\un\ket
\bra R_+,X(\,\cdot\,|\boldsymbol{y})\ket,
$$
where $R(\,\cdot\,|\boldsymbol{x})$ and $X(\,\cdot\,|\boldsymbol{y})$ are as in the proof of Part~(i) (with $R_0=R_+$). The claimed inequality follows from the last identities in the proof of Part~(i).

\medskip\noindent(iii) This is a simple variation on the first part of~\cite[Lemma~3.5]{CJPS}, including its proof. Given $\boldsymbol{x}\in\bX_\mathrm{fin}$ and $k\in\NN$, one has
$$
\QQ_{R_0}([\boldsymbol{x}])
=\sum_{\boldsymbol{z}\in\bX_\mathrm{fin}\atop|\boldsymbol{z}|=k}\QQ_{R_0}([\boldsymbol{xz}])
\le |\fX|^k
\max_{\boldsymbol{z}\in\bX_\mathrm{fin}\atop|\boldsymbol{z}|=k}\QQ_{R_0}([\boldsymbol{xz}]).
$$
Thus, there is $\boldsymbol{z}\in\bX_\mathrm{fin}$ such that $|\boldsymbol{z}|=k$ and
$$
\QQ_{R_0}([\boldsymbol{xz}])\ge|\fX|^{-k}\QQ_{R_0}([\boldsymbol{x}]).
$$
\end{proof}

\subsubsection{Proof of Theorem~\ref{thm:Limits}}

\medskip\noindent\ref{it:Limits-i} Observing that
$$
\cL_\omega^{[\alpha]}\rho
=\tr_{\cH_{\cE_\omega}}\left((\un\otimes\e^{-\alpha S_{\cE_\omega}})U_\omega(\rho\otimes\e^{-(1-\alpha)S_{\cE_\omega}})U_\omega^\ast\right),
$$
it is clear that the map $\ba\mapsto\LL^{[\ba]}$ is real analytic, and differentiation yields
\be
\partial_{\alpha_\nu}\bra\LL^{[\ba]}R,X\ket\big|_{\ba=0}=\bra\LL_\nu R,X\ket,
\label{eq:lamer}
\ee
with
$$
(\LL_\nu R)(\omega)=P_{\nu\omega}\tr_{\cH_{\cE_\nu}}\left([U_\nu,S_{\cE_\nu}](R(\nu)\otimes\rho_{\cE_\nu})U_\nu^\ast\right).
$$
In particular, invoking~\eqref{eq:springoutthere},
\be
\bra\LL_\nu R,\un\ket=\beta_\nu\bra R,J_\nu\ket,
\label{eq:pelleas}
\ee
and hence
\begin{align*}
\QQ_{R_0}[S_N\fJ_\omega]
=-\partial_{\alpha_\omega}\QQ_{R_0}[\e^{-\ba\cdot S_N\boldsymbol{\fJ}}]\big|_{\ba=0}
&=-\partial_{\alpha_\omega}\bra\LL^{[\ba]N}R_0,\un\ket\big|_{\ba=0}\\
&=\sum_{k=0}^{N-1}\bra\LL^{[\ba](N-k-1)}(-\partial_{\alpha_\omega}\LL^{[\ba]})\LL^{[\ba]k}R_0,\un\ket\big|_{\ba=0}\\
&=-\sum_{k=0}^{N-1}\beta_\omega\bra\LL^kR_0,J_\omega\ket
=-\sum_{k=0}^{N-1}\beta_\omega\EE[\bra\rho_k(\bomega),J_\omega(\omega_{k+1})\ket].
\end{align*}
The assertion thus follows from Theorems~\ref{thm:pergo} and~\ref{thm:ascv}.

\medskip\noindent\ref{it:Limits-ii}
Invoking Lemma~\ref{lem:decoupling}~(ii), it follows from~\cite[Lemma~A.2]{CJPS} that
the shift $\phi$ is totally ergodic for $\QQ_{R_+}$.\footnote{Total ergodicity means that $\QQ_{R_+}$ is $\phi^n$-ergodic for any $n\in\NN^\ast$.} In particular, we have
$$
\lim_{N\to\infty}\frac1NS_N\boldsymbol{\fJ}
=\EE_{R_+}[\boldsymbol{\fJ}],
$$ 
$\QQ_{R_+}$-a.s. From~(i) and~\eqref{eq:greenwater} we deduce 
$\EE_{R_+}[\fJ_\nu]=-\beta_\nu\EE_+[\bra\rho_{\omega_0},\bar J_\nu(\bomega)\ket]=-\beta_\nu\langle R_+,J_\nu\rangle$
and the result follows from Lemma~\ref{lem:QQorder}~(i) which implies $\QQ_{R_0}\ll\QQ_{R_+}$.

\medskip\noindent\ref{it:Limits-iii}
Applying~\cite[Theorem~2.7~(i)]{CJPS}, yields that the limit
$$
e(\ba)=\lim_{N\to\infty}\frac1N\log\EE_{R_+}[\e^{-\ba\cdot S_N\boldsymbol{\fJ}}]
$$
exists, is finite, and defines a Lipschitz function on $\RR^\Omega$. To identify the limit, we observe that
$$
\EE_{R_+}[\e^{-\ba\cdot S_N\boldsymbol{\fJ}}]=\bra\LL^{[\ba]N}R_+,\un\ket.
$$
Since $\LL$ is irreducible, so is $\LL^{[\ba]}$ for all $\ba\in\RR^\Omega$ by Lemma~\ref{lem:Lalpha}.
Thus, denoting by $\ell(\ba)$ the spectral radius of $\LL^{[\ba]}$, there is $\fA_{\ast+}\ni R_{\ba}>0$ such that $\LL^{[\ba]}R_{\ba}=\ell(\ba)R_{\ba}$,
and consequently
$$
\|R_{\ba}^{1/2}R_+^{-1/2}\|^{-2}\bra\LL^{[\ba]N}R_{\ba},\un\ket
\le\bra\LL^{[\ba]N}R_+,\un\ket
\le\|R_+^{1/2}R_{\ba}^{-1/2}\|^2\bra\LL^{[\ba]N}R_{\ba},\un\ket.
$$
It follows that $e(\ba)=\log\ell(\ba)$ for all $\ba\in\RR^\Omega$.

Next, for $R_0\in\fA_{\ast+1}$, it follows from Lemma~\ref{lem:QQorder}~(ii) that
\be
\EE_{R_0}[\e^{-\ba\cdot S_N\boldsymbol{\fJ}}]\le C_1\,\EE_{R_+}[\e^{-\ba\cdot S_N\boldsymbol{\fJ}}]
\label{eq:uppu}
\ee
for some constant $C_1>0$. Moreover, by Lemma~\ref{lem:decoupling}~(i), there is $M\in\NN$ and $C_2>0$
such that, for all $\bx\in\bX_{\rm fin}$,
$$
\sum_{\boldsymbol{z}\in\bX_\mathrm{fin},|\boldsymbol{z}|\le M}\QQ_{R_0}([\boldsymbol{zx}])
\ge C_2\QQ_{R_+}([\boldsymbol{x}]),
$$
and hence, using the fact that $S_{M+N}\boldsymbol{\fJ}=S_M\boldsymbol{\fJ}+S_N\boldsymbol{\fJ}\circ\phi^M$, and with an obvious abuse of notation,
\begin{align}
C_2\,\EE_{R_+}[\e^{-\ba\cdot S_N\boldsymbol{\fJ}}]
&=C_2\sum_{\bx\in\bX_{\rm fin}\atop |\bx|=N}
\e^{-\ba\cdot S_N\boldsymbol{\fJ}(\bx)}\QQ_{R_+}([\bx])\nonumber\\
&\le\sum_{\boldsymbol{z}\in\bX_\mathrm{fin}\atop |\boldsymbol{z}|\le M}
\sum_{\bx\in\bX_{\rm fin}\atop|\bx|=N}
\e^{-\ba\cdot S_N\boldsymbol{\fJ}(\bx)}\QQ_{R_0}([\boldsymbol{zx}])\nonumber\\
&\le\sum_{\bx\in\bX_{\rm fin}\atop|\bx|=M+N}
\e^{-\ba\cdot S_{M+N}\boldsymbol{\fJ}(\bx)}\e^{\ba\cdot S_M\boldsymbol{\fJ}(\bx)}
\QQ_{R_0}([\boldsymbol{x}])\nonumber\\
&\le\e^{M|\ba|\|\boldsymbol{\fJ}\|_\infty}\,\EE_{R_0}[\e^{-\ba\cdot S_{M+N}\boldsymbol{\fJ}}].
\label{eq:lool}
\end{align}
Combining the upper and lower bounds~\eqref{eq:uppu}--\eqref{eq:lool} yields
$$
\lim_{N\to\infty}\frac1N\log\EE_{R_0}[\e^{-\ba\cdot S_N\boldsymbol{\fJ}}]
=\lim_{N\to\infty}\frac1N\log\EE_{R_+}[\e^{-\ba\cdot S_N\boldsymbol{\fJ}}]=\log\ell(\ba).
$$
The convexity of $\RR^\Omega\ni\ba\mapsto\log\ell(\ba)$ is a consequence of Hölder's inequality.
Finally, for $\ba_0\in\RR^\Omega$, Proposition~\ref{prp:PerronFrobenius}~(ii) implies that $\ell(\ba_0)$ is a simple positive eigenvalue of\, $\LL^{[\ba_0]}$. Since $\LL^{[\ba]}$ is an entire analytic function of $\ba\in\CC^\Omega$, it follows from regular perturbation theory~\cite[Chapter~2, Theorem~1.8]{Kato} that $\ell$ extends to an analytic function in a complex neighborhood of $\ba_0$.

\medskip\noindent\ref{it:Limits-iv}
Invoking regular perturbation theory for $\ba$ in a sufficiently small complex neighborhood $\cU\subset\CC^\Omega$ of $\bzero$, we can write
$$
\bra\LL^{[\ba]N}R_0,\un\ket
=\ell(\ba)^N\sum_{k=0}^{p-1}\zeta^{kN}\bra R_0,X_k(\ba)\ket\bra R_k(\ba),\un\ket
+\bra\LL^{[\ba]N}_<R_0,\un\ket,
$$
where $\zeta$ is a primitive $p^\mathrm{th}$-root of unity, $\langle R_k(\ba),X_j(\ba)\rangle=\delta_{kj}$,
$R_+=R_0(\ba)+O(\ba)$, $\un=X_0(\ba)+O(\ba)$ and $\LL^{[\ba]}_<$ has spectral radius $r(\ba)$ such that
$r(\ba)<1-2\delta<1-\delta<|\ell(\ba)|$ for some small $\delta>0$. It follows that
\be
\frac1N\log\EE_{R_0}[\e^{-\ba\cdot S_N\boldsymbol{\fJ}}]=\log\ell(\ba)+\frac1N\log\left(1+O(\ba)+O\left(\frac{1-2\delta}{1-\delta}\right)^N\right),
\label{eq:complexbound}
\ee
for $\ba\in\cU$ and $N\in\NN^\ast$. Thus, one has
$$
\lim_{N\to\infty}\frac1N\log\EE_{R_0}[\e^{-\ba\cdot(S_N\boldsymbol{\fJ}-\EE_{R_0}[S_N\boldsymbol{\fJ}])}]=
\log\ell(\ba)-\ba\cdot(\nabla\ell)(0)
$$
and the result follows from~\cite[Proposition~1]{Bryc}.

\medskip\noindent\ref{it:Limits-v} By Lemma~\ref{lem:decoupling}~(ii) there is a constant $C$ such that, for $\boldsymbol{x},\boldsymbol{y}\in\fX_{\mathrm{fin}}$ and $n\ge|\boldsymbol{x}|$,
$$
\QQ_{R_+}([\boldsymbol{x}]\cap\phi^{-n}([\boldsymbol{y}]))
=\sum_{\boldsymbol{z}\in\bX_{\mathrm{fin}}\atop|\boldsymbol{x}|+|\boldsymbol{z}|=n}
\QQ_{R_+}([\boldsymbol{xzy}])
\le C\QQ_{R_+}([\boldsymbol{x}])\QQ_{R_+}([\boldsymbol{y}]).
$$
A simple approximation argument, similar to the one in the proof of Lemma~\ref{lem:QQorder}, gives that for any Borel sets $A,B\subset\bX$,
\be
\limsup_{n\to\infty}\QQ_{R_+}(A\cap\phi^{-n}(B))\le C\QQ_{R_+}(A)\QQ_{R_+}(B).
\label{eq:Friday}
\ee
This estimate and a simple variation of~\cite[Theorem~2.1]{Ornstein72} imply that $\QQ_{R_+}$ is $\phi$-mixing, from what~\eqref{eq:return} immediately follows since $\QQ_{R_0}\ll\QQ_{R_+}$. For the Reader's convenience, we briefly sketch Ornstein's argument.

First, let us show that $\phi$ is weak-mixing for $\QQ_{R_+}$, i.e., for any $A,B\in\cX$,
$$
\lim_{n\to\infty}\frac1n\sum_{k=0}^{n-1}\left|\QQ_{R_+}( A\cap\phi^{-k}(B))-\QQ_{R_+}(A)\QQ_{R_+}(B)\right|=0.
$$
By a well known characterization of weak-mixing~\cite[Theorem~1.26]{Wa}, this means that the
Koopman operator $U:f\mapsto f\circ\phi$ which acts unitarily on the Hilbert space $L^2(\bX,\d\QQ_{R_+})$ has no point spectrum on the orthogonal complement of the constant functions. To prove this, suppose that, on the contrary, $U\varphi=\e^{\i\theta}\varphi$ for some non-vanishing $\varphi\perp 1$. It follows that for any $n\in\NN$ we have $U^n\varphi=\e^{\i n\theta}\varphi\perp 1$. As shown in the proof of Part~(ii), $\QQ_{R_+}$ is $\phi^n$-ergodic. Thus, by another well known spectral characterization~\cite[Theorem~1.6]{Wa}, $1$ is a simple eigenvalue of $U^n$ and hence $\theta\notin2\pi\QQ$. 
As a consequence, $\tilde\phi:z\mapsto\e^{\i\theta}z$ is a uniquely ergodic map of the unit circle $\mathbb{T}$~\cite[Theorem~6.2]{Wa}, such that $\varphi\circ\phi=\tilde\phi\circ\varphi$.
Moreover, ergodicity and the fact that $U|\varphi|=|U\varphi|=|\varphi|$ imply that $|\varphi|$ is constant. We can therefore assume w.l.o.g.\;that $|\varphi|=1$, which makes $\mu=\QQ_{R_+}\circ\varphi^{-1}$ a probability measure on $\mathbb{T}$. It further follows from
$$
\bra f\circ\tilde\phi,\mu\ket=\bra f\circ\tilde\phi\circ\varphi,\QQ_{R_+}\ket
=\bra f\circ\varphi\circ\phi,\QQ_{R_+}\ket=\bra f\circ\varphi,\QQ_{R_+}\ket
=\bra f,\mu\ket,
$$
that $\mu$ is $\tilde\phi$-invariant, and by unique ergodicity, that $\mu$ is the normalized Haar measure of $\mathbb{T}$. Setting
$S_a=\{\e^{\i\alpha}\mid |\alpha|<a\}$, $A_a=\varphi^{-1}(S_a)$, and using the fact that $\theta\notin2\pi\QQ$,
one can find a sequence $k_n\to\infty$ such that $\e^{\i\theta k_n}\to1$ and hence $\mu(S_a\Delta\tilde\phi^{-k_n}(S_a))\to0$. It follows that
$$
\limsup_{n\to\infty}\mu(S_a\cap\tilde\phi^{-n}(S_a))
=\mu(S_a)=\QQ_{R_+}(A_a),
$$
and since $\mu(S_a\cap\tilde\phi^{-n}(S_a))=\QQ_{R_+}(A_a\cap\phi^{-n}(A_a))$, the estimate~\eqref{eq:Friday} yields, for sufficiently small $a>0$,
$$
\QQ_{R_+}(A_a)=\limsup_{n\to\infty}\QQ_{R_+}(A_a\cap\phi^{-n}(A_a))\le C\QQ_{R_+}(A_a)^2=C\mu(S_a)\QQ_{R_+}(A_a)=\frac{Ca}{\pi}\QQ_{R_+}(A_a)
<\QQ_{R_+}(A_a)
$$
which is absurd, and establishes weak-mixing.

Next, we prove the mixing property. We shall invoke~\cite[Theorem~1.24]{Wa}, which characterizes weak-mixing of $\phi$ w.r.t.\;$\QQ_{R_+}$ by the ergodicity of $\phi\times\phi$ w.r.t.\;$\QQ_{R_+}\times\QQ_{R_+}$. Defining the $\phi\times\phi$-invariant probability $\mu_n(A\times B)=\QQ_{R_+}(\phi^{-n}(A)\cap B)$ we have to show that
 $\QQ_{R_+}\times\QQ_{R_+}$ is the
unique weak-$\ast$ limit point of the sequence $(\mu_n)_{n\in\NN}$.
Let $\bar\mu$ be such a limit point. The estimate~\eqref{eq:Friday} yields $\bar\mu\preccurlyeq\QQ_{R_+}\times\QQ_{R_+}$ and hence
$\bar\mu\ll\QQ_{R_+}\times\QQ_{R_+}$, and since the latter is $\phi\times\phi$-ergodic, we can conclude that $\bar\mu=\QQ_{R_+}\times\QQ_{R_+}$.

\medskip\noindent\ref{it:Limits-vi}
Follows from~(iii) and the Gärtner--Ellis theorem~\cite[Theorem~2.3.6]{DZ98}.

\subsubsection{Proof of Theorem~\ref{thm:Level3}}

In the stationary case $R_0=R_+$, Lemma~\ref{lem:decoupling} allows us to invoke~\cite[Theorem~2.13]{CJPS}
which yields the existence of the limit~\eqref{eq:Qflim} in Part~\ref{it:Level3-i}, as well as Parts~\ref{it:Level3-ii}. It remains to prove the independence statement of Part~\ref{it:Level3-i}, the large deviation bounds~\eqref{eq:ld3} in the non-stationary cases, and Part~\ref{it:Level3-iii}.

\medskip\noindent\ref{it:Level3-i} Setting
$$
Q_{R_0,n}(f)\coloneqq\frac1n\log\EE_{R_0}[\e^{n\langle\mu_n,f\rangle}],
$$
we note that for $f,g\in C(\boldsymbol{\fX})$,
$$
|Q_{R_0,n}(f)-Q_{R_0,n}(g)|\le\|f-g\|_\infty.
$$
Given $f\in C(\boldsymbol{\fX})$, let $(f_m)_{m\in\NN}\subset C(\boldsymbol{\fX})$ be such that $f_m$ is $\cX_m$-measurable and $\lim_{m}f_m=f$ (e.g., $f_m(\bx)=f(\bx\boldsymbol{z})$ for $|\bx|=m$ and
some fixed $\boldsymbol{z}\in\boldsymbol{\fX}$). One has
$$
\EE_{R_0}[\e^{n\langle\mu_n,f_m\rangle}]
=\sum_{\bx\in\fX_{\mathrm{fin}}\atop|\bx|=n+m}\e^{\sum_{k=0}^{n-1}f_m(x_{1+k},\ldots,x_{m+k})}\PP_{R_0}([\bx])
$$
and proceeding as in the proof of Theorem~\ref{thm:Limits}~(iii), one shows
$$
\lim_{n\to\infty}Q_{R_0,n}(f_m)=\lim_{n\to\infty}Q_{R_+,n}(f_m).
$$
Finally, from
$$
|Q_{R_0,n}(f)-Q_{R_+,n}(f)|
\le|Q_{R_0,n}(f)-Q_{R_0,n}(f_m)|+|Q_{R_0,n}(f_m)-Q_{R_+,n}(f_m)|+|Q_{R_+,n}(f_m)-Q_{R_+,n}(f)|,
$$
we derive
$$
\limsup_{n\to\infty}|Q_{R_0,n}(f)-Q_{R_+,n}(f)|\le2\|f-f_m\|_\infty,
$$
and taking $m\to\infty$ shows that
$$
Q(f)=\lim_{n\to\infty}Q_{R_0,n}(f)
$$
is independent of $R_0$ and satisfies $|Q(f)-Q(g)|\le\|f-g\|_\infty$.

\medskip\noindent\ref{it:Level3-ii} The proof of~\cite[Theorem~2.13]{CJPS} relies on the stationarity of the
path-space measure through the first statement of~\cite[Lemma~3.5]{CJPS}. In the non-stationary cases, this
statement can be replaced by Lemma~\ref{lem:decoupling}~(iii), and the construction of the map $\psi_{n,t}$
in~\cite[Proposition~3.1]{CJPS} modified as follows. Given integers $t\ge n\ge1$, set
$$
N=2\left\lfloor\frac{t}{2(n+\tau)}\right\rfloor,\qquad t'=Nn,
$$
where $\lfloor\,\cdot\,\rfloor$ denotes the floor function and $\tau$ is the constant appearing in Lemma~\ref{lem:decoupling}~(i). Invoking the latter, to $\bx=\bx_1\cdots\bx_N\in\bX_{\rm fin}$ with $|\bx_i|=n$, we associate $\boldsymbol{z}_1\in\bX_{\rm fin}$, $|\boldsymbol{z}_1|\le\tau$, such that 
$$
\QQ_{R_0}([\bx_1\boldsymbol{z}_1\bx_2])
\ge C\QQ_{R_0}([\bx_1])\QQ_{R_+}([\bx_2]).
$$
In a similar way, there is $\boldsymbol{z}_2\in\bX_{\rm fin}$, with $|\boldsymbol{z}_2|\le\tau$, and
$$
\QQ_{R_0}([\bx_1\boldsymbol{z}_1\bx_2\boldsymbol{z}_2\bx_3])
\ge C\QQ_{R_0}([\bx_1\boldsymbol{z}_1\bx_2])\QQ_{R_+}([\bx_3])
\ge C^2\QQ_{R_0}([\bx_1])\QQ_{R_+}([\bx_2])\QQ_{R_+}([\bx_3]).
$$
Repeating this scheme yields $\boldsymbol{z}_1,\ldots\boldsymbol{z}_{N-1}\in\bX_{\rm fin}$, $|\boldsymbol{z}_i|\le\tau$, such that
$$
\QQ_{R_0}([\bx_1\boldsymbol{z}_1\bx_2\boldsymbol{z}_2\cdots\boldsymbol{z}_{N-1}\bx_N])
\ge C^{N-1}\QQ_{R_0}([\bx_1])\QQ_{R_+}([\bx_2])\cdots\QQ_{R_+}([\bx_N]).
$$
Finally, invoking Lemma~\ref{lem:decoupling}~(iii), we can find $\boldsymbol{z}_N\in\bX_{\rm fin}$
such that $|\boldsymbol{z}_1|+\cdots+|\boldsymbol{z}_N|=t-t'$ and
\begin{align*}
\QQ_{R_0}([\bx_1\boldsymbol{z}_1\bx_2\boldsymbol{z}_2\cdots\boldsymbol{z}_{N-1}\bx_N\boldsymbol{z}_N])
&\ge\e^{-c|\boldsymbol{z}_N|}
\QQ_{R_0}([\bx_1\boldsymbol{z}_1\bx_2\boldsymbol{z}_2\cdots\boldsymbol{z}_{N-1}\bx_N])\\
&\ge\e^{-c|\boldsymbol{z}_N|}C^{N-1}\QQ_{R_0}([\bx_1])\QQ_{R_+}([\bx_2])\cdots\QQ_{R_+}([\bx_N]).
\end{align*}
By Lemma~\ref{lem:QQorder}~(i), one easily concludes that~\cite[Proposition~3.1]{CJPS} holds with $\PP=\QQ_{R_0}$
and the map $\psi_{n,t}$ defined as
$$
\psi_{n,t}(\bx)=\bx_1\boldsymbol{z}_1\bx_2\boldsymbol{z}_2\cdots\bx_N\boldsymbol{z}_N.
$$
The remaining arguments leading to the large deviation estimates are the same as in~\cite{CJPS}.

\medskip\noindent\ref{it:Level3-iii} The upper decoupling property, Lemma~\ref{lem:decoupling}~(ii), allows us to invoke~\cite[Proposition~6.3]{CJPS} which yields the required statements.

\subsection{Proof of Theorem~\ref{thm:FT}}
\label{ssec:FTproof}

We will use the notations introduced in~\eqref{eq:Operas}, as well as the following ones
\be
\begin{array}{rclcrcl}
	(\Theta R)(\omega)&\coloneqq&\Theta R(\omega),&\quad&(\cL^{[\ba]}R)(\omega)&\coloneqq&\cL_\omega^{[\alpha_\omega]}R(\omega).
\end{array}
\label{eq:Operas2}
\ee
It follows from Definition~\eqref{eq:cLalphadef} and Relation~\eqref{eq:cLrevForm} that, for any $\alpha\in\RR$,
\begin{align*}
\cL_{\omega}^{[\alpha]}\Theta
&=\sum_{\xi\in\Sigma\times\Sigma}\e^{-\alpha \delta\xi}\cL_{\omega,\xi}\Theta
=\sum_{\xi\in\Sigma\times\Sigma}\e^{(1-\alpha)\delta\xi}\Theta\cL_{\omega,\wh\xi}^\ast\\
&=\sum_{\xi\in\Sigma\times\Sigma}\e^{-(1-\alpha)\delta\xi}
\Theta\cL_{\omega,\xi}^\ast=\Theta\cL_{\omega}^{[1-\alpha]\ast}.
\end{align*}
This implies $\cL^{[\ba]}\Theta=\Theta\cL^{[\bun-\ba]\ast}$, while, obviously,
$[\Theta,\cP]=0$. Factorizing $\LL^{[\ba]}=\cP\cL^{[\ba]}$, we derive from Definition~\eqref{eq:LLalphadef}
$$
\Theta\LL^{[\ba]}\Theta=\Theta\cP\cL^{[\ba]}\Theta=\Theta\cP\Theta\cL^{[\bun-\ba]\ast}=\cP\cL^{[\bun-\ba]\ast},
$$
for all $\ba\in\RR^\Omega$. 
Writing Assumption~(\nameref{DB}) as $\cP=\Pi\cP^\ast\Pi^{-1}$ and using the fact that $[\Pi,\cL^{[\ba]}]=0$ we further get
$$
\Theta\LL^{[\ba]}\Theta=\Pi\cP^\ast\Pi^{-1}\cL^{[\bun-\ba]\ast}=\Pi(\cL^{[\bun-\ba]}\cP)^\ast\Pi^{-1}.
$$
Since $\spec(\cL^{[\bun-\ba]}\cP)\setminus\{0\}=\spec(\cP\cL^{[\bun-\ba]})\setminus\{0\}$, we conclude that $\LL^{[\ba]}$ and $\LL^{[\bun-\ba]}$ have identical spectral radii, {\sl i.e.}, 
the symmetry~\eqref{eq:cumulsym} holds. Theorem~\ref{thm:FT} now follows from Theorem~\ref{thm:Limits}~(iv)-(v), in particular~\eqref{eq:I-lim} yields
$$
I(-\bs)=\sup_{\ba\in\RR^\Omega}\left(-\bs\cdot\ba-e(-\ba)\right),
$$
and the symmetry~\eqref{eq:cumulsym} gives, 
$$
I(-\bs)
=\sup_{\ba\in\RR^\Omega}\left(-\bs\cdot\ba-e(\bun+\ba)\right)
=\sup_{\ba'\in\RR^\Omega}\left(\bs\cdot(\bun+\ba')-e(-\ba')\right)
=I(\bs)+\bs\cdot\bun,
$$
with $\bun+\ba=-\ba'$.

\subsection{Proof of Theorem~\ref{Thm:Level3FT}}

(i) For all $\omega\in\Omega$, Assumption~(\nameref{TRI}) implies $[\theta_\omega, S_{\cE_\omega}]=0$, and hence
$[\theta_\omega,\Pi_s(\omega)]=0$ for any $s\in\Sigma$. By definition~\eqref{eq:twotimeentropy}, for $\rho\in\cA_\ast$, $X\in\cA$, and $\xi=(s,s')\in\Xi_\omega$, 
\begin{align*}
	\bra\cL_{\omega,\xi}\Theta\rho,X\ket
	&=\e^{-s}\tr\left((X\otimes\Pi_{s'}(\omega))U_\omega(\theta\otimes\theta_\omega)(\rho^\ast\otimes\Pi_s(\omega))(\theta\otimes\theta_\omega)U_\omega^\ast\right)\\
	&=\e^{-s}\tr\left((X\otimes\Pi_{s'}(\omega))(\theta\otimes\theta_\omega)U_\omega^\ast(\rho^\ast\otimes\Pi_s(\omega))U_\omega(\theta\otimes\theta_\omega)\right)\\
	&=\e^{-s}\overline{\tr\left((\rho^\ast\otimes\Pi_s(\omega))U_\omega(\theta\otimes\theta_\omega)(X\otimes\Pi_{s'}(\omega))(\theta\otimes\theta_\omega)U_\omega^\ast\right)}\\
	&=\e^{\delta\xi}\overline{\bra\rho,\cL_{\omega,\hat\xi}\theta X\theta\ket}
	=\e^{\delta\xi}\bra\cL_{\omega,\hat\xi}^\ast\rho,\Theta X\ket
	=\bra\e^{\delta\xi}\Theta\cL_{\omega,\hat\xi}^\ast\rho,X\ket,
\end{align*}
so that 
\be
\cL_{\omega,\xi}\Theta=\e^{\delta\xi}\Theta\cL^\ast_{\omega,\hat\xi}.
\label{eq:cLrevForm}
\ee
Assumption~(\nameref{DB}) and Relation~\eqref{eq:cLrevForm} further yield, for any $\bx\in\bX_{\rm fin}$,  with $|\bx|=n$,
\begin{align*}
	\wh\QQ_{R_+}([\bx])&=P_{\omega_n\omega_{n-1}}\cdots P_{\omega_2\omega_1}
	\langle\un,\cL_{\omega_1,\hat\xi_1}\cdots\cL_{\omega_n,\hat\xi_n}R_+(\omega_n)\rangle\\
	&=\frac{\pi_{+\omega_{n-1}}}{\pi_{+\omega_n}}P_{\omega_{n-1}\omega_n}\cdots
	\frac{\pi_{+\omega_1}}{\pi_{+\omega_2}}P_{\omega_1\omega_2}
	\langle\cL^\ast_{\omega_n,\hat\xi_n}\cdots\cL^\ast_{\omega_1,\hat\xi_1}\un,R_+(\omega_n)\rangle\\
	&=\frac{\pi_{+\omega_{1}}}{\pi_{+\omega_n}}P_{\omega_1\omega_2}\cdots P_{\omega_{n-1}\omega_n}
	\langle\e^{-\delta\xi_n}\Theta\cL_{\omega_n,\xi_n}\Theta\cdots
	\e^{-\delta\xi_1}\Theta\cL_{\omega_1,\xi_1}\Theta\un,R_+(\omega_n)\rangle\\
	&=\e^{-\sum_{k=1}^n\delta\xi_k}\frac{\pi_{+\omega_{1}}}{\pi_{+\omega_n}}
	P_{\omega_1\omega_2}\cdots P_{\omega_{n-1}\omega_n}
	\langle\Theta R_+(\omega_n),\cL_{\omega_n,\xi_n}\cdots\cL_{\omega_1,\xi_1}\un\rangle.
\end{align*}
Setting
$$
c=\log\left(\frac{\max_k\pi_{+k}}{\min_k\pi_{+k}}\frac{\max\spec R_+}{\min\spec R_+}\right),
$$
we thus get
$$
\e^{-c+\sum_{k=1}^{n}\delta\xi_k}
\le\frac{\QQ_{R_+}([\bx])}{\wh\QQ_{R_+}([\bx])}\le\e^{c+\sum_{k=1}^{n}\delta\xi_k},
$$
and since $\sigma_n=S_n\boldsymbol{1}\cdot\boldsymbol{\fJ}=\sum_{k=1}^n\delta\xi_k$, taking the logarithm yields the claims.

\medskip\noindent(ii) The relative Rényi $\alpha$-entropy of $\wh\QQ_{R_+}|_{\cX_n}$ w.r.t.\;$\QQ_{R_+}|_{\cX_n}$ is given by
$$
\Ent_\alpha[\wh\QQ_{R_+|\cX_n}|\QQ_{R_+|\cX_n}]
=\log\EE_{R_+}\left[\left(\frac{\d\wh\QQ_{R_+|\cX_n}}{\d\QQ_{R_+|\cX_n}}
\right)^\alpha\right]=\log\EE_{R_+}\left[\e^{-\alpha\tilde\sigma_n}\right],
$$
and the claim follows directly from Theorem~\ref{thm:Limits}~(iii) and the estimate~\eqref{eq:physequ}.

\medskip\noindent(iii) By~\eqref{eq:epirred} and Theorem~\ref{thm:Limits}~(i), one has, taking~\eqref{eq:physequ} into account,
$$
\overline{\ep}=\lim_{n\to\infty}\frac1n\EE_{R_+}[\sigma_n]
=\lim_{n\to\infty}\frac1n\EE_{R_+}[\tilde\sigma_n].
$$
Since
\be
\Ent(\QQ_{R_+|\cX_n}|\wh\QQ_{R_+|\cX_n})=\EE_{R_+}[\tilde\sigma_n],
\label{eq:Entform}
\ee
it immediately follows from $\wh\QQ_{R_+}=\QQ_{R_+}$ that $\overline{\ep}=0$.

Assuming now that $\overline{\ep}=0$, we observe that
$$
\frac1n\EE_{R_+}[\sigma_n]=\EE_{R_+}[\sigma_1]\to0
$$
implies $\EE_{R_+}[\sigma_n]=0$ for all $n$ and hence by~\eqref{eq:physequ} and~\eqref{eq:Entform},
$$
\Ent(\wh\QQ_{R_+|\cX_n}\times\QQ_{R_+}|\QQ_{R_+|\cX_n}\times\QQ_{R_+})
=\Ent(\wh\QQ_{R_+|\cX_n}|\QQ_{R_+|\cX_n})
=\Ent(\QQ_{R_+|\cX_n}|\wh\QQ_{R_+|\cX_n})\le c.
$$
The lower semicontinuity of relative entropy thus yields
$$
\Ent(\wh\QQ_{R_+}|\QQ_{R_+})
\le\liminf_{n\to\infty}\,\Ent(\wh\QQ_{R_+|\cX_n}\times\QQ_{R_+}|\QQ_{R_+|\cX_n}\times\QQ_{R_+})\le c
$$
which implies that $\wh\QQ_{R_+}\ll\QQ_{R_+}$, and since $\QQ_{R_+}$ is ergodic and $\wh\QQ_{R_+}\in\cP_\phi(\boldsymbol{\fX})$, we can conclude that $\wh\QQ_{R_+}=\QQ_{R_+}$.

\medskip\noindent(iv) Denote by $C_\mathrm{fin}(\boldsymbol{\fX})$ the subspace of $C(\boldsymbol{\fX})$ consisting of the functions that are $\cX_m$-measurable for some $m\in\NN$.
If $f$ is $\cX_m$-measurable, then $\bra f,\mu_n\ket=S_nf$ is $\cX_{m+n}$-measurable, and using again~\eqref{eq:physequ},
\begin{align*}
	Q(f)&=\lim_{n\to\infty}\frac1n\log\bra\e^{n\bra f,\mu_n\ket},\QQ_{R_+}\ket\\
	&=\lim_{n\to\infty}\frac1n\log\bra\e^{\tilde\sigma_{n+m}+n\bra f,\mu_n\ket},\wh\QQ_{R_+}\ket\\
	&=\lim_{n\to\infty}\frac1n\log\bra\e^{\sigma_{n+m}+n\bra f,\mu_n\ket},\wh\QQ_{R_+}\ket\\
	&=\lim_{n\to\infty}\frac1n\log\bra\e^{\sigma_{m}\circ\phi^n+n\bra f+\sigma_1,\mu_n\ket},\wh\QQ_{R_+}\ket\\
	&=\lim_{n\to\infty}\frac1n\log\bra\e^{n\bra f+\sigma_1,\mu_n\ket},\wh\QQ_{R_+}\ket=Q(\hat f-\sigma_1)
\end{align*}
where we have set $\hat f(\bx)=f(\hat\bx)$ for $\bx\in\fX^m$,  and used the facts that
$$
(S_n f)(\hat\bx)=(S_n\hat f)(\bx)
$$
for $\bx\in\fX^{m+n}$ and $\hat\sigma_1=-\sigma_1$. From the density of $C_\mathrm{fin}(\boldsymbol{\fX})$ in $C(\boldsymbol{\fX})$ and~\eqref{eq:II-lim} we deduce
\begin{align*}
	\II(\QQ)&=\sup_{f\in C_\mathrm{fin}(\bX)}\left(\langle f,\QQ\rangle-Q(f)\right)
	=\sup_{f\in C_\mathrm{fin}(\bX)}\left(\langle f,\QQ\rangle-Q(\hat f-\sigma_1)\right)\\
	&=\sup_{f\in C_\mathrm{fin}(\bX)}\left(\langle\hat f-\sigma_1,\QQ\rangle-Q(f)\right)
	=\sup_{f\in C_\mathrm{fin}(\bX)}\left(\langle f,\wh\QQ\rangle-Q(f)\right)-\langle\sigma_1,\QQ\rangle=\II(\wh\QQ)-\langle\sigma_1,\QQ\rangle.
\end{align*}

\subsection{Proof of Theorem~\ref{thm:thermo}}

\noindent\ref{it:thermo-i} One has
$$
\LL_\z R=\sum_{\omega,\nu\in\Omega\atop s,s'\in\Sigma_\nu}\mathbb{V}_{\omega,s,s',\nu,\z}R\,\mathbb{V}_{\omega,s,s',\nu,\z}^\ast
$$
where, in terms of the Kraus family~\eqref{eq:LeRegiment},
$$
\mathbb{V}_{\omega,s,s',\nu,\z}
=\e^{\zeta_\nu\varsigma_{\nu,s}/2\bar\beta-(\bar\beta-\zeta_\nu)(F_{\nu,\bzero}-F_{\nu,\z})}
\mathbb{V}_{\omega,s,s',\nu}.
$$
Thus, arguing as in the proof of Theorem~\ref{thm:Limits}, we conclude that $\LL_\z$ is irreducible/primitive
iff $\LL$ is.

\medskip\noindent\ref{it:thermo-ii} Besides~\eqref{eq:Operas} and~\eqref{eq:Operas2}, for $\gamma\in\RR$ define
$$
(\cR^\gamma R)(\omega)\coloneqq\rho_{+\omega}^{\gamma/\bar\beta} R(\omega)
$$
where $\rho_+=(\rho_{+\omega})_{\omega\in\Omega}$ is the family of states associated to the equilibrium ($\z=\bzero$) ESS $R_+$ via~\eqref{eq:R+magic}. Setting $\KK^{[\ba]}_\z\coloneqq\cL^{[\ba]}_\z\cP$, we compute
\begin{align*}
(\cR^\gamma\KK_\z^{[\ba]}\cR^{-\gamma}R)(\omega)
&=\sum_{\nu\in\Omega}\rho_{+\omega}^{\gamma/\bar\beta}\cL_{\omega\z}^{[\alpha_\omega]}P_{\nu\omega}\rho_{+\nu}^{-\gamma/\bar\beta}R(\nu)\\
&=\sum_{\nu\in\Omega}P_{\nu\omega}\tr_{\cH_{\cE_\omega}}\left(
(\rho_{+\omega}^{\gamma/\bar\beta}\otimes\rho_{\cE_\omega\z}^{\alpha_\omega})U_\omega(\rho_{+\nu}^{-\gamma/\bar\beta}R(\nu)\otimes\rho_{\cE_\omega\z}^{1-\alpha_\omega})U_\omega^\ast\right).
\end{align*}
Invoking Proposition~\ref{prop:ch3ne}, we get
\begin{align*}
U_\omega(\rho_{+\nu}^{-\gamma/\bar\beta}R(\nu)\otimes\rho_{\cE_\omega\z}^{1-\alpha_\omega})U_\omega^\ast
&=U_\omega(\rho_{+\nu}\otimes\rho_{\cE_\omega\bzero})^{-\gamma/\bar\beta}(R(\nu)\otimes\rho_{\cE_\omega\z}^{1-\alpha_\omega+\gamma/\beta_\omega})U_\omega^\ast\\
&=(\rho_{+\omega}\otimes\rho_{\cE_\omega\bzero})^{-\gamma/\bar\beta}U_\omega(R(\nu)\otimes\rho_{\cE_\omega\z}^{1-\alpha_\omega+\gamma/\beta_\omega})U_\omega^\ast,
\end{align*}
so that, with $\bb^{-1}=((\bar\beta-\zeta_\omega)^{-1})_{\omega\in\Omega}$, we derive
\begin{align*}
(\cR^\gamma\KK_\z^{[\ba]}\cR^{-\gamma}R)(\omega)
&=\sum_{\nu\in\Omega}P_{\nu\omega}\tr_{\cH_{\cE_\omega}}\left(
(\un\otimes\rho_{\cE_\omega\z}^{\alpha_\omega-\gamma/\beta_\omega})U_\omega(R(\nu)\otimes\rho_{\cE_\omega\z}^{1-\alpha_\omega+\gamma/\beta_\omega})U_\omega^\ast\right)\\
&=(\KK_\z^{[\ba-\gamma\bb^{-1}]}R)(\omega).
\end{align*}
Since $\spec(\LL^{[\ba]}_\z)\setminus\{0\}=\spec(\KK^{[\ba]}_\z)\setminus\{0\}$, we conclude that the spectral radius
of $\LL^{[\ba]}_\z$ satisfies
\be
\ell_\z(\ba-\gamma\bb^{-1})=\ell_\z(\ba),
\label{eq:zigzag}
\ee
for all $\ba\in\RR^\Omega$, $\gamma\in\RR$ and $\z\in\RR^\Omega$.

As a consequence of the translation symmetry~\eqref{eq:zigzag}, we get
$$
\bb^{-1}\cdot(\nabla\ell_\z)(0)=0.
$$
By first order perturbation theory and Relations~\eqref{eq:lamer}-\eqref{eq:pelleas}, one has
\be
(\partial_{\alpha_\omega}\ell_\z)(0)
=\partial_{\alpha_\omega}\langle\LL_\z^{[\ba]}R_{+\z},\un\rangle|_{\ba=\bzero}
=(\bar\beta-\zeta_\omega)\bra R_{+\z},J_{\omega\z}\ket,
\label{eq:zagzig}
\ee
which allows us to conclude
$$
\sum_{\nu\in\Omega}\bar J_{\nu\z}(\bomega)=\sum_{\nu\in\Omega}\bra R_{+\z},J_{\nu\z}\ket
=\bb^{-1}\cdot(\nabla\ell_\z)(0)=0.
$$

\medskip\noindent\ref{it:thermo-iii} Follows immediately from~(ii) and Theorem~\ref{thm:Limits}~(ii).

\medskip\noindent\ref{it:thermo-iv} It follows from~\eqref{eq:zigzag} that the covariance $C_\z$ of the limiting Gaussian measure, as given by Theorem~\ref{thm:Limits}~\ref{it:Limits-iii}, satisfies $\bb^{-1}\in\Ker C_\z$, so that $\Ran C_\z\subset\mathfrak{Z}_\z$.

\medskip\noindent\ref{it:thermo-v} The claim follows from Theorem~\ref{thm:Limits}~\ref{it:Limits-iii} and~\eqref{eq:zigzag}.

\medskip\noindent\ref{it:thermo-vi} By  Theorem~\ref{thm:Limits}~\ref{it:Limits-v} and the symmetry~\eqref{eq:zigzag} one has
\begin{align*}
I_\z(\bs)&=\sup_{\ba\in\RR^\Omega}\left(\ba\cdot\bs-e_\z(-\ba)\right)\\
&=\sup_{\ba\in\RR^\Omega}\left(\ba\cdot\bs-e_\z(-\ba+\gamma\bb^{-1})\right)\\
&=\sup_{\ba\in\RR^\Omega}\left((\ba+\gamma\bb^{-1})\cdot\bs-e_\z(-\ba)\right)=I_\z(\vs)+\gamma\bb^{-1}\cdot\bs
\end{align*}
for any $\gamma\in\RR$ and hence $\bb^{-1}\cdot\bs$ has to vanish whenever $I_\z(\bs)$ is finite.

\medskip\noindent\ref{it:thermo-vii} By~\eqref{eq:zagzig}, one has
$$
\bar J_{\omega\z}=(\bar\beta-\zeta_\omega)^{-1}(\partial_{\alpha_\omega}\ell_\z)(\bzero),
$$
and hence
\begin{align*}
L_{\omega\nu}&={\bar\beta}^{-1}
\partial_{\zeta_\nu}\partial_{\alpha_\omega}\ell_\z(\ba)|_{\ba=\z=\bzero}
+\delta_{\omega\nu}{\bar\beta}^{-2}\partial_{\alpha_\omega}\ell_\z(\ba)|_{\ba=\z=\bzero}\\
&={\bar\beta}^{-1}
\partial_{\zeta_\nu}\partial_{\alpha_\omega}\ell_\z(\ba)|_{\ba=\z=\bzero}
+\delta_{\omega\nu}{\bar\beta}^{-1}\bar J_{\omega\bzero}\\
&={\bar\beta}^{-1}
\partial_{\zeta_\nu}\partial_{\alpha_\omega}\ell_\z(\ba)|_{\ba=\z=\bzero},
\end{align*}
where we have invoked~\eqref{eq:stillraining}.
Since $\ell_\z(\ba)=\e^{e_\z(\ba)}$, the two symmetries~\eqref{eq:cumulsym} and~\eqref{eq:ch3tsymmetry} gives
$$
(\partial_{\alpha_\omega}\ell_\z)(\ba)
=-(\partial_{\alpha_\omega}\ell_\z)(\bun-\ba+\gamma\bb^{-1}),
$$
and hence
$$
(\partial_{\zeta_\nu}\partial_{\alpha_\omega}\ell_\z)(\ba)
=-(\partial_{\zeta_\nu}\partial_{\alpha_\omega}\ell_\z)(\bun-\ba+\gamma\bb^{-1})
-(\partial_{\alpha_\nu}\partial_{\alpha_\omega}\ell_\z)(\bun-\ba+\gamma\bb^{-1})\frac\gamma{(\bar\beta-\zeta_\nu)^2}.
$$
Setting $\ba=\bzero$, $\z=\bzero$ and $\gamma=-\bar\beta$ yields
$$
\partial_{\zeta_\nu}\partial_{\alpha_\omega}\ell_\z(\ba)|_{\ba=\z=\bzero}
=\frac1{2\bar\beta}(\partial_{\alpha_\nu}\partial_{\alpha_\omega}\ell_\bzero)(\bzero)
$$ 
from which we conclude
\be
L_{\omega\nu}
=\frac1{2\bar\beta^2}(\partial_{\alpha_\nu}\partial_{\alpha_\omega}\ell_\bzero)(\bzero).
\label{eq:Gallavotti}
\ee
Since, by Proposition~\ref{prop:ch3ne}, $(\partial_{\alpha_\nu}\ell_\bzero)(\bzero)=\bar J_{\nu\bzero}=0$, the last indentity is equivalent to~\eqref{eq:GCLR}. 

\medskip\noindent\ref{it:thermo-viii} 
The Onsager reciprocity relations are a consequence of Relation~\eqref{eq:Gallavotti} and Clairaut's theorem on equality of mixed partial derivatives.
In view of~\eqref{eq:stillraining}, the Fluctuation--Dissipation Relations follow from combining the formulas after~\eqref{eq:CentralCova} with~\eqref{eq:zagzig} and~\eqref{eq:Gallavotti}.

The remaining part of the proof will only involve equilibrium ($\z=\bzero$) quantities, and to simplify notations we will omit the subscript $\bzero$.

\medskip\noindent\ref{it:thermo-ix} By the proof of Theorem~\ref{thm:Limits}~(iv), and more precisely the
estimate~\eqref{eq:complexbound}, there is a complex neighborhood $\cU\ni\bzero$ such that the analytic
function $\cU\ni\ba\mapsto e_N(\ba)=N^{-1}\log\EE_{R_+}[\e^{-\ba\cdot S_N\boldsymbol{\fJ}}]$ satisfies
$$
\sup_{\ba\in\cU, N\in\NN^\ast}|e_N(\ba)|<\infty
$$
and $\lim_{N\to\infty}e_N(\ba)=e(\ba)$ for $\ba\in\cU\cap\RR^\Omega$. It follows from Vitali's convergence theorem (see, e.g., \cite[Theorem~B.1]{JOPP}), that
$$
(\partial_{\alpha_\nu}\partial_{\alpha_\omega}e)(\bzero)
=\lim_{N\to\infty}(\partial_{\alpha_\nu}\partial_{\alpha_\omega}e_N)(\bzero).
$$
An elementary calculation, taking into account that $\bra\fJ_\omega,\QQ_{R_+}\ket=0$, gives
\begin{align*}
(\partial_{\alpha_\nu}\partial_{\alpha_\omega}e_N)(\bzero)
&=\frac1N\EE_{R_+}[(S_N\fJ_\omega-\EE_{R_+}[S_N\fJ_\omega])(S_N\fJ_\nu-\EE_{R_+}[S_N\fJ_\nu])\\
&=\frac1N\sum_{n,m=0}^{N-1}\overline{\EE}_{R_+}[\fJ_\omega\circ\phi^n
\fJ_\nu\circ\phi^m]=\frac1N\sum_{n,m=0}^{N-1}\overline{\EE}_{R_+}[\fJ_\omega\circ\phi^{n-m}\fJ_\nu]\\
&=\frac1N\sum_{j=0}^{2(N-1)}\sum_{k=-j}^{j}\overline{\EE}_{R_+}[\fJ_\omega\circ\phi^k\fJ_\nu],
\end{align*}
and the claim follows from Frobenius summation formula~\cite[Item 87]{PSI}.

\bibliographystyle{capalpha}
\bibliography{biblio}
\end{document}